\DeclareMathAlphabet{\mathpzc}{OT1}{pzc}{m}{it}													% Uses the \mathpzc
\newtheorem{teo}{Theorem}[section]
\newtheorem{lem}{Lemma}[section]
\newtheorem{pro}{Proposition}[section]
\newtheorem{asum}{Assumption}
\theoremstyle{definition}
\newtheorem{defi}{Definition}[section]
\newtheorem{remark}[teo]{Remark} 
\numberwithin{equation}{section}
\newcommand{\eps}{\varepsilon}
\renewcommand{\Im}{\mathrm{Im}}
\newcommand{\pgpt}[1]{{\varphi^{\mathrm{GP}}_{#1}}}
\newcommand{\pht}[1]{{\varphi^{\mathrm{H}}_{#1}}}
\DeclareMathOperator{\tr}{tr}
\begin{document}

\title{On Bose-Einstein condensates in the Thomas-Fermi regime}

\author[1]{Daniele Dimonte}
\affil[1]{Universit\"at Basel, Spiegelgasse 1, CH-4051 Basel, Switzerland.}
%\email{\href{mailto:daniele@dimonte.it}{daniele@dimonte.it}}

\author[2]{Emanuela L. Giacomelli}
\affil[2]{LMU M\"unich, Department of Mathematics, Theresienstr. 39, 80333 M\"unchen, Germany}
%\email{\href{mailto:emanuela.giacomelli@math.lmu.de}{emanuela.giacomelli@math.lmu.de}}

\maketitle 

\begin{abstract}

	We study a system of $ N $ trapped bosons in the Thomas-Fermi regime with an interacting pair potential of the form $ g_N N^{3\beta-1} V(N^\beta x) $, for some $ \beta\in(0,1/3) $ and $ g_N $ diverging as $ N \to \infty $. We prove that there is complete Bose-Einstein condensation at the level of the ground state and, furthermore, that, if $ \beta\in (0,1/6) $, condensation is preserved by the time evolution.

\end{abstract}

\tableofcontents

\section{Introduction and main results}\label{sec:intro}

In recent years, the analysis of quantum systems of interacting particles, in particular bosons, in suitable scaling regimes (mean-field, Gross-Pitaevskii, thermodynamic, etc.) has represented a very flourishing line of research in mathematical physics. Starting from the pioneering works \cite{GV, H, Sp}, several results have been obtained along this research line about the ground state behavior of interacting bosons and Bose-Einstein condensation (BEC) \cite{LS1, LSSY, LSY, LY} in both the mean-field (see \cite{GS, LNR1, LNR2, LNSS, Pz2, S} and references therein), in the Gross-Pitaevskii (GP) regimes (see \cite{BBCS2, BBCS3, Ha, NNRT, NRS, NT} and references therein) and in the thermodynamic limit (see \cite{BaCS,FS,YY} and references therein), as well as about the many-body dynamics in the mean-field (see \cite{ESY071, KP, P1} and references therein) and in the GP (see \cite{BCS1, BS19,  ESY10, P2} and references therein) settings. As we are going to see in more details later, in both these regimes, the effective behavior of the many-body system is shown to be suitably approximated studying an effective one-particle nonlinear problem.

In the typical setting, a large number $ N $ of bosonic particles is confined in a box or, more generally, by a trapping potential, and the bottom of the spectrum of the corresponding Hamiltonian, or its unitary evolution, is investigated as $ N \to \infty $. The many-body Schr\"{o}dinger operator has the form
\begin{equation}\label{eq: H_N}
	H_N^{\mathrm{trap}} := \sum_{j=1}^N \left(-\Delta_j +V_{\mathrm{ext}}(x_j)\right) + g_N N^{3\beta-1}\sum_{1\leq j < k \leq N} v(N^\beta (x_j-x_k)),
\end{equation}
and it is supposed to act on $L^2_{\mathrm{s}}(\mathbb{R}^{3N})$, which is the subspace of $L^2(\mathbb{R}^{3N})$ consisting of all functions which are symmetric with respect to permutations of the N particles in 3 dimensions. Here, $V_{\mathrm{ext}}$ is the trapping potential, which for simplicity is assumed to be homogeneous of order\footnote{The condition $ s \geq 2 $ is actually assumed only to simplify the presentation of the result, which holds true with different (more involved) estimates of the error terms also for $ s \geq  0 $.} $ s  $ (see also \cref{rem: asympt homo}), i.e.,
\begin{equation}
	\label{eq: ext potential}
	V_{\mathrm{ext}}(x) = \lambda |x|^s,		\qquad	\mbox{with } s \geq 2 \mbox{ and } \lambda > 0,
\end{equation}
while $N^{3\beta} v(N^\beta x)$ (with $\beta >0$) is the pair interaction potential, which is suitably rescaled by the prefactor $ g_N/N $. In the following we use the notation $v_N(x):=  N^{3\beta} v(N^\beta x)$.
The parameter $ \beta > 0 $ varies in $ [0,1]$ and, with $ g_N = g = \mathrm{const.} $, identifies different scaling regimes: mean-field for $ \beta = 0 $, mean-field and zero-range for $ \beta \in (0,1) $ and GP for $ \beta = 1 $. In fact, $ \beta $ is also a measure of the {\it diluteness} of the system: indeed, denoting by $ \bar{\rho} $ the average density, we have $ \bar{\rho} \simeq N \left\| \rho^{\mathrm{GP}}_N \right\|_{\infty} $ (see \cite{BCPY}), where $  \rho^{\mathrm{GP}}_N $ is the density associated to the condensate wave function.

In this paper we will investigate the Thomas-Fermi (TF) limit, where $ g_N\to\infty $. In the TF regime, one thus has (see \cref{rem: GP minimizer}) 
\[
	\bar{\rho} R_{N}^3 \propto \frac{N^{1-3\beta}}{g_N^{\frac{3}{s+3}}},
\]
where $ R_N \sim N^{-\beta} $ is the effective range of the interaction potential and $ g_N^{\frac{1}{s+3}} $ is the size of the TF radius, i.e., the characteristic length scale of the condensate (see Remark \ref{rem: GP minimizer}). Hence, since $ g_N \to +\infty $ as $ N \to  \infty $, the above quantity certainly tends to $ 0 $ for $ \beta > 1/3 $, so that the gas is dilute and only binary scattering events are relevant; on the opposite, for $ \beta < 1/3 $, the quantity may diverge, depending on the how fast $ g_N $ tends to $ + \infty $, so that any particle in the gas can interact with all the others, as in the typical mean-field picture. 

The final outcome in terms of effective description is physically different but it is remarkable that, for any $ \beta \in (0,1] $, a one-particle Schr\"{o}dinger operator with cubic nonlinearity  is recovered as $ N \to \infty $, giving rise to the GP functional
\begin{equation}
	\label{eq: GP functional}
	\mathcal{E}^{\mathrm{trap}}_{\mathrm{GP}}(u)  := \langle u, h u\rangle +4\pi g \int_{\mathbb{R}^3}dx\,|u(x)|^4
\end{equation}
in the stationary setting, where $h$ is the one-body operator 
\begin{equation}
	h := -\Delta + V_{\mathrm{ext}},
\end{equation}
and to the GP time-dependent equation
\begin{equation}
	i \partial_t \psi_t = (- \Delta  + 8 \pi g |\psi_t|^2) \psi_t,
\end{equation}
in the dynamical one: note the absence of the trapping potential in the above evolution equation, which is usually switched-off after the occurrence of BEC to observe the gas dynamics. Only the coefficient $ g $ of such a nonlinear term changes for different values of $ \beta $: for $ \beta \in (0,1) $ the coefficient is proportional to the integral of the interaction potential (which has then to be assumed finite), while, for $ \beta = 1 $, its scattering length appears. In both cases the coefficient $g$ is assumed to be constant as $N\rightarrow \infty$.

The physical motivation behind the choice of letting $ g_N $ grow with $ N $ is that in physical heuristics, the effective coefficient $ g $ is very often assumed to be large and the asymptotics $ g \to + \infty $ considered, which goes under the name of {\it Thomas-Fermi regime}, because of a similarity at the level of the one-particle density with the Thomas-Fermi theory of electrons. The main reason behind such a choice is that in real experiments $ g \propto a N $, $ a $ being the measured scattering length of the interaction potential, and therefore, even in presence of few thousands of atoms, the quantity might be quite large compared to the other characteristic lengths of the system. 

Another important motivation to let $g = g_N$ diverge is to take into account other settings and, in particular, the thermodynamic limit, i.e., the regime in which $ N, L \to  \infty $, but $ \rho : = N/L^3 $ is kept fixed: after rescaling the lengths this limit can be recast in the form above and recovered for $\beta = 1/3$, $g_N = N^{2/3}$. Finally, it is remarkable that a quite consistent literature is available about the ground state behavior of the GP functional in the asymptotic regime $ g \to + \infty $, in particular in presence of rotation (see \cite{CPRY2,CRY} and references therein), when vortices might appear as a manifestation of the superfluidity of the system. 

In this very same setting, the effective dynamics of vortices can also be studied \cite{JS} and shown to be driven by the trapping potential: if at initial time a trapped bosonic system is set in a condensate state with finitely many vortices (e.g., by putting it under sufficiently fast rotation) and then it is let to evolve (switching off the rotation), it is possible to follow the trajectories of vortex points and prove that they are determined by an ODE depending only on the external trap. We are going to comment further on this result below but we point out here that two key ingredients are needed in its proof: the system is assumed to be in the TF regime and the effective evolution equation for the condensate is given by the time-dependent GP equation. Here, we are precisely concerned with proving that such assumptions make sense. More precisely, we show that 
\begin{enumerate}
	\item in the TF regime (for $\beta < 1/3$, $g_N \ll N$) Bose-Einstein condensation takes place at the level of the ground state of a system of interacting bosons;
	\item in the same TF limit, the effective time-evolution of a state with sufficiently low energy is governed by the time-dependent GP equation.
\end{enumerate}  

A similar regime has indeed been studied so far at the many-body level only in \cite{ABS, BCS,BCPY}. However, in \cite{BCPY} the focus is different: the object of the study is the interplay between the TF regime and the presence of a rapid rotation and its effect to the ground state energy is investigated to leading order. We comment later on the similarities between \cite{ABS, BCS} and our work.

Let us now provide some more details about the setting we are going to consider. The main object under investigation is the many-body Hamiltonian \eqref{eq: H_N} and its time evolution, acting on bosonic (i.e., symmetric) wave functions in $L^2_{\mathrm{s}}(\mathbb R^{3N})$. We restrict our analysis to interaction potentials satisfying the assumptions below.

\begin{asum}\label{asump: 1}
 The potential $v\neq 0$ is of positive type, i.e., $\widehat{v}\geq 0$ and spherically symmetric. Furthermore, we assume that $|x|v\in L^1(\mathbb{R}^3)$, $v\in L^2(\mathbb{R}^2)$. 
\end{asum}

Let us comment further about the characteristic length scales of the system. The most important one is indeed the range of the interaction potential $ \propto N^{-\beta} $ and, if $ g_N $ is not too large, we claim that it is the only relevant one to be compared with the mean free path of the particles $ \propto N^{-1/3} $. This yields the threshold at $ \beta = 1/3 $ mentioned above. Another important length scale is provided by the {\it scattering length} associated to $ v_N $: for a potential $v$ the scattering length is defined through the solution of the zero energy scattering equation, i.e., 
\begin{equation}
	\left(-\Delta + \tfrac{1}{2}v\right) f = 0,
\end{equation}
with the boundary condition $f(x)\rightarrow 1$ as $|x|\rightarrow \infty$. The scattering length $a$ is then given by 
\begin{equation}
	8\pi a = \int_{\mathbb{R}^3} dx\, v(x)f(x).
\end{equation}
If we denote by $a_N$ the scattering length of the potential $g_N N^{3\beta -1} v(N^\beta x) = g_N v_N(x) N^{-1}$ then one can go through the Born approximation of $ a_N $ to write
\begin{equation}
	a_N = \frac{g_N }{ 8\pi N}\int_{\mathbb{R}^3}\, dx\, v_N(x) f_N(x) \sim  \frac{g_N }{ 8\pi N} \int_{\mathbb{R}^3}\, dx\, v(x),
\end{equation}
where $f_N$ is the solution of the related scattering equation, i.e., $-\Delta f_N +(g_N/2)v_N f_N = 0$. In order for the expansion to make sense, one has to require that $ v $ is integrable and
\begin{equation}
	g_N \ll N^{1-\beta},
\end{equation}
which we always do, but at the same this implies that 
\begin{equation}
	a_N \ll R_{N} \sim N^{-\beta},
\end{equation}
for any $ \beta \in [0,1) $. In other words, $ a_N $ is always much smaller than the range of the potential in the regime we are going to consider and plays no role in the picture. A transition is expected to occur when $ g_N \sim N^{1-\beta} $, for $ \beta \geq 1/3 $, which is precisely the case considered\footnote{In the notation of \cite{BCS}, $ \beta = 1 - \kappa $ and $ g_N = N^{\kappa} $, so that $ g_N = N^{1 - \beta} $.} in \cite{BCS}. The last characteristic length of the problem is the {\it healing length} $ \xi_N $, which is given by (see \cite[Sect. 4.1]{FF})
\begin{equation}	
	\xi_N = \frac{1}{\sqrt{N \left\| \rho^{\mathrm{GP}}_N \right\|_{\infty}}} \sim N^{-\frac{1}{2}} g_N^{-\frac{3}{2(s+3)}}.
\end{equation} 
Hence, one can readily check that under the conditions of next \cref{thm: E0}
\begin{equation}	
	a_N \ll \xi_N \ll R_{N}.
\end{equation}

We denote by $E_0(N)$ the ground state energy of the system, i.e.,
\begin{equation}
	E_0(N) := \inf \sigma(H_N^{\mathrm{trap}}) \equiv \inf \{ \langle \psi, H_N^{\mathrm{trap}} \psi \rangle \, \vert\, \psi \in L^2_s(\mathbb{R}^{3N}), \, \|\psi\|_{L^2_s(\mathbb{R}^{3N})} = 1 \}.
\end{equation}
Moreover, $\Psi_0\in L^2_s(\mathbb{R}^{3N})$ is the unique (up to an overall phase) ground state of $H_N$ and we denote by $\gamma_{\Psi_0}$ the associated reduced density matrix, i.e.,
\begin{equation}\label{eq: red dens}
	\gamma_{\Psi_0} := \mathrm{Tr}_{2,\cdots,N}|\psi_0\rangle \langle \psi_0|,
\end{equation}
where with $ \mathrm{Tr}_{2,\cdots,N} $ we denoted the partial trace on all particles but one.

It is well known that in the mean-field regime ($ \beta = 0 $, $ g_N = g $ constant), the ground state energy per particle is well approximated, in the limit $N\rightarrow\infty$, by the infimum of the Hartree functional, which approximately coincides with the restriction of the quadratic form associated to $H_N$ to functions of the form $\psi= u^{\otimes N}$, with $u\in L^2(\mathbb{R}^3)$, $\|u\|_{L^2(\mathbb{R}^3)} = 1$, i.e.,
\begin{equation}
	\mathcal{E}^{\mathrm{trap}}_{\mathrm{H}}(u) := \langle u, h u\rangle + \tfrac{1 }{ 2}\langle u\otimes u, v u\otimes u\rangle.
\end{equation}	
More precisely, in \cite{LNR1}, it is shown that
\begin{equation}
	\lim_{N\rightarrow \infty}\frac{E_0(N) }{ N}
	=\inf\left\{\mathcal{E}^{\mathrm{trap}}_{\mathrm{H}}(u)|\ \|u\|_{L^2\left(\mathbb R^3\right)} = 1\right\}	=: E^{\mathrm{H}}.
\end{equation}	
For trapped bosons in the mean field regime, it has been proven that the system exhibits BEC and its excitation spectrum can be approximated via the well-known Bogoliubov approximation \cite{GS}, with corrections that can be explicitly identified \cite{BSS3,BPPS}. Analogous results are proven for a system of bosons in a box in the same regime \cite{DN, LNSS, S}.

Similarly, it is also known that in the GP regime ($ \beta  = 1 $, $ g_N = g $ constant), the ground state energy per particle can be approximated, in the limit $N\rightarrow \infty$, by the infimum of the GP effective energy \eqref{eq: GP functional}.
Indeed, in \cite{NRS} it is proven that, in the limit $N\rightarrow \infty$, it holds (recall \eqref{eq: GP functional})
\begin{equation}
	\lim_{N\rightarrow \infty}\frac{E_0(N) }{ N}
	=\inf\left\{\mathcal{E}^{\mathrm{trap}}_{\mathrm{GP}}(u)|\ \|u\|_{L^2\left(\mathbb R^3\right)} = 1\right\}
	=: E^{\mathrm{trap}}_{\mathrm{GP}}.
\end{equation}
The occurrence of BEC in this case has been proven \cite{BSS, NNRT} as well as the validity of the Bogoliubov approximation \cite{BSS1, NT}. Similar results both on BEC and the excitation spectrum are available in (see \cite{ BBCS3, Ha, H} and references therein) for $N$ interacting bosons in a box.

Concerning the intermediate regimes, we have to distinguish between two different cases. The first interesting regime is studied in \cite{BBCS1} and it is characterized by taking $ \beta \in (0,1) $ and $ g_N = g >0$ small enough. In \cite{BBCS1} the validity of the Bogoliubov theory is proven in the aforementioned regime, studying both the ground state energy and the excitation spectrum of the system. More recently another regime has been investigated, which can be thought as an interpolation between the GP regime and the thermodynamic limit and it is described by $\beta\in (0,1)$ small enough and $g_N = N^{1-\beta}$. In \cite{ABS,BCS} it is proven that, in such regime, the low-energy states exhibit BEC and the low-energy excitation spectrum is studied. Note that, according to the discussion above on the characteristic length scales of the problem, this choice corresponds to a scattering length of the interaction of the same order of its effective range.

In the present work, on the other hand, we are mostly interested in the time-evolution of condensates in the TF regime ($ g_N \gg 1 $). For the sake of completeness, we also show that BEC occurs in the same setting, at least at the level of the ground state or of any approximate ground state, as it can be easily obtained by a suitable adaptation of the arguments in \cite{GS,S} and with a much simpler proof  (thanks to stronger assumption on $ \beta $ and $g_N$) than the ones in \cite{ABS, BCS}. More precisely, we prove that the ground state energy of \eqref{eq: H_N}, as long as $\beta < 1/3$, exhibits BEC in the (unique) minimizer of the GP functional, which in our setting reads as
\begin{equation}\label{eq: EGP gN}
	\mathcal{E}^{\mathrm{trap}}_{\mathrm{GP},N}(\psi) := \langle \psi, h\psi \rangle + \frac{g_N }{ 2}(\smallint v)\int_{\mathbb{R}^3}dx\,|\psi(x)|^4.
\end{equation}

\begin{teo}[BEC in the ground state]\label{thm: E0}\mbox{}\\
	Let $H_N$ be the Hamiltonian defined in \eqref{eq: H_N} and let $v_N$ be an interaction potential such that $ v $ satisfies Assumption \ref{asump: 1}. Assume that
	\begin{equation}
		\label{eq: gn conditions}
		\beta < \frac{1 }{ 3},	\qquad		\mbox{and}	\qquad		g_N \ll \min \left\{ N^{\frac{(1 - 3 \beta)(s+3)}{s+5}},  N^\frac{(s+3)\beta}{2(s+1)} \right\}.
	\end{equation}
	Let $\Psi $ be any normalized many-body state such that $ \langle \Psi, H_N \Psi \rangle \leq E_N + o(N) $  and let $\gamma_{\psi}$ be the associated reduced density matrix defined as in \eqref{eq: red dens}. Then,
	\begin{equation}\label{eq: BEC}
		1-\langle \varphi^{\mathrm{GP}}_N, \gamma_\Psi \varphi^{\mathrm{GP}}_N\rangle \leq Cg_N^{\frac{s+5}{s+3}}N^{3\beta-1} + Cg_N^{\frac{2(s+1)}{s+3}}N^{-\beta},
	\end{equation}
	where $\varphi^{\mathrm{GP}}_N\in L^2(\mathbb{R}^3)$ is the unique (up to a phase) normalized minimizer of \eqref{eq: EGP gN}.
\end{teo}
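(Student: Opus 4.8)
\medskip
\noindent\textbf{Strategy.} Write $P:=|\varphi^{\mathrm{GP}}_N\rangle\langle\varphi^{\mathrm{GP}}_N|$, let $Q:=\mathrm{Id}-P$ be the orthogonal projection onto $\{\varphi^{\mathrm{GP}}_N\}^\perp$, and set $\mathcal{N}_+:=\sum_{j=1}^N Q_j$. By permutation symmetry of $\Psi$ one has $1-\langle\varphi^{\mathrm{GP}}_N,\gamma_\Psi\varphi^{\mathrm{GP}}_N\rangle=N^{-1}\langle\Psi,\mathcal{N}_+\Psi\rangle$, so it suffices to bound $\langle\Psi,\mathcal{N}_+\Psi\rangle$ from above. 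The plan is to sandwich the many-body energy: an upper bound $E_N\le N\,\mathcal{E}^{\mathrm{trap}}_{\mathrm{GP},N}(\varphi^{\mathrm{GP}}_N)+o(N)$ from a factorized trial state, and a lower bound for $\langle\Psi,H_N\Psi\rangle$ that is affine in $\mathcal{N}_+$ with positive slope $\Delta_N$ equal to the spectral gap of the one-body GP operator. Subtracting the two, using the hypothesis $\langle\Psi,H_N\Psi\rangle\le E_N+o(N)$ together with the upper bound, then controls $\Delta_N\langle\Psi,\mathcal{N}_+\Psi\rangle$; dividing by $\Delta_N$ and inserting quantitative estimates for $\Delta_N$ and for the error terms produces \eqref{eq: BEC}.

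\medskip
\noindent\textbf{Step 1: energy upper bound.} I would test $H_N$ on $(\varphi^{\mathrm{GP}}_N)^{\otimes N}$. The interaction then contributes $\tfrac12 g_N(1-N^{-1})N\,\langle\varphi^{\mathrm{GP}}_N\otimes\varphi^{\mathrm{GP}}_N,\,v_N\,\varphi^{\mathrm{GP}}_N\otimes\varphi^{\mathrm{GP}}_N\rangle$; replacing $v_N$ by $(\smallint v)\delta$ costs, by \cref{asump: 1} (which makes $\widehat v$ Lipschitz, since $|x|v\in L^1$), at most $C g_N N^{1-\beta}\,\| |\varphi^{\mathrm{GP}}_N|^2 \|_{\dot H^{1/2}}^2$, and together with the $O(g_N\|\varphi^{\mathrm{GP}}_N\|_4^4)$ error from $\binom{N}{2}$ versus $N^2/2$ and the a priori bounds on $\varphi^{\mathrm{GP}}_N$ recalled in \cref{rem: GP minimizer}, this yields $E_N\le N\,\mathcal{E}^{\mathrm{trap}}_{\mathrm{GP},N}(\varphi^{\mathrm{GP}}_N)+o(N)$ under \eqref{eq: gn conditions}. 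Crucially, since $\beta<1/3$ and $g_N$ lies below the thresholds of \eqref{eq: gn conditions}, no short-scale Jastrow correction is needed in the trial state — this is the simplification with respect to \cite{ABS,BCS}.

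\medskip
\noindent\textbf{Step 2: energy lower bound with a gap, and the gap estimate.} Following \cite{GS,S}, I would expand $H_N$ around $\varphi^{\mathrm{GP}}_N$ via the excitation ($c$-number substitution) map on the Fock space over $QL^2(\mathbb{R}^3)$, writing $\langle\Psi,H_N\Psi\rangle-N\,\mathcal{E}^{\mathrm{trap}}_{\mathrm{GP},N}(\varphi^{\mathrm{GP}}_N)=\langle\Psi,\mathrm{d}\Gamma(Q\,h^{\mathrm{GP}}_N\,Q)\Psi\rangle+\mathbb{Q}_{\mathrm{pair}}+\mathbb{Q}_3+\mathbb{Q}_4+\mathcal{R}$, with $h^{\mathrm{GP}}_N:=-\Delta+V_{\mathrm{ext}}+g_N(\smallint v)|\varphi^{\mathrm{GP}}_N|^2-\mu^{\mathrm{GP}}_N\ge0$; the linear term vanishes because $\varphi^{\mathrm{GP}}_N$ solves the GP equation, up to a smearing remainder placed into $\mathcal{R}$. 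Since $\varphi^{\mathrm{GP}}_N>0$ is the ground state of $-\Delta+V_{\mathrm{ext}}+g_N(\smallint v)|\varphi^{\mathrm{GP}}_N|^2$ with eigenvalue $\mu^{\mathrm{GP}}_N$, the operator $h^{\mathrm{GP}}_N$ has kernel spanned by $\varphi^{\mathrm{GP}}_N$ and a gap $\Delta_N>0$ above it, whence $Q\,h^{\mathrm{GP}}_N\,Q\ge\Delta_N Q$ and $\mathrm{d}\Gamma(Q\,h^{\mathrm{GP}}_N\,Q)\ge\Delta_N\mathcal{N}_+$. Using $\widehat v\ge0$ one shows $\mathbb{Q}_4\ge-C g_N N^{3\beta-1}\mathcal{N}_+$ (the factor $g_N/N$ of the interaction and the value $\lesssim N^{3\beta}$ at the origin of a regularization of $v_N$ at the interaction scale are what matter here), reabsorbs $\mathbb{Q}_{\mathrm{pair}}$ into the diagonal up to a Bogoliubov ground-state shift, and controls $\mathbb{Q}_3$ and $\mathcal{R}$ (the cubic term, the $a_0^\ast a_0\ne N$ correction, exchange terms, and the smearing $v_N\ne(\smallint v)\delta$, for which e.g.\ $\| v_N*|\varphi^{\mathrm{GP}}_N|^2-(\smallint v)|\varphi^{\mathrm{GP}}_N|^2 \|_\infty\lesssim N^{-\beta}\,\| |x|v \|_{1}\,\| \nabla|\varphi^{\mathrm{GP}}_N|^2 \|_\infty$) by $\mathcal{N}_+\le N$, Cauchy–Schwarz, $\|v_N\|_1=\|v\|_1$ and the a priori bounds on $\varphi^{\mathrm{GP}}_N$. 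Altogether this gives $\langle\Psi,H_N\Psi\rangle\ge N\,\mathcal{E}^{\mathrm{trap}}_{\mathrm{GP},N}(\varphi^{\mathrm{GP}}_N)+\tfrac12\Delta_N\langle\Psi,\mathcal{N}_+\Psi\rangle-C N\Delta_N(g_N^{(s+5)/(s+3)}N^{3\beta-1}+g_N^{2(s+1)/(s+3)}N^{-\beta})$. Finally $\Delta_N$ is estimated by comparing $V_{\mathrm{ext}}+g_N(\smallint v)|\varphi^{\mathrm{GP}}_N|^2$ with its Thomas–Fermi profile, which is essentially constant on the TF ball of radius $\sim g_N^{1/(s+3)}$ and grows like $V_{\mathrm{ext}}$ outside, so that the gap is comparable to that of $-\Delta$ in such a ball: $\Delta_N\ge c\,g_N^{-2/(s+3)}$, once the replacement of $|\varphi^{\mathrm{GP}}_N|^2$ by the TF profile — including its boundary layer of width the healing length — is controlled.

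\medskip
\noindent\textbf{Conclusion and main obstacle.} Combining Steps 1 and 2 with $\langle\Psi,H_N\Psi\rangle\le E_N+o(N)$ gives $\tfrac12\Delta_N\langle\Psi,\mathcal{N}_+\Psi\rangle\le o(N)+C N\Delta_N(g_N^{(s+5)/(s+3)}N^{3\beta-1}+g_N^{2(s+1)/(s+3)}N^{-\beta})$; dividing by $\tfrac12 N\Delta_N$ and recalling $N^{-1}\langle\Psi,\mathcal{N}_+\Psi\rangle=1-\langle\varphi^{\mathrm{GP}}_N,\gamma_\Psi\varphi^{\mathrm{GP}}_N\rangle$ yields \eqref{eq: BEC}. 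The hard part will be the interplay, in Step 2, between the \emph{small} gap $\Delta_N\sim g_N^{-2/(s+3)}$ — whose quantitative lower bound itself requires sharp control of $\varphi^{\mathrm{GP}}_N$ against the compactly supported, singular TF profile and of its boundary layer — and the Bogoliubov-type terms $\mathbb{Q}_{\mathrm{pair}},\mathbb{Q}_3$ together with the non-positive parts of $\mathbb{Q}_4,\mathcal{R}$: these must be shown to be strictly subleading with respect to $\Delta_N\mathcal{N}_+$, a tight requirement precisely because $\Delta_N$ is small in the TF regime, and it is exactly this competition that fixes the two conditions in \eqref{eq: gn conditions} and the shape of the two error terms in \eqref{eq: BEC}. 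By contrast, the upper bound (Step 1) and the algebraic expansion around the condensate are routine adaptations of \cite{GS,S}.
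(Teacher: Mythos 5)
Your global skeleton --- counting operator $\mathcal{N}_+$, upper bound via the product trial state $(\varphi^{\mathrm{GP}}_N)^{\otimes N}$, a lower bound that is affine in $\mathcal{N}_+$ with slope the one-body spectral gap, and division by the gap $\Delta_N\gtrsim g_N^{-2/(s+3)}$ --- is exactly the paper's (Proposition \ref{pro: N+} combined with Propositions \ref{pro: en gap} and \ref{pro: L infty norms}), and your Step 1 and the final bookkeeping of the two error terms match. Where you genuinely diverge is the mechanism of the lower bound. The paper does \emph{not} perform the excitation-map/Bogoliubov expansion with $\mathbb{Q}_{\mathrm{pair}},\mathbb{Q}_3,\mathbb{Q}_4$; it uses the elementary positivity device of \cite[Lemma 1]{GS}: since $\widehat v\ge 0$, testing the quadratic form of $v_N$ on (a mollification of) the signed measure $|\varphi^{\mathrm{GP}}|^2-\tfrac1N\sum_j\delta_{x_j}$ gives directly
\begin{equation*}
	\frac{g_N}{N}\sum_{j<k}v_N(x_j-x_k)\;\ge\;g_N\sum_{j=1}^N\bigl(v_N*|\varphi^{\mathrm{GP}}|^2\bigr)(x_j)-\frac{g_NN}{2}\int \bigl(v_N*|\varphi^{\mathrm{GP}}|^2\bigr)|\varphi^{\mathrm{GP}}|^2-g_NN^{3\beta}v(0),
\end{equation*}
so that $H_N^{\mathrm{trap}}-NE^{\mathrm{trap}}_{\mathrm{GP},N}\ge\sum_j(h_j^{\mathrm{GP}}-\mu_{\mathrm{GP}})-g_NN^{3\beta}v(0)-Cg_NN^{1-\beta}\|\varphi^{\mathrm{GP}}\|_\infty\|\nabla\varphi^{\mathrm{GP}}\|_\infty$, with only a \emph{constant} error (the diagonal term $g_NN^{3\beta}v(0)$ plus the smearing error of Lemma \ref{lem: H vs GP}). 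There are no pairing or cubic terms to control at all; this is precisely the ``much simpler proof'' the introduction advertises relative to \cite{ABS,BCS}.

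This matters because the step you yourself flag as ``the hard part'' --- reabsorbing $\mathbb{Q}_{\mathrm{pair}}$ into the diagonal and showing $\mathbb{Q}_3$ and the Bogoliubov ground-state shift are subleading with respect to $\Delta_N\mathcal{N}_+$ with $\Delta_N\sim g_N^{-2/(s+3)}$ small --- is asserted, not proved, and it is not at all routine: in the TF regime the pairing term is of order $g_N\||\varphi^{\mathrm{GP}}|^2\|_\infty$ per excitation, which is \emph{comparable} to the potential well $g_N(\smallint v)|\varphi^{\mathrm{GP}}|^2$ itself and hence vastly larger than the gap, so a naive Cauchy--Schwarz against $\Delta_N\mathcal{N}_+$ fails and one would need a genuine Bogoliubov diagonalization with quantitative control of the ground-state shift. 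As written, your Step 2 therefore has a real gap, and it is a gap the paper's route never opens. Your gap estimate $\Delta_N\ge cg_N^{-2/(s+3)}$ is the right statement with the right heuristic, but note the paper proves it not by comparison with the TF profile and its boundary layer, but via the rescaled semiclassical operator $h_\eps=-\eps^2\Delta+V_{\mathrm{ext}}+|\psi_0^\eps|^2$, the ground-state representation $u=\psi_1^\eps/\psi_0^\eps$, a weighted Poincar\'e inequality on a fixed ball, and Agmon-type decay of $\psi_1^\eps$ (Propositions \ref{pro: int decay}--\ref{pro: bound mu}); that part too needs to be supplied rather than invoked.
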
 

\begin{remark}[Homogeneous trapping potential]
	\label{rem: asympt homo}
	For the sake of simplicity, we assumed that the trapping potential $ V_{\mathrm{ext}} $ in \eqref{eq: ext potential} is homogeneous of order $ s > 0$. As it is customary in the TF regime, such an assumption can be relaxed and replaced with the weaker condition that $ V_{\mathrm{ext}} $ is asymptotically homogeneous, i.e., $ V_{\mathrm{ext}}(x) \sim |x|^s $, as $ |x| \to + \infty $, because only the behavior at large $ |x| $ matters to leading order. This however makes the estimates of the error terms as well as the conditions on the parameters more implicit.
\end{remark}

\begin{remark} [GP minimizer]
\label{rem: GP minimizer}
We point out that, since the GP functional in \eqref{eq: EGP gN} depends on $ N $, the same holds for its minimizer $\varphi^{\mathrm{GP}}_N$. Indeed, it is very well known that the density $|\varphi^{\mathrm{GP}}_N|^2$ is pointwise close as $ N \to \infty $ to the minimizer of the TF functional
\begin{equation}
		\mathcal{E}^{\mathrm{trap}}_{\mathrm{TF},g_N}(\rho) = \int_{\mathbb{R}^3} d x \: \left\{ V_{\mathrm{ext}}(x) \rho + \frac{g_N }{ 2}(\smallint v) \rho^2 \right\}
\end{equation}
among positive densities $ \rho $ normalized in $ L^1 $. More precisely, one has (see, e.g., \cite[Thm. 2.2]{LSY})
\begin{equation}
	\label{eq: GP energy}
	\inf_{\left\| \psi \right\|_2 = 1} \mathcal{E}^{\mathrm{trap}}_{\mathrm{GP},N}(\psi) = \inf_{\rho \geq 0, \left\| \rho \right\|_1 = 1} \mathcal{E}^{\mathrm{trap}}_{\mathrm{TF},g_N}(\rho) \left(1 + o(1) \right) = g_N^{\frac{s}{s+3}} \inf_{\rho \geq 0, \left\| \rho \right\|_1 = 1} \mathcal{E}^{\mathrm{trap}}_{\mathrm{TF},1}(\rho) + O\left(g_N^{- \frac{2}{s+3}} \log g_N\right),
\end{equation}
as $ g_N \to + \infty $, and, denoting by $ \rho^{\mathrm{TF}}_{g_N} $ the unique TF minimizer, 
\begin{equation}
	g_N^{\frac{3}{s+3}} \rho^{\mathrm{TF}}_{g_N} \left( g_N^{\frac{1}{s+3}} x \right) = \rho^{\mathrm{TF}}_{1}(x),		\qquad		\left\| g_N^{\frac{3}{s+3}}\left|\varphi^{\mathrm{GP}}_N \left(g_N^{\frac{1}{s+3}} \cdot\right)\right|^2 - \rho^{\mathrm{TF}}_{1} \right\|_{\infty} = o(1),
\end{equation}
i.e., both the GP and TF minimizers spread over a length scale $ g_N^{\frac{1}{s+3}} $ (TF radius) and tend pointwise to $ 0 $ as $ N  \to + \infty $.
\end{remark}

\begin{remark} [Excitation spectrum] 
	In the present paper we do not investigate the excitation spectrum in the TF regime, but we remark that for a system of $N$ bosons confined in a box, it is possible to study it with similar techniques as in \cite{S}. The extension to the case of trapped bosons is however not completely trivial and beyond the scope of the present paper.
\end{remark}

\begin{remark}[Rotating systems] It would be interesting to generalize our result to rotating systems.
It is indeed well-know that if the rotational velocity $\Omega_{\mathrm{rot}}$ of the condensate is not too large, i.e., it is below a critical threshold $\Omega_c(N)$, the GP minimizer is unique. In this case, the proof of BEC is expected to hold with minor modifications. For faster rotation however the GP functional is known to have more than one minimizer. In that case the proof of BEC is more involved and one expects BEC in the mixed state given by convex combination of the GP minimizers (see \cite{LS,NRS}).
	
\end{remark}

As already anticipated above, after having proven that there is BEC in the ground state, a natural question that arises is whether BEC is preserved or not by time evolution. More explicitly, the question we would like to answer is whether we can find an \textit{effective dynamics} for a single particle state that well describes the evolution of the condensed state. We consider a system of $N$ trapped identical bosons, as described before, and we suppose that at the initial time the state is condensed. After a while, we remove the trap, thus the Hamiltonian of the system becomes
\[
	H_N = \sum_{j=1}^N (-\Delta_j) + g_N N^{3\beta-1}\sum_{1\leq j < k \leq N} v(N^\beta(x_j-x_k)),
\]
with domain $D(H_N):= L^2_s(\mathbb{R}^{3N})\cap H^2(\mathbb{R}^{3N})$.
The dynamics of the system is then encoded in the fact that the state at time $t$, $\psi_t \in L^2_s(\mathbb{R}^{3N})$, is a solution of the Schr\"odinger equation
\begin{equation}
	\label{eq:MB_schroedinger}
			\begin{cases}
				i\partial_t \psi_t	=H_N \psi_t,\\
				\psi_t\vert_{t=0}=\psi_0.
			\end{cases}
	\end{equation}

We prove that, up to some conditions on $\beta$ and $g_N$, we can find an  effective equation, i.e., the time-dependent GP equation  in the TF limit, which reads 
\begin{equation}
	i\partial_t \varphi_t =\big( -\Delta  + g_N\left|\varphi_t\right|^2\big)\varphi_t 
\end{equation}	
such that $\psi_t$ converges to the solution of the effective problem. Although we need some technical assumptions on the solution $\psi_t$, we are able to provide an explicit rate of convergence.

In the case of dilute limits, this question already has some answers. For example, it is well known \cite{ES07,ESY071,FK09,P1,RS09} that in the mean-field limit, if the initial state $\psi_0$ exhibits BEC on $\varphi_0$, then $\psi_t$  exhibits BEC on a family of single particle states $ \left\{\varphi_t\right\}_{t\in\mathbb R} $ which solve the {\itshape nonlinear Hartree equation}:
\begin{equation}\label{eq:classical_Hartree}
		i\partial_t\varphi_t = \left( -\Delta +v *	\left|\varphi_t \right|^2	\right)\varphi_t.
\end{equation}

A similar result for the GP regime is well know (see \cite{BOS,BS19,ESY071,ESY10,P2} and references therein). Meaning that if the initial datum exhibits BEC, then BEC is preserved also at later times on a family of single particle states $ \left\{\varphi_t\right\}_{t\in\mathbb R} $ which solve the {\itshape nonlinear GP equation}:
\begin{equation}\label{eq:classical_GP}
		i\partial_t\varphi_t = \big( -\Delta + 8\pi g\left|	\varphi_t \right|^2
\big)\varphi_t,
\end{equation}
where $g$ is a constant which is defined, in the limit as $N\rightarrow \infty$, as $N$ times the scattering length of the system, i.e.,  $Na\rightarrow g$ as $N\rightarrow \infty$. 

As we already discussed for the ground state energy, also for the dynamical framework there are different intermediate regimes corresponding to $ \beta\in\left(0,1\right) $. Also in this case \cite{P2} if there is BEC at the initial state, there is still BEC in the many-body state of the system at later times on a family of single particle states $ \left\{\varphi_t\right\}_{t\in\mathbb R} $ which solve the following nonlinear equation:
\begin{equation}\label{eq:intermediate_regime}
		i\partial_t\varphi_t = \big( -\Delta + (\smallint v) \left| \varphi_t \right|^2 \big)\varphi_t.
\end{equation}
Moreover, for the same regime, a norm approximation for the evolution of an initial state exhibiting BEC is also available (see \cite{BNNS}).
This last nonlinear equation is rather similar to the GP equation, with the major difference being that instead of having the constant $g$ here we find $\int v$. This is what one should expect since in this regime, $N$ times the  scattering length of the pair potential  goes to $\int v $ in the limit $N\rightarrow \infty$.

Thus, if we now look at our setting in which $Na = \mathcal{O}(g_N)$, we can expect that the effective dynamics is a cubic nonlinear Schr\"odinger equation. Since the effective scattering length of our potential grows as $ g_N $ as $N\rightarrow \infty$, such an equation can not be independent of $N$.

Moreover, we have that, in the sense of distributions, $ N^{3\beta} v\left(N^\beta x\right)\to \left(\int v\right)\delta_0\left(x\right) $ as $ N\to\infty $. Using this, one could also see equation \eqref{eq:intermediate_regime} as a limit for $ N\to\infty $ of an Hartree equation as in \eqref{eq:classical_Hartree} with the substitution $ v\to N^{3\beta} v\left(N^\beta\cdot\right) $. Notice also that this sort-of intermediate equation is still $ N $ dependent, but has now the advantage of being a good candidate for the equation governing the dynamics of the single particle state for our system.

Our discussion above led us to guess that we can expect condensation at later times only on a state which is still $ N-$dependent. For this purpose we will need to consider two different differential equations. In analogy to the GP limit, from now on we define the  GP equation as
\begin{equation}\label{eq:GP}
		i\partial_t\pgpt{t}	= \left( -\Delta + g_N\left(\smallint v\right) \left|	\pgpt{t} \right|^2\right)\pgpt{t}.
\end{equation}

In what follows we will need to work with the associated GP functional. Since to study the dynamics we switch off the trap, to distinguish this functional from the one introduced before, we define
\begin{equation}\label{eq: GP funct free}
	\mathcal{E}^{\mathrm{free}}_{\mathrm{GP}}(u) := \int_{\mathbb{R}^3}dx\,|\nabla u(x)|^2 + \frac{g_N }{ 2}(\smallint v)\int_{\mathbb{R}^3}dx\, |u(x)|^4.
	\end{equation}

\begin{teo}[Dynamics]\label{thm:GP_cond_main} Let $ \beta \in\left(0,1/6\right) $ and $ \lambda \in\left(3\beta, 1-3\beta\right) $.
	Let $ \psi_t $ be the solution of \eqref{eq:MB_schroedinger} and $ \varphi^{\mathrm{GP}}_{t} $ be the solution of \eqref{eq:GP} with initial data respectively $ \psi_0 $ and $ \varphi_0 $, both with norm equal to $ 1 $.
	Suppose that $ v $ satisfies Assumption \ref{asump: 1}. It holds true that 
	\begin{eqnarray}
		\left\|\gamma_{\psi_t}^{(1)}-|\pgpt{t}\rangle \langle \pgpt{t}|\right\|&\leq&\sqrt 2\left(
			N^\frac{1-\lambda}2
			\left\|
				\gamma_{\psi_0}^{(1)}
				-|\varphi_0\rangle\langle \varphi_0|
			\right\|^\frac12
			+N^{\frac{3\beta-\lambda}{2}}
		\right)
		e^{C_{v}C_N(\varphi_0, t)g_N\left|t\right|} \nonumber
	\\
		&&
		+C\sqrt{g_NN^{-\beta}}\left(1+ \mathcal{E}^{\mathrm{free}}_{\mathrm{GP}}(\varphi_0) + g_NN^{-\beta}\|\varphi_0\|_{L^\infty}^2\right)e^{C_v C_N(\varphi_0,t)^2g_N|t|}, \label{eq: dynamics}
	\end{eqnarray}
	where $\|\cdot\|$ denotes the operator norm and the constant $C_v$ only depends on the potential $v$ and $C_N(\varphi_0, t)$ is given by 
	\begin{equation}
	\label{eq: C varphi}
		C_N(\varphi_0, t)
		:=\left\|
			\varphi_0
		\right\|_{H^2\left(\mathbb R^3\right)}
		e^{Cg_N^2\left([\mathcal{E}^{\mathrm{free}}_{\mathrm{GP}}(\varphi_0)]^2 + g_N^2N^{-2\beta}\|\varphi_0\|_{L^\infty(\mathbb{R}^3)}^4\right)\left|t\right|}.
	\end{equation}
\end{teo}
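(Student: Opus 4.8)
The plan is to bridge the many-body flow $\psi_t$ and the effective flow $\pgpt{t}$ of \eqref{eq:GP} by an intermediate, $N$-dependent Hartree dynamics, so that the error in \eqref{eq: dynamics} splits into a ``derivation of the mean-field equation'' contribution (the first line) and a ``PDE stability'' contribution (the second line). Let $\pht{t}$ solve the mollified Hartree equation
\[
	i\partial_t\pht{t}=\bigl(-\Delta+g_N\,(v_N*|\pht{t}|^2)\bigr)\pht{t},\qquad \pht{0}=\varphi_0,
\]
i.e.\ \eqref{eq:intermediate_regime} with the coupling $g_N$ reinstated; its nonlinearity is exactly the one-body potential generated by the pair interaction $g_N N^{3\beta-1}v(N^\beta\cdot)=g_N N^{-1}v_N$. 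Since $\bigl\|\gamma_{\psi_t}^{(1)}-|\pgpt{t}\rangle\langle\pgpt{t}|\bigr\|\le\bigl\|\gamma_{\psi_t}^{(1)}-|\pht{t}\rangle\langle\pht{t}|\bigr\|+C\|\pht{t}-\pgpt{t}\|_{L^2}$ and, for a unit vector $u$, $\|\gamma_{\psi_t}^{(1)}-|u\rangle\langle u|\|\le\|\gamma_{\psi_t}^{(1)}-|u\rangle\langle u|\|_{\mathrm{HS}}\le\sqrt{2\,(1-\langle u,\gamma_{\psi_t}^{(1)}u\rangle)}$, it suffices to control (i) the normalised number of excitations of $\psi_t$ out of $\pht{t}$, and (ii) the $L^2$-distance $\|\pht{t}-\pgpt{t}\|$. (Global well-posedness of \eqref{eq:MB_schroedinger} is not an issue, as $H_N$ is self-adjoint on $H^2(\mathbb R^{3N})$ since $v_N\in L^2(\mathbb R^3)$ is infinitesimally $(-\Delta)$-form bounded.)

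\textbf{Step 1 (a priori bounds on the effective flows) and Step 2 (Hartree $\to$ GP).} First I establish global well-posedness in $H^2(\mathbb R^3)$ of \eqref{eq:GP} and of the Hartree equation above by a standard fixed-point argument, and record conservation of mass ($\|\pgpt{t}\|_2=\|\pht{t}\|_2=1$) and of the energies, so that $\|\nabla\pgpt{t}\|_2^2\le\mathcal E^{\mathrm{free}}_{\mathrm{GP}}(\varphi_0)$ (using $\smallint v=\widehat v(0)\ge0$) and similarly for $\pht{t}$; bootstrapping, a Grönwall estimate for $\tfrac{d}{dt}\|\pgpt{t}\|_{H^2}$, $\tfrac{d}{dt}\|\pht{t}\|_{H^2}$ — via $\|u\|_{L^\infty}\lesssim\|u\|_{H^2}^{3/4}\|u\|_{L^2}^{1/4}$ and $\|v_N*|u|^2\|_{L^\infty}\le\|v\|_{L^1}\|u\|_{L^\infty}^2$ — gives $\max\{\|\pht{t}\|_{H^2},\|\pgpt{t}\|_{H^2}\}\le C_N(\varphi_0,t)$ as in \eqref{eq: C varphi}; the divergence of $g_N$ together with the failure of $H^1\hookrightarrow L^\infty$ in three dimensions is precisely what forces the doubly-exponential growth, and a companion propagation bound for $\|\pht{t}\|_{L^\infty}$, of order $\mathcal E^{\mathrm{free}}_{\mathrm{GP}}(\varphi_0)+g_N N^{-\beta}\|\varphi_0\|_{L^\infty}^2$, produces the prefactor in the second line of \eqref{eq: dynamics}. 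For Step 2, set $w_t:=\pht{t}-\pgpt{t}$; then $i\partial_t w_t=-\Delta w_t+g_N\bigl[(v_N*|\pht{t}|^2)\pht{t}-(\smallint v)|\pgpt{t}|^2\pgpt{t}\bigr]$, and I split the bracket into a term Lipschitz in $w_t$ with constant $\lesssim g_N(\|\pht{t}\|_{L^\infty}^2+\|\pgpt{t}\|_{L^\infty}^2)$ and the consistency term $g_N[(v_N*|\pgpt{t}|^2)-(\smallint v)|\pgpt{t}|^2]\pgpt{t}$, which I estimate by the first-order mollifier inequality $\|v_N*f-(\smallint v)f\|_{L^\infty}\le\|\nabla f\|_{L^\infty}\int|y|\,|v_N(y)|\,dy=N^{-\beta}\|\nabla f\|_{L^\infty}\int|z|\,|v(z)|\,dz$ with $f=|\pgpt{t}|^2$ — here the hypothesis $|x|v\in L^1$ of Assumption \ref{asump: 1} is essential. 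Grönwall, fed by Step 1, then yields the second line of \eqref{eq: dynamics}, the powers of $g_N N^{-\beta}$ and $C_N(\varphi_0,t)$ being the result of balancing the mollifier gain against the coupling and the $H^2$-growth of the effective flow.

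\textbf{Step 3 (many-body $\to$ Hartree).} Let $\alpha_N(t):=1-\langle\pht{t},\gamma_{\psi_t}^{(1)}\pht{t}\rangle=N^{-1}\langle\psi_t,\mathcal N_+^{\pht{t}}\psi_t\rangle$, where $\mathcal N_+^{\pht{t}}$ counts the particles not in the state $\pht{t}$; note $\alpha_N(0)\le\|\gamma_{\psi_0}^{(1)}-|\varphi_0\rangle\langle\varphi_0|\|$. Differentiating along the two Schrödinger equations, $\dot\alpha_N(t)$ is governed by commutators of the interaction with the one-body projection onto $\pht{t}$ — the usual counting/coherent-state derivation of the mean-field equation (à la the dynamical references cited in the introduction). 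The non-routine point is that the coupling $g_N N^{-1}v_N$ is large: one writes $v_N=v_N\mathbf{1}_{\{|x|\le N^{-\lambda/3}\}}+v_N\mathbf{1}_{\{|x|>N^{-\lambda/3}\}}$, handles the far part using the mean-field smallness $N^{-1}$ against its (now bounded) size, and treats the near part crudely; requiring the near-part remainder to beat a power of $N$ forces $\lambda>3\beta$, while requiring the far part and the accumulated $1/N$-corrections to stay $o(1)$ forces $\lambda<1-3\beta$, which is simultaneously possible precisely when $\beta<1/6$ — exactly the hypothesis of the theorem. The resulting differential inequality together with $\alpha_N(0)\le\|\gamma_{\psi_0}^{(1)}-|\varphi_0\rangle\langle\varphi_0|\|$ gives $\alpha_N(t)\le C\bigl(N^{1-\lambda}\|\gamma_{\psi_0}^{(1)}-|\varphi_0\rangle\langle\varphi_0|\|+N^{3\beta-\lambda}\bigr)e^{C_vC_N(\varphi_0,t)g_N|t|}$, and $\|\gamma_{\psi_t}^{(1)}-|\pht{t}\rangle\langle\pht{t}|\|\le\sqrt{2\alpha_N(t)}$ yields the first line of \eqref{eq: dynamics}; combining with Step 2 via the triangle inequality of the first paragraph finishes the proof.

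\textbf{Main obstacle.} I expect Step 3 to be the crux: deriving the (mean-field/Hartree) equation with a coupling constant that diverges with $N$ and is tied to $\beta$ through the regularisation scale is genuinely more delicate than the classical $g=O(1)$ case, since every commutator estimate carries a factor $g_N$ and the $N^{-1}$ mean-field smallness must be spent carefully — via the scale $N^{-\lambda}$ — to absorb the short-range concentration of $v_N$ without destroying the gain; keeping the final bound non-trivial then constrains the joint growth of $\beta$ and $g_N$, consistently with \cref{thm: E0}. Steps 1–2 are essentially standard NLS analysis, but the $g_N$-dependence of every estimate has to be tracked explicitly to obtain the stated (doubly/triply exponential) constants.
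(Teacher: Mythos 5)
Your overall architecture coincides with the paper's: interpolate through the $N$-dependent Hartree flow, control $\bigl\|\gamma^{(1)}_{\psi_t}-|\pht{t}\rangle\langle\pht{t}|\bigr\|$ by a Pickl-type counting argument, and control $\|\pht{t}-\pgpt{t}\|_{L^2}$ by Gr\"onwall plus the first-order mollifier estimate using $|x|v\in L^1$ (your Step~2 is essentially \cref{pro: from H to GP}). However, two of your steps contain genuine gaps.

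In Step~1, the mechanism you propose for the $H^2$ bound does not close. Differentiating $\|\varphi_t\|_{H^2}^2$ along the cubic flow and estimating the nonlinearity via $\|u\|_{L^\infty}^2\lesssim\|u\|_{H^1}\|u\|_{H^2}$ (or your $\|u\|_{L^\infty}\lesssim\|u\|_{H^2}^{3/4}\|u\|_{L^2}^{1/4}$, which is worse) yields a \emph{superlinear} differential inequality of Riccati type, $\tfrac{d}{dt}X\lesssim g_N E^{1/2}X^2$ with $X=\|\varphi_t\|_{H^2}$, hence only control up to a finite blow-up time $\sim (g_N E^{1/2}\|\varphi_0\|_{H^2})^{-1}$ rather than the global bound \eqref{eq: C varphi}. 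The paper avoids this by running Duhamel through the endpoint Strichartz estimate \eqref{eq: Strichartz}, which makes the nonlinear contribution \emph{linear} in $\sup_{[t_0,t_0+T]}\|\varphi_t\|_{H^2}$ with prefactor $g_N\sqrt T\,\mathcal{E}^{\mathrm{free}}_{\mathrm{GP}}(\varphi_0)$; choosing $T\sim (g_N E)^{-2}$ and iterating doubles the norm on each window, and $2^{t/T}$ is exactly the exponential $e^{Cg_N^2E^2|t|}$ in \eqref{eq: C varphi}. Without this, $C_N(\varphi_0,t)$ — which enters every line of \eqref{eq: dynamics} — is not obtained.

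In Step~3, your argument as written would not produce the first line of \eqref{eq: dynamics}. You define $\alpha_N(t)$ as the plain excitation fraction $1-\langle\pht{t},\gamma^{(1)}_{\psi_t}\pht{t}\rangle$, yet the bound you then assert, with the factor $N^{1-\lambda}$ multiplying the initial trace-norm error, is the one satisfied by the \emph{weighted} functional $\alpha_N^\lambda(\psi,\varphi)=\langle\psi,\widehat\mu^{\lambda,\varphi}\psi\rangle$ with $\mu^\lambda(k)=\min(k/N^\lambda,1)$: the inequalities $\|\gamma^{(1)}-|\varphi\rangle\langle\varphi|\|^2\le 2\alpha_N^\lambda$ and $\alpha_N^\lambda\le N^{1-\lambda}\|\gamma^{(1)}-|\varphi\rangle\langle\varphi|\|$ are precisely where the $\sqrt2$ and $N^{(1-\lambda)/2}$ of the statement come from, and a plain Gr\"onwall starting from $\alpha_N(0)\le\|\gamma^{(1)}_{\psi_0}-|\varphi_0\rangle\langle\varphi_0|\|$ would not generate the $N^{1-\lambda}$. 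More importantly, the role of $\lambda$ is not a near/far spatial splitting of $v_N$ at scale $N^{-\lambda/3}$: for $\lambda>3\beta$ that cutoff sits \emph{inside} the support of $v_N$ (which lives at scale $N^{-\beta}$), so the ``far'' part still has sup norm of order $N^{3\beta}$ and is in no sense bounded. In the paper, $\lambda$ enters through the shifted weights $\widehat\mu-\widehat\mu_{-1}$, $\widehat\mu-\widehat\mu_{-2}$ in the commutator expansion of \cref{lem:derivative_alpha} and \cref{pro:terms_of_alpha}: the diagonal term of \eqref{eq: eq 1a}, where no $q$ can be moved onto $\psi_t$, is bounded brutally by $N^{1+3\beta-\lambda}\|v\|_{L^2}^2\|\pht{t}\|_{L^4}^4$ and produces the additive error $N^{3\beta-\lambda}$ (forcing $\lambda>3\beta$), while the three-$q$ term carries $N^{(3\beta-(1-\lambda))/2}$ (forcing $\lambda<1-3\beta$). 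You correctly identified the two constraints and that $\beta<1/6$ makes them compatible, but the device you propose to obtain them would not deliver the Gr\"onwall inequality of \cref{pro: Gronwall alpha}.
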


\begin{remark}[Time-scale of the GP approximation]\label{rem: time}
	The initial datum determines the maximal time-scale on which the GP approximation makes sense through the quantity $ C_N(\varphi_0, t) $ and in particular the exponential appearing in \eqref{eq: C varphi}. This in turn depends on the GP energy of the initial datum via \eqref{eq: C varphi}, which, according to \eqref{eq: GP energy}, can not be smaller than $ C g_N^{s/(s+3)} $. We stress that, although we switch off the trap in studying the dynamics, heuristically $\varphi_0$ is the initial state of the trapped system: this is why here we suppose that $\mathcal{E}^{\mathrm{free}}_{\mathrm{GP}}(\varphi_0) \sim C g_N^{s/(s+3)}$. Now, the exponential on the right hand side of \eqref{eq: C varphi} is bounded by a constant
	as long as
	\begin{equation}
		\label{eq: tstar}
		t < t_{\star} \sim  g_N^{-\frac{2(2s+3)}{s+3}}.
	\end{equation}
	Note indeed that, assuming the $ H^2 $ norm of $ \varphi_0 $ to be bounded uniformly, the second summand in the exponent in \eqref{eq: C varphi} is much smaller than the first, i.e.,  of order $ g_N^2 N^{-2\beta} $, which vanishes because of the factor $ N^{-2\beta} $ (see also below).
	For a time shorter than $ t_{\star} $ now, $ C_N(\varphi_0, t) $ can be bounded by a constant.
	Plugging this bound into \eqref{eq: dynamics} and using similar arguments as above, one gets
	\begin{equation}
		\left\|
			\gamma_{\psi_t}^{(1)}
			-|\pgpt{t}\rangle \langle \pgpt{t}|
		\right\|
		\le C \left(
			N^\frac{1-\lambda}2
			\left\|
				\gamma_{\psi_0}^{(1)}
				-|\varphi_0\rangle\langle \varphi_0|
			\right\|^\frac12
			+ N^{\frac{3\beta-\lambda}{2}}
			+g_N^\frac{2s+3}{2(s+3)}N^{-\frac\beta2}
		\right),
	\end{equation}
	which tends to 0 as $ N \to +\infty $ provided the initial datum is not too far from a condensate on $ \varphi_0 $ and assuming
	\begin{equation}
		g_N \ll N^{\frac{\beta(s+3)}{2s+3}}.
	\end{equation}
	Note that, due to the kinetic energy being sub-leading with respect to the potential energy in the Thomas Fermi regime, to determine the optimal time scale for the validity of the GP approximation is not an easy task. 
\end{remark}

\begin{remark}[Hartree vs GP approximation]
	The estimate \eqref{eq: dynamics} is obtained via the following inequality 
	\[
		\left\|\gamma^{(1)}_{\psi_t} - |\varphi^{\mathrm{GP}}_t\rangle \langle \varphi^{\mathrm{GP}}_t|\right\| \leq \left\|\gamma^{(1)}_{\psi_t} - |\varphi^{\mathrm{H}}_t\rangle \langle \varphi^{\mathrm{H}}_t|\right\| + 2\left\|\varphi^{\mathrm{GP}}_t - \varphi^{\mathrm{H}}_t\right\|_2,
	\]
	where we denote by $\varphi^{\mathrm{H}}_t$ the solution of the Hartree equation (see \eqref{eq:Hartree}). More precisely, the first line in the right hand side of \eqref{eq: dynamics} is the estimate of $\left\|\gamma^{(1)}_{\psi_t} - |\varphi^{\mathrm{H}}_t\rangle \langle \varphi^{\mathrm{H}}_t|\right\|$ and the second line is a bound for $2\left\|\varphi^{\mathrm{GP}}_t - \varphi^{\mathrm{H}}_t\right\|_2$.
	Note then that our result suggests that $\beta = 1/7$ is the threshold from Hartree to GP. More precisely, if we assume condensation on the initial state $\varphi_0$ and $g_N N^{-\beta}\ll 1$ (see Remark \ref{rem: time} and Remark \ref{rem: vortex}), then for $\beta < 1/7$ we see that for values of $\lambda$ arbitrarily close to $1-3\beta$, Hartree gives a better estimates. However, if $\beta >1/7$, the term $\left\|\gamma^{(1)}_{\psi_t} - |\varphi^{\mathrm{H}}_t\rangle \langle \varphi^{\mathrm{H}}_t|\right\|$ gives a worse approximation than $\left\|\varphi^{\mathrm{GP}}_t - \varphi^{\mathrm{H}}_t\right\|_2$. Note also that the interval of times we can cover in Theorem \ref{thm:GP_cond_main} is the same for both the Hartree and the GP approximations since the constant $C_N(\varphi_0, t)$ in \eqref{eq: C varphi} is appearing in both of them.
\end{remark}

\begin{remark}[Time-scale of the vortex dynamics]
	\label{rem: vortex}
	As anticipated, it would be interesting to compare our result with the findings of \cite{JS}, concerning the effective dynamics of vortices\footnote{Note that, differently to the one of our paper, the setting in \cite{JS} is two-dimensional. This is indeed a natural framework to study the motion of vortices as two-dimensional objects. However, this setting can be thought as a system with a cylindrical trap. This justifies our comparison.}.

	However, a first major difference with the setting considered there is the absence of the trapping potential, which is mostly due to a technical obstruction for the propagation in time of suitable bounds along the nonlinear dynamics. This difference makes a naive comparison (i.e., for $ V_{\mathrm{ext}} = 0 $) not so meaningful, but in absence of an external trap, $ t_{\star} $ in \eqref{eq: tstar} is always much smaller than the characteristic time-scale of the vortex dynamics in \cite{JS}, i.e., $ t_{\mathrm{vortex}} \sim  g_N^{2/(s+3)}\log g_N $.
	However, if $ g_N $ is chosen to grow very slowly with respect to $ N $, i.e. $ g_N^3 \ll \log\log N $, then the time of the vortices can be achieved. 
\end{remark}

\noindent\textbf{Outline and sketch of the proofs} The paper is organized as follows. In Section \ref{sec: E0} we prove Theorem \ref{thm: E0}, whose goal is to prove BEC of the ground state $\Psi_0$ of $H_N$ on the one-particle state $\varphi^{\mathrm{GP}}_N$. The proof's strategy is very similar to the one about the mean-field scaling studied in \cite{GS,S}. More precisely,
\begin{itemize}
\item We first approximate the GP interaction with respect to the Hartree one, using some uniform bounds on $\varphi^{\mathrm{GP}}_N$ (see Proposition \ref{pro: L infty norms}).
\item Then, we follow the ideas already developed in \cite{GS,S}  to get a bound on the expected number of particles in the ground state which are not in $\varphi^{\mathrm{GP}}_N$ (see Proposition \ref{pro: N+}).
\item Finally, we use the aforementioned  bound to prove the condensation estimate. To do that we need to study, in the limit $N\rightarrow +\infty$, the bottom of the spectrum of the effective GP Hamiltonian as well as its spectral gap, this is done in Proposition \ref{pro: en gap}.
\end{itemize}
In Section \ref{sec: dynamics} we prove Theorem \ref{thm:GP_cond_main}. The main ideas are the following:
\begin{itemize}
	\item First we approximate the many body dynamics with the Hartree evolution. In this part we follow the ideas developed in \cite{P1} using also some Sobolev bounds on the Hartree state at time $t$ which are proved in Proposition \ref{pro: H2 norm}.
	\item Then, we compare the Hartree dynamics with the GP one using Gr\"onwall's lemma (see Proposition \ref{pro: from H to GP}).
\end{itemize}
In general, we consider a scaling limit in which the coupling constant, $g_N$, is $N$-dependent. This choice requires a careful analysis to keep track of the $N$-dependence in all the bounds we need to prove both Theorem \ref{thm: E0} and Theorem \ref{thm:GP_cond_main}.\\

\noindent\textbf{Acknowledgments.} We are grateful to Michele Correggi for his feedback and several insightful discussions. We also thank Peter Pickl for useful comments and discussions. The support of the National Group of Mathematical Physics (GNFM--INdAM) through Progetto Giovani 2016 ``Superfluidity and Superconductivity'' and Progetto Giovani 2018 ``Two-dimensional Phases'' is  acknowledged.
\section{Ground state energy}\label{sec: E0}
The aim of this section is to prove \cref{thm: E0}. We start by recalling that, under our assumptions, the GP functional in \eqref{eq: EGP gN} admits a unique positive minimizer, which for short we denote in this section by $\varphi^{\mathrm{GP}}$ (in place of $\varphi^{\mathrm{GP}}_N$) and which satisfies the Euler-Lagrange equation 
\begin{equation}\label{eq: GP eq}
	(-\Delta + V_{\mathrm{ext}} )\varphi^{\mathrm{GP}} + g_N (\smallint v)|\varphi^{\mathrm{GP}}|^2\varphi^{\mathrm{GP}} = \mu_{\mathrm{GP}}\varphi^{\mathrm{GP}},
\end{equation}
where
\begin{equation} \label{eq: def mu GP}
\mu_{\mathrm{GP}}:= E^\mathrm{trap}_{\mathrm{GP},N} +\frac{g_N}2(\smallint v) \|\varphi^{\mathrm{GP}}\|_{L^4\left(\mathbb R^3\right)} ^4.
\end{equation}
Because of the uniqueness of the solution to \eqref{eq: GP eq}, $\varphi^{\mathrm{GP}}$ is also the ground state of the $\mathrm{GP}$ operator $h^{\mathrm{GP}}$, defined as
\begin{equation}\label{eq: def hGP}
	h^{\mathrm{GP}} :=  -\Delta + V_{\mathrm{ext}}  + g_N(\smallint v)|\varphi^{\mathrm{GP}}|^2.
\end{equation}	
One can find a complete set of normalized eigenfunctions $\{\varphi_n\}_{n\in \mathbb{N}}$ for the GP operator $h^{\mathrm{GP}}$, where we identify $\varphi_0 \equiv \varphi^{\mathrm{GP}}$. Given that the spectrum of $h^{\mathrm{GP}}$ is discrete, we can assume that the  corresponding eigenvalues are ordered in such a way that $ \mu_{\mathrm{GP}}\equiv\mu_0<\mu_1 $ and $ \mu_j\le\mu_{j+1} $ for any $ j\ge 1 $. In what follows it is very important that the inequality $\mu_{\mathrm{GP}} < \mu_1$ is strict, which is due to the fact that $h^{\mathrm{GP}}$ has a unique ground state (see also Proposition \ref{pro: en gap}).

To prove \cref{thm: E0} it is convenient to introduce an operator, which we call $\mathcal{N}_+$, that counts the number of particles outside $\varphi^{\mathrm{GP}}$, i.e., the number of excited particles. For short, we now set $P := |\varphi^{\mathrm{GP}}\rangle \langle \varphi^{\mathrm{GP}}|$ and $Q := 1 - P = \sum_{n\neq 0}|\varphi_n\rangle\langle\varphi_n|$. Furthermore, $P_j$ and $Q_j$ will denote copies of the operators $P$ and $Q$ acting on the $j-$th particle. The operator $\mathcal{N}_+$ can then be explicitly written as  
\begin{equation}
 \mathcal{N}_+ := \sum_{j=1}^N Q_j.
\end{equation}

We now briefly  explain the strategy of the proof of \cref{thm: E0}. In order to prove BEC of the ground state $\Psi_0$ of $H_N$ in the one-particle state $\varphi^{\mathrm{GP}}$, we prove in \cref{pro: N+} an upper bound to $\langle \Psi_0, \mathcal{N}_+ \Psi_0\rangle$. To do that, we need to prove a lower bound on the energy gap $\mu_1 - \mu_{\mathrm{GP}}$.

\subsection{Preliminary Estimates} In this section we prove a lower bound for the energy gap $\mu_{\mathrm{GP}} - \mu_1$ and we estimate  $\|\varphi^{\mathrm{GP}}\|_{L^\infty\left(\mathbb R^3\right)}$ and $\|\nabla\varphi^{\mathrm{GP}}\|_{L^\infty\left(\mathbb R^3\right)}$ in terms of $ g_N $. 
\subsubsection{Energy gap}

Acting through the unitary transformation defined by  
\begin{equation}\label{eq: resc phi}
	\varphi(x) = \eps^{\frac{3 }{ s+2}}\psi\left(\eps^{\frac{2 }{ s+2}}x\right),  \qquad \eps := (g_N(\smallint v))^{-\frac{s+2 }{ 2(s+3)}}
\end{equation}
one gets
\begin{equation}\label{eq: resc Egp}
	\langle \varphi, h^{\mathrm{GP}}\varphi\rangle = \eps^{-\frac{2s }{ s+2}}\langle \psi, (-\eps^{2}\Delta + V_{\mathrm{ext}} + |\psi^{\eps}_0|^2)\psi\rangle =: \eps^{-\frac{2s }{ s+2}}\langle \psi, h_\eps\psi\rangle,
\end{equation}
where $\psi^\eps_0$ is the ground state of $h_\eps$ with energy $ \mu^\eps_0 $, which is explicitly given by
\begin{equation}\label{eq: def unitary transf}
	\varphi^{\mathrm{GP}}(x) =: \eps^{\frac{3 }{ s+2}}\psi^\eps_0\left ( \eps^{\frac{2 }{ s+2}}x \right).
\end{equation}
Analogously, we denote by $\mu^{\eps}_1>\mu^{\eps}_0 $ the second lowest eigenvalue\footnote{Note that the strict inequality $\mu^\eps_1 >\mu^\eps_0$ follows from the uniqueness  of the ground state of $h_\eps$.} of $h_\eps$, and by $\psi_1^\eps$ the associated normalized eigenstate. It easily follows from \eqref{eq: resc Egp} that 
\begin{equation}\label{eq: resc eps}
	\mu_1 - \mu_{\mathrm{GP}}  = \eps^{-\frac{2s }{ (s+2)}}(\mu^{\eps}_1 - \mu^{\eps}_{0} ).
\end{equation}
To study the energy gap, it is then sufficient to find a bound for $\mu^{\eps}_{1} -\mu^{\eps}_0$, that is what we do in the next proposition.

\begin{pro}[Energy gap]\label{pro: en gap}
	There exists a constant $C>0$ depending only on $ v $ and $ V_{\mathrm{ext}} $ such that 
	\begin{equation}\label{eq: en gap eff prob}
	 \mu^{\varepsilon}_1 - \mu^{\varepsilon}_0 > C\eps^2.
	\end{equation}
	As a consequence, there exists a constant $C>0$ depending only on $ v $ and $ V_{\mathrm{ext}} $ such that 
	\begin{equation}\label{eq: en gap}
		 \mu_1 - \mu_{\mathrm{GP}} > C g_N^{-\frac2{s+3}}.
	\end{equation}
\end{pro}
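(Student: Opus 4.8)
The plan is to establish the $\eps$-independent spectral gap \eqref{eq: en gap eff prob} for the rescaled operator $h_\eps = -\eps^2 \Delta + V_{\mathrm{ext}} + |\psi_0^\eps|^2$, and then obtain \eqref{eq: en gap} by undoing the scaling via \eqref{eq: resc eps}, using $\eps^{-2s/(s+2)}\cdot\eps^2 = \eps^{-2s/(s+2)+2} = \eps^{2(s+3)/(s+2)\cdot(s+2)/(s+3)\cdot\ldots}$ — more precisely, by \eqref{eq: resc phi} one has $\eps^{-2s/(s+2)}\eps^2 = (g_N\smallint v)^{-\frac{s+2}{2(s+3)}\cdot(2-\frac{2s}{s+2})} = (g_N\smallint v)^{-\frac{2}{s+3}}$, which gives the stated $g_N^{-2/(s+3)}$ behaviour. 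So the whole content is the bound \eqref{eq: en gap eff prob}.

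First I would recall that, in the TF limit $\eps \to 0$, the potential $|\psi_0^\eps|^2$ converges (after the rescaling already built into \eqref{eq: resc phi}) to the TF minimizer $\rho_1^{\mathrm{TF}}$, so that $h_\eps$ is a semiclassical Schrödinger operator whose effective potential $W_\eps := V_{\mathrm{ext}} + |\psi_0^\eps|^2$ is, to leading order, the TF potential $V_{\mathrm{ext}} + \rho_1^{\mathrm{TF}}$; by the TF variational equation the latter equals the constant chemical potential $\mu_1^{\mathrm{TF}}$ on the (bounded) TF support and grows like $|x|^s$ outside it. The key qualitative fact is that this limiting effective potential is a genuine confining well: it is bounded below, tends to $+\infty$ at infinity, and — crucially — the set where it is close to its minimum is a fixed bounded region independent of $\eps$. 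Standard semiclassical/Agmon-type arguments (or, more elementarily, a direct IMS-type localization combined with a Poincaré inequality on the near-minimum region) then give that the first two eigenvalues of $-\eps^2\Delta + W_\eps$ satisfy $\mu_1^\eps - \mu_0^\eps \geq C\eps^2$. The cleanest route is: (i) an upper bound $\mu_0^\eps \leq \min W_\eps + C\eps^2$ using a trial function supported near the minimum; (ii) a matching-order lower bound on $\mu_1^\eps$ obtained by projecting out the ground state and using that any function orthogonal to $\psi_0^\eps$ must either have kinetic energy $\gtrsim \eps^2$ (if it is concentrated where $W_\eps$ is near-minimal) or pays potential energy $\gtrsim 1 \gg \eps^2$ (if it has mass where $W_\eps$ is bounded away from its minimum); the dichotomy is made quantitative by an IMS localization $1 = \chi^2 + \bar\chi^2$ with $\chi$ supported on a fixed neighborhood of the TF support.

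The main obstacle — and the step requiring care because of the $N$- (equivalently $\eps$-) dependence — is controlling the effective potential $|\psi_0^\eps|^2$ uniformly enough: one needs that $\psi_0^\eps$ is genuinely close to the (square root of the) TF profile not just in $L^2$ but in a sense strong enough to conclude that $W_\eps$ has a non-degenerate confining structure uniformly in $\eps$. This is where the $L^\infty$ and $\nabla L^\infty$ bounds on $\varphi^{\mathrm{GP}}$ announced in Proposition \ref{pro: L infty norms} (and the pointwise convergence recorded in Remark \ref{rem: GP minimizer}) enter decisively: they guarantee $\|\psi_0^\eps\|_\infty \leq C$ and hence $W_\eps \leq C$ on compacts, and they let us bound $W_\eps$ from below away from the TF support. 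A technical subtlety is that $W_\eps$ is only approximately equal to $\mu_1^{\mathrm{TF}}$ on the TF support (with $o(1)$ errors and, near the TF boundary, a layer where the GP and TF densities differ); but since we only need a lower bound of order $\eps^2$ on the gap and the errors in $W_\eps$ are $o(1)$ but not smaller, one must argue that these $o(1)$ fluctuations of the "flat" part of the well do not destroy the gap — this is handled by noting that the gap estimate is governed by the confining walls (the $|x|^s$ growth outside a fixed region), not by the precise value of $W_\eps$ inside, so a crude two-region argument suffices and no fine spectral analysis of the flat part is needed. I would also remark that the strict positivity $\mu_1^\eps > \mu_0^\eps$ itself is immediate from uniqueness and positivity of the ground state (as noted in the footnote), so the proposition only upgrades this to a quantitative rate.
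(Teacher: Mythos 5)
Your reduction to the rescaled operator and the scaling computation giving $\eps^{-2s/(s+2)}\cdot\eps^2=(g_N\smallint v)^{-2/(s+3)}$ are correct, and the overall picture (Poincar\'e on a fixed region enclosing the TF support, plus exponential smallness of the mass outside) is the right one. However, the quantitative strategy you actually articulate --- (i) $\mu_0^\eps\le \min W_\eps+C\eps^2$ by a trial state, (ii) $\mu_1^\eps\ge \min W_\eps+C'\eps^2$ by a dichotomy --- cannot close. The effective potential $W_\eps=V_{\mathrm{ext}}+|\psi_0^\eps|^2$ is only known to be flat on the TF support up to errors that are $o(1)$ but far larger than $\eps^2$; such a perturbation shifts \emph{both} $\mu_0^\eps$ and $\mu_1^\eps$ relative to $\min W_\eps$ by amounts $\gg\eps^2$, so two separate one-sided bounds with constants $C<C'$ cannot both hold, and no amount of appealing to the ``confining walls'' repairs this (the walls only fix the size of the flat region; the $\eps^2$ gap comes from the Poincar\'e constant of that region, not from the walls). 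You sense this tension yourself but do not resolve it.

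The missing device is the ground state transformation: setting $u:=\psi_1^\eps/\psi_0^\eps$ (well defined since $\psi_0^\eps>0$) and using the eigenvalue equation for $\psi_0^\eps$, one obtains the \emph{exact} identity
\begin{equation*}
\mu_1^\eps-\mu_0^\eps=\eps^2\int_{\mathbb R^3}dx\,|\nabla u(x)|^2|\psi_0^\eps(x)|^2 ,
\end{equation*}
in which the potential (and hence all its $o(1)$ uncertainties) has dropped out entirely. This is what the paper does: a weighted Poincar\'e inequality on a fixed ball $B_R$ bounds the right-hand side from below by $C_R\eps^2\int_{B_R}|u-u_R|^2|\psi_0^\eps|^2$, the orthogonality $\psi_1^\eps\perp\psi_0^\eps$ and the normalizations reduce this to $C_R\eps^2\bigl(1-3\int_{B_R^c}|\psi_1^\eps|^2\bigr)$, and the tail is killed by choosing $R$ large using the Agmon-type integral decay of $\psi_1^\eps$ (Proposition \ref{pro: int decay}) together with the uniform bound $\mu_1^\eps\le C$ (Proposition \ref{pro: bound mu}). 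Note that this route requires no convergence of $|\psi_0^\eps|^2$ to the TF profile, no IMS localization, and no $L^\infty$ or gradient bounds on $\varphi^{\mathrm{GP}}$ --- all of which your proposal leans on. If you want to salvage your sketch, the ``Poincar\'e inequality on the near-minimum region'' you mention in passing must be applied to the quotient $u$ with weight $|\psi_0^\eps|^2$, not to $\psi_1^\eps$ itself; as written, that step is not justified.
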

A key ingredient for the proof of Proposition \ref{pro: en gap} is an integral estimate for $\psi^{\eps}_1$. This is the content of the next proposition, whose proof is very similar to \cite[Proposition 3.5]{OR}.   
\begin{pro}[Integral decay estimates]\label{pro: int decay}
Let $A(x)$ be the \textit{Agmon distance} associated to $V_{\mathrm{ext}}(x) = k|x|^s$ $\mathrm{(}s\geq 2\mathrm{)} $, i.e., 
\begin{equation}\label{eq: agmon dist}
	A(x) = \sqrt{k}\frac{|x|^{1+\frac{s }{ 2}} }{ 1+s/2}.
\end{equation}
Then
	\begin{equation}\label{eq: integral decay}
			\int_{\mathbb{R}^3}\, dx\, e^{\frac{A(x) }{ \eps^2}}|\psi_1^\eps(x)|^2 \leq (1+\mu^\eps_1)e^{\frac{C }{ \eps^2}(1 + \mu_1^\eps) },
	\end{equation}
	where $C>0$ is a positive constant which depends only on the trapping potential $V_{\mathrm{ext}}$.
\end{pro}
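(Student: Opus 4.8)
The statement \eqref{eq: integral decay} is a semiclassical Agmon estimate for the eigenfunction $\psi_1^\eps$ of $h_\eps = -\eps^2\Delta + W_\eps$, where $W_\eps := V_{\mathrm{ext}} + |\psi_0^\eps|^2 \geq V_{\mathrm{ext}} \geq 0$, and the plan is to prove it in three short steps, along the lines of \cite[Proposition 3.5]{OR}. Set $u := \psi_1^\eps$, so $h_\eps u = \mu_1^\eps u$; since $V_{\mathrm{ext}}$ is confining, $u$ lies in the form (and operator) domain of $h_\eps$, in particular $V_{\mathrm{ext}}^{1/2}u\in L^2$ and $u$ decays at infinity. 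The first step is the standard Agmon identity: for any \emph{bounded} Lipschitz weight $\Phi\colon\mathbb{R}^3\to\mathbb{R}$, the function $e^{\Phi/\eps^2}u$ belongs to $H^1(\mathbb{R}^3)$, so one may test the eigenvalue equation against $e^{2\Phi/\eps^2}\bar u$ and integrate by parts to get
\[
	\eps^2\big\|\nabla\big(e^{\Phi/\eps^2}u\big)\big\|_2^2 + \int_{\mathbb{R}^3}\big(W_\eps-\mu_1^\eps-|\nabla\Phi|^2\big)\,e^{2\Phi/\eps^2}|u|^2\,dx = 0 .
\]
Discarding the nonnegative gradient term and using $W_\eps\geq V_{\mathrm{ext}}$ yields $\int_{\mathbb{R}^3}\big(V_{\mathrm{ext}}-\mu_1^\eps-|\nabla\Phi|^2\big)e^{2\Phi/\eps^2}|u|^2\,dx\le 0$. (Working with bounded $\Phi$ is precisely what makes the integration by parts legitimate, avoiding any circularity.)

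For the second step I would choose the truncated weight $\Phi_\rho := \min\{\tfrac12 A(\cdot),\,c_\rho\}$, with $c_\rho := \tfrac{\sqrt k}{2(1+s/2)}\rho^{1+s/2}$, which is globally Lipschitz and bounded, increases pointwise to $\tfrac12 A$ as $\rho\to\infty$, and — by the very definition \eqref{eq: agmon dist} of $A$, which gives $|\nabla A|^2 = k|x|^s = V_{\mathrm{ext}}$ — satisfies $|\nabla\Phi_\rho|^2\leq\tfrac14 V_{\mathrm{ext}}$ a.e. Inserting $\Phi_\rho$ into the inequality of the first step and using $-|\nabla\Phi_\rho|^2\geq-\tfrac14 V_{\mathrm{ext}}$ gives $\int(\tfrac34 V_{\mathrm{ext}}-\mu_1^\eps)e^{2\Phi_\rho/\eps^2}|u|^2\le 0$, hence
\[
	\tfrac12\int_{\mathbb{R}^3}V_{\mathrm{ext}}\,e^{2\Phi_\rho/\eps^2}|u|^2 \;\le\; \int_{\{V_{\mathrm{ext}}<4\mu_1^\eps\}}\!\big(\mu_1^\eps-\tfrac14 V_{\mathrm{ext}}\big)e^{2\Phi_\rho/\eps^2}|u|^2 \;\le\; \mu_1^\eps\,e^{A(R_\eps)/\eps^2},
\]
where $R_\eps := (4\mu_1^\eps/k)^{1/s}$ is the radius of the ``classically allowed'' ball; here I used $2\Phi_\rho\leq A\leq A(R_\eps)$ on that ball and $\|u\|_2=1$. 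Since $s\ge2$ the exponent $(s+2)/(2s)$ is at most $1$, so $A(R_\eps)=\tfrac{\sqrt k}{1+s/2}R_\eps^{1+s/2}\le C(1+\mu_1^\eps)$ with $C$ depending only on $V_{\mathrm{ext}}$.

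The third step is to pass from $\int V_{\mathrm{ext}}e^{2\Phi_\rho/\eps^2}|u|^2$ to $\int e^{2\Phi_\rho/\eps^2}|u|^2$: split over $\{|x|<1\}$, where $e^{2\Phi_\rho/\eps^2}\le e^{A(1)/\eps^2}$ and $\int_{|x|<1}|u|^2\le1$, and over $\{|x|\ge1\}$, where $V_{\mathrm{ext}}\ge k$ so $\int_{|x|\ge1}e^{2\Phi_\rho/\eps^2}|u|^2\le k^{-1}\int V_{\mathrm{ext}}e^{2\Phi_\rho/\eps^2}|u|^2$. Combining with the second step, $\int_{\mathbb{R}^3}e^{2\Phi_\rho/\eps^2}|u|^2\le C(1+\mu_1^\eps)e^{C(1+\mu_1^\eps)/\eps^2}$ uniformly in $\rho$, with $C=C(V_{\mathrm{ext}})$. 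Letting $\rho\to\infty$, monotone convergence replaces $e^{2\Phi_\rho/\eps^2}$ by $e^{A/\eps^2}$; finally, in the TF regime $\eps\to0$ one has $(1+\mu_1^\eps)/\eps^2\ge1$, so the prefactor $C$ can be absorbed into the exponent, producing exactly \eqref{eq: integral decay}.

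I do not expect a genuine obstacle here: this is a textbook Agmon estimate once one recognizes it as such. The only points requiring attention are (i) the justification of the integration by parts in the first step — handled, as usual, by working with the bounded cut-off weights $\Phi_\rho$ and passing to the limit only at the end via monotone convergence — and (ii) the bookkeeping needed to keep every constant dependent on $V_{\mathrm{ext}}$ alone while correctly tracking the $\eps$- and $\mu_1^\eps$-dependence through the split into the classically forbidden region (where the weight is coercive) and the bounded classically allowed ball (where the error is collected and has the advertised size $\mu_1^\eps\,e^{C(1+\mu_1^\eps)/\eps^2}$).
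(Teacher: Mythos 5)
Your proposal is, in substance, the paper's own proof: the same Agmon energy identity (which the paper isolates as Lemma \ref{lem: negative}), the same use of $|\nabla A|^2 = V_{\mathrm{ext}}$ to cancel a quarter of the potential, and the same splitting into the classically allowed set $\Omega=\{V_{\mathrm{ext}} < 4\mu_1^\eps + 4\}$ --- where the weight is bounded by $e^{A(R)/\eps^2}$ with $A(R)\le C(1+\mu_1^\eps)$ since $(s+2)/(2s)\le 1$ --- and its complement, where the integrand has a favorable sign. The only genuine differences are organizational: you justify the integration by parts by truncating the weight and passing to the limit via monotone convergence, which the paper buries in the citation of \cite[Lemma 3.6]{OR}, and you convert $\int V_{\mathrm{ext}}e^{2\Phi}|u|^2$ into $\int e^{2\Phi}|u|^2$ by splitting at $|x|=1$, whereas the paper bounds $\int_{\Omega^c}e^{2\Phi}|\psi_1^\eps|^2$ directly by $\mu_1^\eps\int_{\Omega}e^{2\Phi}|\psi_1^\eps|^2$. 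Your version is, if anything, the more careful one.

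One caveat, which applies equally to the paper: for the operator $-\eps^2\Delta + W_\eps$ and the weight $e^{\Phi/\eps^2}$, the Agmon cross term is $\eps^2\left|\nabla\left(\Phi/\eps^2\right)\right|^2 = \eps^{-2}|\nabla\Phi|^2$, not $|\nabla\Phi|^2$ as in your displayed identity. Consequently $|\nabla\Phi_\rho|^2\le\tfrac14 V_{\mathrm{ext}}$ does not produce the needed cancellation; one would need $\eps^{-2}|\nabla\Phi_\rho|^2\le \tfrac14 V_{\mathrm{ext}}$, which forces the standard semiclassical weight $e^{A/(2\eps)}$ rather than $e^{A/(2\eps^2)}$. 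Run with that weight, your argument yields $\int e^{A/\eps}|\psi_1^\eps|^2 \le (1+\mu_1^\eps)\,e^{C(1+\mu_1^\eps)/\eps}$, which is what is actually provable and is all that is used downstream in the proof of Proposition \ref{pro: en gap}. This is a normalization slip you share with the paper (which computes $|\nabla(A/2\eps^2)|^2 = V_{\mathrm{ext}}/(4\eps^2)$ where the correct value is $V_{\mathrm{ext}}/(4\eps^4)$), not a deviation from its method.
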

In the proof we use the following lemma.
\begin{lem}\label{lem: negative} 
Let $W(x):= V_{\mathrm{ext}}(x) + |\psi_0^\eps(x)|^2 - \mu_1^\eps$ and let $\Phi(x) := A(x) / 2 \eps^2$, where $A(x)$ is as in \eqref{eq: agmon dist}. It holds true that 
\begin{equation} 
	\int_{\mathbb{R}^3}dx\, \left( W(x) - \eps^2|\nabla\Phi(x)|^2\right ) e^{2\Phi(x)}|\psi^\eps_1(x)|^2 \leq 0.
\end{equation}
\end{lem}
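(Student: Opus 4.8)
\textbf{Proof proposal for Lemma \ref{lem: negative}.}

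The natural route is a ground-state (IMS-type) substitution applied to the eigenvalue equation for $\psi_1^\eps$. Write $h_\eps = -\eps^2\Delta + V_{\mathrm{ext}} + |\psi_0^\eps|^2$, so that $h_\eps \psi_1^\eps = \mu_1^\eps \psi_1^\eps$, i.e. $-\eps^2\Delta\psi_1^\eps = -(V_{\mathrm{ext}} + |\psi_0^\eps|^2 - \mu_1^\eps)\psi_1^\eps = -W\psi_1^\eps$. The plan is to test this identity against $e^{2\Phi}\psi_1^\eps$ (assuming for the moment enough decay/regularity to justify the integration by parts; see below) and integrate by parts:
\begin{equation}
	\eps^2\int_{\mathbb{R}^3} dx\, \nabla\psi_1^\eps \cdot \nabla\!\left(e^{2\Phi}\psi_1^\eps\right) = -\int_{\mathbb{R}^3} dx\, W\, e^{2\Phi}|\psi_1^\eps|^2.
\end{equation}
Expanding the left-hand side and substituting $\chi := e^{\Phi}\psi_1^\eps$ via the usual identity $\nabla\psi_1^\eps\cdot\nabla(e^{2\Phi}\psi_1^\eps) = |\nabla\chi|^2 - |\nabla\Phi|^2|\chi|^2$ (the ``ground-state representation''), one obtains
\begin{equation}
	\eps^2\int_{\mathbb{R}^3} dx\, |\nabla\chi|^2 + \int_{\mathbb{R}^3} dx\, \left(W - \eps^2|\nabla\Phi|^2\right)e^{2\Phi}|\psi_1^\eps|^2 = 0.
\end{equation}
Since the first term is manifestly nonnegative, this immediately yields the claimed inequality $\int (W - \eps^2|\nabla\Phi|^2)e^{2\Phi}|\psi_1^\eps|^2 \leq 0$.

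The one genuine subtlety — and the step I expect to be the main obstacle — is justifying the integration by parts, since a priori we do not yet know that $e^{\Phi}\psi_1^\eps \in H^1$ (this is essentially what Proposition \ref{pro: int decay} will establish). The standard fix is to regularize: replace $\Phi$ by a bounded cutoff $\Phi_\delta := \Phi/(1+\delta\Phi)$ (or $\min\{\Phi, 1/\delta\}$ suitably smoothed), for which $e^{\Phi_\delta}$ and $|\nabla\Phi_\delta| \leq |\nabla\Phi|$ are bounded, so that $e^{\Phi_\delta}\psi_1^\eps \in H^1$ and all manipulations are legitimate. Running the argument above with $\Phi_\delta$ gives $\int(W - \eps^2|\nabla\Phi_\delta|^2)e^{2\Phi_\delta}|\psi_1^\eps|^2 \leq 0$, and since $|\nabla\Phi_\delta| \le |\nabla\Phi|$ pointwise, also $\int(W - \eps^2|\nabla\Phi|^2)e^{2\Phi_\delta}|\psi_1^\eps|^2 \leq 0$. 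One then lets $\delta \to 0^+$: on the region $\{W \ge \eps^2|\nabla\Phi|^2\}$ the integrand is nonnegative and monotone increasing to the target integrand (Fatou / monotone convergence), while on the complementary region (which is bounded, since $W(x) = k|x|^s + |\psi_0^\eps|^2 - \mu_1^\eps \to +\infty$ and $\eps^2|\nabla\Phi|^2 = |A(x)|^2/(4\eps^2)$ grows like $|x|^{s+2}$, so the bad set $\{W < \eps^2|\nabla\Phi|^2\}$ is in fact also eventually controlled — more carefully, one checks $W - \eps^2|\nabla\Phi|^2 \to -\infty$, so the negative part lives where both $e^{2\Phi_\delta}$ is comparably large; here one uses that $\int e^{2\Phi_\delta}|\psi_1^\eps|^2$ stays bounded uniformly in $\delta$ precisely because of the inequality just derived on the good set plus dominated convergence on the bad set). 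Passing to the limit yields the stated inequality. I would present this cutoff-and-limit argument compactly, citing the analogous computation in \cite[Proposition 3.5]{OR}, and note that the lemma is exactly the input needed to close the Gr\"onwall-type differential inequality for $\int e^{A/\eps^2}|\psi_1^\eps|^2$ in the proof of Proposition \ref{pro: int decay}.
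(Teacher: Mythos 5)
Your argument is correct and is essentially the paper's own route: the paper gives no written proof of this lemma, deferring to \cite[Lemma 3.6]{OR}, which is exactly this ground-state substitution $\chi = e^{\Phi}\psi_1^\eps$ in the eigenvalue identity for $\psi_1^\eps$, with a bounded cutoff of $\Phi$ to legitimize the integration by parts and a limit $\delta\to 0^+$ at the end. One correction to your hedging parenthetical: with the normalization used in the proof of Proposition \ref{pro: int decay} one has $\eps^2|\nabla\Phi|^2=\tfrac14 V_{\mathrm{ext}}$ (the gradient of $A$ satisfies $|\nabla A|^2=V_{\mathrm{ext}}$, so $\eps^2|\nabla\Phi|^2$ grows like $|x|^s$, not $|x|^{s+2}$, and certainly $W-\eps^2|\nabla\Phi|^2=\tfrac34 V_{\mathrm{ext}}+|\psi_0^\eps|^2-\mu_1^\eps\to+\infty$, consistently with \eqref{eq: bound W- nabla phi}); hence the set where the integrand is negative is bounded, which is precisely the (correct) half of your aside that closes the limiting argument via monotone convergence on the good set and dominated convergence on the bounded bad set.
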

The proof of the lemma above is an adaptation of \cite[Lemma 3.6]{OR}. 
\begin{proof}[Proof of Proposition \ref{pro: int decay}]  We set 
	\begin{equation}\label{eq: def omega a}
	\Omega := \left\{x\in \mathbb{R}^3\, \vert\, V_{\mathrm{ext}}(x) - 4\mu^\eps_1 < 4 \right \}
\end{equation} 
	and
	\begin{equation}
		\Phi(x) := \frac{ 1}{ 2\eps^2}A(x),
	\end{equation}
so that 
\begin{equation}
	|\nabla\Phi(x)|^2 = \frac{1}{ 4\eps^2} V_{\mathrm{ext}}(x).
\end{equation}
We then have that for each $x\in \Omega^c \equiv \mathbb{R}^3\setminus \Omega$,
\begin{equation}\label{eq: bound W- nabla phi}
	(V_{\mathrm{ext}}(x) + |\psi_0^\eps(x)|^2 - \mu_1^{\eps}) - \eps^2|\nabla\Phi(x)|^2 \geq 1.
\end{equation}
In what follows we use $W(x)$ to denote the potential $W(x):=V_{\mathrm{ext}}(x) + |\psi_0^\eps(x)|^2 - \mu_1^{\eps}$. By \eqref{eq: bound W- nabla phi}, we get
\begin{multline}\label{eq: est 1 dec}
\int_{\Omega^c} dx\, e^{2\Phi(x)}|\psi^\eps_1(x)|^2 \leq \int_{\Omega^c}dx\, \left(W(x) -\eps^2|\nabla\Phi(x)|^2\right)e^{2\Phi(x)}|\psi^\eps_1(x)|^2 
 \\
 \leq- \int_{\Omega}dx\, \left(W(x) - \eps^2|\nabla\Phi(x)|^2\right) e^{2\Phi(x)}|\psi^\eps_1(x)|^2,
\end{multline}
where in the last inequality we used that 
\begin{equation}\label{eq: inte neg}
	\int_{\mathbb{R}^3}dx\, \left( W(x) - \eps^2|\nabla\Phi(x)|^2\right ) e^{2\Phi(x)}|\psi^\eps_1(x)|^2 \leq 0,
\end{equation}
 which is what is stated in Lemma \ref{lem: negative}.
Using now that $W(x) - \eps^2 |\nabla\Phi|^2 \geq -\mu^\eps_1$, from \eqref{eq: est 1 dec}, we have
\begin{equation}
	 \int_{\Omega^c}dx\, e^{2\Phi(x)}|\psi^\eps_1(x)|^2 \leq \mu^\eps_1\int_{\Omega}dx\, e^{2\Phi(x)}|\psi^\eps_1(x)|^2,
\end{equation}	
which implies
\begin{equation}
	 \int_{\mathbb{R}^3}dx\, e^{2\Phi(x)}|\psi^\eps_1(x)|^2 \leq (\mu^\eps_1 +1)\int_{\Omega}dx\, e^{2\Phi(x)}|\psi^\eps_1(x)|^2.
\end{equation}	
To conclude, we bound the integral in the r.h.s. above. It is then enough to use that there exists a constant $C>0$ which depends only on the trapping potential such that 
\begin{equation}
	\sup_{x\in \Omega} e^{\frac{\Phi(x) }{ \eps^2}} \leq e^{\frac{C }{ \eps^2}(1 + \mu_1^\eps) }.
\end{equation}
We then get that 
\begin{equation}
	 \int_{\mathbb{R}^3}dx\, e^{2\Phi(x)}|\psi^\eps_1(x)|^2 \leq (\mu^\eps_1 +1)e^{\frac{C }{ \eps^2}(1 + \mu_1^\eps) }.
\end{equation}
\end{proof}

\begin{pro}[Boundedness of $\mu^\eps_j$]\label{pro: bound mu} There exist a constants $C>0$ depending only on $ v $ and $ V_{\mathrm{ext}} $ such that, for $ j = 0,1 $,
	\begin{equation}
			\label{eq: unif bound mu}
			0 < \mu^\eps_j  \leq  C.
	\end{equation} 
\end{pro}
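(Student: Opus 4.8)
The plan is to estimate $\mu^\eps_0$ and $\mu^\eps_1$ purely variationally, exploiting that the rescaling \eqref{eq: resc phi} is $L^2$-unitary, so that $\|\psi^\eps_0\|_{L^2(\mathbb R^3)} = \|\varphi^{\mathrm{GP}}\|_{L^2(\mathbb R^3)} = 1$, and that $h_\eps$ has compact resolvent (being unitarily equivalent to a positive multiple of $h^{\mathrm{GP}}$), so $\mu^\eps_0<\mu^\eps_1$ are genuine eigenvalues.

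For the lower bound I would simply note that $h_\eps = -\eps^2\Delta + V_{\mathrm{ext}} + |\psi^\eps_0|^2$ is a sum of three non-negative operators, hence $h_\eps\geq 0$ and $\mu^\eps_0\geq 0$; testing with the ground state gives
\[
	\mu^\eps_0 = \langle \psi^\eps_0, h_\eps\psi^\eps_0\rangle \geq \int_{\mathbb R^3}dx\,|\psi^\eps_0(x)|^4 = \|\psi^\eps_0\|_{L^4(\mathbb R^3)}^4 > 0,
\]
since $\psi^\eps_0\not\equiv 0$. Because $h_\eps$ has a simple lowest eigenvalue (precisely the fact used to obtain \eqref{eq: resc eps}), it follows that $0 < \mu^\eps_0 < \mu^\eps_1$.

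For the upper bound I would invoke the min–max principle with a fixed, $\eps$-independent two-dimensional trial space. Fix $\chi_1,\chi_2\in C_c^\infty(\mathbb R^3)$ with disjoint supports and $\|\chi_i\|_{L^2(\mathbb R^3)}=1$, so that $\{\chi_1,\chi_2\}$ is orthonormal, and for any normalized $\psi = a\chi_1 + b\chi_2$ in their span (so $|a|^2+|b|^2 = 1$ and $|a|,|b|\le 1$) one has, using disjointness of supports,
\[
	\langle \psi, h_\eps\psi\rangle = \eps^2\|\nabla\psi\|_{L^2}^2 + \int_{\mathbb R^3}V_{\mathrm{ext}}|\psi|^2 + \int_{\mathbb R^3}|\psi^\eps_0|^2|\psi|^2 \leq \eps^2\big(\|\nabla\chi_1\|_{L^2}^2 + \|\nabla\chi_2\|_{L^2}^2\big) + \int_{\mathbb R^3}V_{\mathrm{ext}}\big(|\chi_1|^2 + |\chi_2|^2\big) + \max_{i}\|\chi_i\|_{L^\infty}^2\,\|\psi^\eps_0\|_{L^2}^2.
\]
All three terms are finite and $\eps$-independent: the first because $\chi_i\in C_c^\infty$, the second because $V_{\mathrm{ext}}$ is locally bounded and $\chi_i$ has compact support, the third because $\|\psi^\eps_0\|_{L^2}=1$. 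Since $g_N\to\infty$ forces $\eps\to 0$, for $N$ large we have $\eps\le 1$, so the supremum over the unit sphere of $\mathrm{span}\{\chi_1,\chi_2\}$ is bounded by a constant $C$ depending only on the $\chi_i$, hence only on $V_{\mathrm{ext}}$; by min–max, $\mu^\eps_0\le\mu^\eps_1\le C$.

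The only subtle point — and thus the main, rather mild, obstacle — is the self-consistent term $|\psi^\eps_0|^2$ inside $h_\eps$, which at this stage is not controlled in $L^\infty$. The point to stress is that the variational upper bound does not require such control: it suffices to pair $|\psi^\eps_0|^2$ with the bounded, compactly supported trial density $|\psi|^2$ and use only $\|\psi^\eps_0\|_{L^2(\mathbb R^3)}=1$. In particular there is no circularity with the pointwise bounds on $\varphi^{\mathrm{GP}}$ established later.
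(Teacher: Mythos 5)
Your argument is correct, and for the upper bound it takes a genuinely different route from the paper. The paper bounds $\mu^\eps_0$ by writing it as $\mathcal E^\eps(\psi^\eps_0)+\tfrac12\|\psi^\eps_0\|_4^4\le 2\mathcal E^\eps(\psi^\eps_0)$ and comparing with a trial state, and then bounds $\mu^\eps_1$ by min--max with an \emph{odd} trial function (orthogonality to $\psi^\eps_0$ coming from the evenness of $V_{\mathrm{ext}}$), which forces it to control $\|\psi^\eps_0\|_{L^\infty}$; that control is obtained from the maximum principle applied to the Euler--Lagrange equation, giving $\|\psi^\eps_0\|_\infty^2\le\mu^\eps_0$. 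You instead bound both eigenvalues in one stroke via min--max over a fixed two-dimensional space of disjointly supported bump functions, and you dispose of the self-consistent term $|\psi^\eps_0|^2$ by pairing it with the bounded, compactly supported trial density and using only $\|\psi^\eps_0\|_{L^2}=1$ --- so you need neither the parity argument nor any pointwise information on $\psi^\eps_0$. Your route is more elementary and also covers the strict positivity $\mu^\eps_j>0$, which the paper leaves implicit; what it does not produce is the byproduct $\|\psi^\eps_0\|_\infty^2\le\mu^\eps_0\le C$, which the paper reuses (in rescaled form) in Proposition \ref{pro: L infty norms} to get the $L^\infty$ bound on $\varphi^{\mathrm{GP}}$, so that estimate would still have to be derived separately elsewhere. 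One cosmetic point: the constant in your bound depends on the chosen $\chi_i$ and on $V_{\mathrm{ext}}$ through $\int V_{\mathrm{ext}}|\chi_i|^2$, which is consistent with the statement's allowance that $C$ depend on $v$ and $V_{\mathrm{ext}}$.
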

\begin{proof} We first prove that $\mu_0^\eps$ is bounded. It is trivial to see that the energy $ \mathcal{E}^\eps(\psi) $, where we set
	\begin{equation}
		\mathcal{E}^\eps(\psi):= \langle \psi, (-\eps^2\Delta + V_{\mathrm{ext}})\psi\rangle + \frac{1}{2}\int_{\mathbb{R}^3} dx\, |\psi(x)|^4,
	\end{equation} 
	is uniformly bounded, by evaluating it on a suitable trial state. Therefore, being $\mathcal{E}^\eps$ the energy functional associated to $h^{\eps}$ and being $\psi_0^\eps$ the corresponding ground state, one gets
	\begin{equation} 
		\mu^\eps_0 = \mathcal{E}^\eps(\psi^\eps_0) + \frac{1}{2}\|\psi^\eps_0\|_4^4 \leq 2\mathcal{E}^\eps(\psi^\eps_0) \leq C.
	\end{equation}
	
	We now bound $\mu^\eps_1$. By the min-max principle 
	\begin{equation}\label{eq: min max}
		\mu^\eps_1 = \inf_{\substack{\|\psi\|_{L^2\left(\mathbb R^3\right)} = 1 \\ \psi\perp \psi^\eps_0}}\langle \psi, h_\eps\psi\rangle.
 	\end{equation} 
 	Notice that given that $ V_{\mathrm{ext}} $ is even, then $ \psi^\eps_0 $ is even. Therefore, for any normalized odd trial state $ \psi_\mathrm{trial}\perp\psi^\eps_0 $, and as a consequence we get 
	\begin{equation}\label{eq: bound mu 1}
		\mu_1^\eps \leq \langle \psi_{\mathrm{trial}}, h_\eps\psi_{\mathrm{trial}}\rangle \leq C(1 + \|\psi^\eps_0\|_\infty^2).
	\end{equation}
	Furthermore, a direct application of the maximum principle to the eigenvalue equation
	\begin{equation}\label{eq: var eq phi 0 eps 2}
		\eps^2\Delta \psi^\eps_0= \left (V_{\mathrm{ext}} + |\psi^\eps_0|^2  - \mu_{0}^\eps\right)\psi_0^\eps
	\end{equation} 
	provides an upper bound on $ \|\psi^\eps_0\|_\infty $: since by elliptic theory $ \psi^\eps_0 $ is smooth, at any maximum point $ \bar{x} \in \mathbb{R}^3 $, one has
	\begin{equation} 
		0\geq \eps^2\Delta\psi^\eps_0(\bar{x}) \geq  \left(  |\psi^\eps_0(\bar{x})|^2  - \mu_0^\eps\right) \psi^\eps_0(\bar{x}),
	\end{equation}
	so that (recall that $ \psi^\eps_0>0 $)
	\begin{equation}
		|\psi^\eps_0(\bar{x})|^2 \leq \mu_0^\eps \leq C,
	\end{equation}
	which in turn implies the uniform boundedness of $\mu^\eps_1$ via \eqref{eq: bound mu 1}.
\end{proof}
\begin{proof}[Proof of Proposition \ref{pro: en gap}] We first prove \eqref{eq: en gap eff prob} and then deduce from it the estimate \eqref{eq: en gap}.\\

\underline{Bound for $\mu^{\varepsilon}_1 - \mu^{\varepsilon}_0$.} Let $u(x):= \psi^\eps_1(x)/\psi^\eps_0(x)$, which is well posed since $\psi^\eps_0(x)>0$. We have that
\begin{equation}
	\mu^\eps_1 = \mu^\eps_0 + \eps^2\int_{\mathbb{R}^3}dx\, |\nabla u(x)|^2 |\psi^\eps_0(x)|^2,\nonumber
\end{equation}
since, by the eigenvalue equations for $ \psi^\eps_0 $,
\begin{align}
	\int_{\mathbb{R}^3}dx\, \eps^2|\nabla\psi^\eps_1(x)|^2
	&= \int_{\mathbb{R}^3}dx\, \eps^2|u(x)\nabla\psi^\eps_0(x) + (\nabla u(x))\psi^\eps_0(x)|^2
\\
	&=\int_{\mathbb{R}^3}dx\, \left\{\eps^2|\nabla u(x)|^2|\psi^\eps_0(x)|^2- \eps^2|u(x)|^2 \psi^\eps_0(x) \Delta\psi^\eps_0(x)  \right \}\nonumber
\\
	&=\int_{\mathbb{R}^3}dx\
	\left\{
		\eps^2|\nabla u(x)|^2|\psi^\eps_0(x)|^2
		- V_\mathrm{ext}(x)|u(x)|^2 |\psi^\eps_0(x)|^2
	\right.\nonumber
\\
	&\qquad\left.
		-|u(x)|^2 |\psi^\eps_0(x)|^4
		+\mu_0^\eps|u(x)|^2 |\psi^\eps_0(x)|^2
	\right \}.\nonumber
\end{align}
To find a lower bound for $\mu^\eps_1 - \mu^\eps_0$, we then use Poincar\'e's inequality. More precisely, let $B_{R}:= \left\{ x\in \mathbb{R}^3\, \vert \, |x| < R \right\}$ for some $R>0$ to be fixed later and let
\begin{equation}
	u_{R} := \int_{B_R}u(x)|\psi^\eps_0(x)|^2.
\end{equation}
We can bound
\begin{equation}
	\mu^\eps_1 - \mu^\eps_0 
	\geq C_{R}\eps^2\int_{B_R}dx\, |u(x) - u_{R}|^2 |\psi^\eps_0(x)|^2,\nonumber
\end{equation}
	by the weighted Poincar\'e's inequality (see \cite{Ra} and references therein) and where $C_{R}>0$ is a finite constant depending only on $ R $. Now, recalling that $\psi^\eps_1\perp\psi^\eps_0$, we get
	\begin{eqnarray}
		\int_{B_R}dx\,|u(x) - u_{R}|^2 |\psi^\eps_0(x)|^2 &=& \int_{B_R}dx\, |\psi^\eps_1(x)|^2 +\left| \int_{B_R}dx\,\psi^\eps_0(x)\psi^\eps_1(x)\right|^2 \left(\int_{B_R} dx\,|\psi^\eps_0(x)|^2-2\right)
		\\
		&=& 1 - \int_{B_R^c}dx\, |\psi^\eps_1(x)|^2  - \left| \int_{B_R^c}dx\, \psi^\eps_0(x)\psi^\eps_1(x)\right|^2
\left(1+ \int_{B_R^c}dx\, |\psi^\eps_0(x)|^2 \right) \nonumber,
\end{eqnarray}
where we also used that $\|\psi^\eps_j\|_{L^2\left(\mathbb R^3\right)} =1$ for $j=0,1$ Using now Cauchy-Schwarz, we obtain
\begin{equation}
	\mu^\eps_1 - \mu^\eps_0 \geq C_{R} \eps^2  \left( 1 - 3\int_{B_R^c}dx\, |\psi^\eps_1(x)|^2 \right).
\end{equation}
We then have to bound 
\begin{equation} 
	\int_{B^c_R}\, dx\, |\psi^\eps_1(x)|^2,
\end{equation}
to do that we use Proposition \ref{pro: int decay}. Indeed, from \eqref{eq: integral decay}, we get that there exist two constants $C, C^\prime >0$ such that 
\begin{equation} 
	\int_{B^c_R}\, dx\, |\psi^\eps_1(x)|^2 \leq (1 + \mu_1^\eps)e^{-\frac{C}{\eps^2}(C^\prime R^{1 + \frac{s}{2}} - 1 - \mu^\eps_1)}.
\end{equation}

Given that $ \mu_1^\eps $ is uniformly bounded, there exists a constant $ C''>0 $ independent of $ \eps $ such that $ \mu_1^\eps\le C'' $. We can then fix $ R $ so that
\begin{equation} 
	R^{1 + \frac{s}{2}}
	>\frac{1+C''}{C'}
	\geq \frac{1+\mu^\eps_1}{C^\prime}, 
\end{equation}
to get
\begin{equation}
	\mu^{\eps}_1 - \mu^\eps_0 > C\eps^2.
\end{equation}

\underline{Bound for $\mu_{\mathrm{GP}} - \mu_1$.} By \eqref{eq: en gap eff prob} and \eqref{eq: resc eps}
\begin{equation} 
	\mu_{\mathrm{GP}} - \mu_1 \geq Cg_N^{-\frac{2}{s+3}},
\end{equation}
which implies the result.
\end{proof}

\subsubsection{Estimates for $\|\varphi^{\mathrm{GP}}\|_{L^\infty\left(\mathbb R^3\right)}$, $\|\nabla\varphi^{\mathrm{GP}}\|_{L^\infty\left(\mathbb R^3\right)}$}
\begin{pro}[$L^\infty$ bounds]\label{pro: L infty norms}
	Let $\varphi^{\mathrm{GP}}$ be the normalized ground state of the operator $h^{\mathrm{GP}}$ defined in \eqref{eq: def hGP}.There exists a positive constant $C>0$ such that
	\begin{equation}\label{eq: bounds infty}
		\|\varphi^{\mathrm{GP}}\|_{L^\infty\left(\mathbb R^3\right)} \leq Cg_N^{-\frac{3 }{ 2(s+3)}}, \qquad \|\nabla\varphi^{\mathrm{GP}}\|_{L^\infty\left(\mathbb R^3\right)}\leq Cg_N^{\frac{2s-3 }{ 2(s+3)}}.
	\end{equation}
\end{pro}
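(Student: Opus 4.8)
The plan is to derive both estimates from the rescaled ground state $\psi^\eps_0$ of \eqref{eq: def unitary transf}, exploiting that the bound $\|\psi^\eps_0\|_{L^\infty}^2\le\mu^\eps_0\le C$ has already been obtained inside the proof of \cref{pro: bound mu} (maximum principle applied to \eqref{eq: var eq phi 0 eps 2}). The first estimate is then immediate: by \eqref{eq: def unitary transf}, $\|\varphi^{\mathrm{GP}}\|_{L^\infty}=\eps^{3/(s+2)}\|\psi^\eps_0\|_{L^\infty}$, and since $\eps^{3/(s+2)}=(g_N\smallint v)^{-3/(2(s+3))}$, the uniform bound on $\psi^\eps_0$ gives $\|\varphi^{\mathrm{GP}}\|_{L^\infty}\le C g_N^{-3/(2(s+3))}$.

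For the gradient I would go back to the Euler--Lagrange equation \eqref{eq: GP eq} and, after adding $\varphi^{\mathrm{GP}}$ to both sides, rewrite it as the resolvent identity
\[
	\varphi^{\mathrm{GP}}=(-\Delta+1)^{-1}F,\qquad F:=\big(1+\mu_{\mathrm{GP}}-V_{\mathrm{ext}}-g_N(\smallint v)|\varphi^{\mathrm{GP}}|^2\big)\varphi^{\mathrm{GP}}.
\]
The convolution kernel of $\nabla(-\Delta+1)^{-1}$ is $\nabla\big(e^{-|x|}/(4\pi|x|)\big)$, whose only singularity is the integrable $|x|^{-2}$ and which decays exponentially, hence it belongs to $L^1(\mathbb{R}^3)$; Young's inequality therefore gives $\|\nabla\varphi^{\mathrm{GP}}\|_{L^\infty}\le C\|F\|_{L^\infty}$, and it remains to bound $\|F\|_{L^\infty}$. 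Besides the $L^\infty$ bound just proved, the only additional input needed is $\mu_{\mathrm{GP}}=(g_N\smallint v)^{s/(s+3)}\mu^\eps_0\le C g_N^{s/(s+3)}$, which follows from \eqref{eq: resc Egp} and \cref{pro: bound mu}. With these two facts, the contributions $(1+\mu_{\mathrm{GP}})\varphi^{\mathrm{GP}}$ and $g_N(\smallint v)|\varphi^{\mathrm{GP}}|^2\varphi^{\mathrm{GP}}$ to $F$ are each bounded by $C g_N^{(2s-3)/(2(s+3))}$, the exponents matching because $\tfrac{s}{s+3}-\tfrac{3}{2(s+3)}=1-\tfrac{9}{2(s+3)}=\tfrac{2s-3}{2(s+3)}$.

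The only delicate term of $F$ is $V_{\mathrm{ext}}\varphi^{\mathrm{GP}}$, since the trap is unbounded. I would split $\mathbb{R}^3$ into the ball $B_R$ of radius $R=C_0 g_N^{1/(s+3)}$ — a fixed large multiple of the Thomas--Fermi radius — and its complement. On $B_R$ one has $V_{\mathrm{ext}}\le\lambda C_0^s g_N^{s/(s+3)}$, whence $\|V_{\mathrm{ext}}\mathbf 1_{B_R}\varphi^{\mathrm{GP}}\|_{L^\infty}\le C g_N^{s/(s+3)}\|\varphi^{\mathrm{GP}}\|_{L^\infty}\le C g_N^{(2s-3)/(2(s+3))}$ by the first part; on $B_R^c$ one uses that $\varphi^{\mathrm{GP}}$ decays faster than any polynomial away from the classically allowed region — equivalently, an Agmon-type exponential decay for $\psi^\eps_0$ entirely analogous to \cref{pro: int decay}, now for the ground state rather than the first excited state — which defeats the polynomial growth of $V_{\mathrm{ext}}$ and leaves a lower-order remainder. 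Adding up the three pieces gives $\|F\|_{L^\infty}\le C g_N^{(2s-3)/(2(s+3))}$ and hence the second bound. I expect this last exponential-decay estimate for $\varphi^{\mathrm{GP}}$ to be the only genuine technical point; the rest is routine manipulation of the rescaling identities \eqref{eq: resc phi}--\eqref{eq: resc Egp} together with the already-available bounds on $\|\psi^\eps_0\|_{L^\infty}$ and $\mu^\eps_0$.
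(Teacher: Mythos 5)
Your proposal is correct and follows the same overall skeleton as the paper's proof: the sup bound comes from the maximum principle (you apply it in the rescaled variables via $\|\psi^\eps_0\|_{L^\infty}^2\le\mu^\eps_0\le C$ from \cref{pro: bound mu} and undo the scaling \eqref{eq: def unitary transf}, whereas the paper applies the same maximum-principle argument directly to \eqref{eq: var eq phigp} to get $|\varphi^{\mathrm{GP}}|^2\le \mu_{\mathrm{GP}}/(g_N\smallint v)$ — these are the same computation), and the unbounded term $V_{\mathrm{ext}}\varphi^{\mathrm{GP}}$ is controlled in both proofs by an Agmon-type exponential decay estimate deduced from (the ground-state analogue of) \cref{pro: int decay}. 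The one genuine difference is the mechanism converting the equation into a gradient bound: the paper invokes the interior elliptic estimate \eqref{eq: elliptic estimate} on balls $B_\rho(x_0)$ and then optimizes the radius $\rho=\|f\|_{L^\infty}^{-2/5}$ using $\|\varphi^{\mathrm{GP}}\|_{L^2}=1$, while you write $\varphi^{\mathrm{GP}}=(-\Delta+1)^{-1}F$ and use that the gradient of the Yukawa kernel lies in $L^1(\mathbb{R}^3)$ together with Young's inequality. Both routes yield $\|\nabla\varphi^{\mathrm{GP}}\|_{L^\infty}\le C\|F\|_{L^\infty}$ with the same right-hand side (your $F$ and the paper's $f$ differ only by the harmless extra term $\varphi^{\mathrm{GP}}$); your version is arguably more elementary and avoids the radius optimization, at the small cost of needing to know a priori that $F\in L^\infty$ (i.e.\ the decay of $V_{\mathrm{ext}}\varphi^{\mathrm{GP}}$) before the convolution representation and Young's inequality can be applied globally — a point you correctly identify as the only real technical input, exactly as in the paper.
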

\begin{proof}
	Recall that $\varphi^{\mathrm{GP}}$ is such that $\langle \varphi^{\mathrm{GP}}, h^{\mathrm{GP}}\varphi^{\mathrm{GP}}\rangle = \mu_{\mathrm{GP}}$, it then follows that $\varphi^{\mathrm{GP}}$ satisfies the following variational equation 
	\begin{equation}\label{eq: var eq phigp}
		\Delta \varphi^{\mathrm{GP}} = \left (V_{\mathrm{ext}} + g_N(\smallint v)|\varphi^{\mathrm{GP}}|^2  - \mu_{\mathrm{GP}}\right)\varphi^{\mathrm{GP}}.
	\end{equation} 
	Proceeding similarly as in Proposition \ref{pro: bound mu}, one can prove that 
\begin{equation}
	|\varphi^{\mathrm{GP}}(x_0)|^2 \leq \frac{\mu_{\mathrm{GP}} }{ g_N(\smallint v)}.
\end{equation} 
From \eqref{eq: resc eps} together with \eqref{eq: unif bound mu}, it follows that 
\begin{equation}\label{eq: bound mugp}
	\mu^{\mathrm{GP}}\leq Cg_N^{\frac{s }{ s+3}},
\end{equation}
which immediately implies 
\begin{equation}\label{eq: bound phigp}
	\|\varphi^{\mathrm{GP}}\|_{L^\infty\left(\mathbb R^3\right)} \leq Cg_N^{-\frac{3 }{ 2(s+3)}}, \qquad s\geq 2 .
\end{equation}

 We now estimate the $L^\infty$-norm of $\nabla\varphi^{\mathrm{GP}}$. For short, we set 
 \begin{equation} 
	f(x):=V_{\mathrm{ext}}(x)\varphi^{\mathrm{GP}}(x) + g_N(\smallint v)|\varphi^{\mathrm{GP}}(x)|^2 \varphi^{\mathrm{GP}}(x) - \mu_{\mathrm{GP}}\varphi^{\mathrm{GP}}(x),
 \end{equation}
	so that $\varphi^{\mathrm{GP}}$ satisfies $\Delta\varphi^{\mathrm{GP}} = f$. 
The bound on the $L^\infty$-norm of $\nabla\varphi^{\mathrm{GP}}$ can then be proven by standard elliptic estimates. More precisely, there exists a constant $\gamma>0$ such that for all $x\in\mathbb{R}^3$ it holds
\begin{equation}\label{eq: elliptic estimate}
	\|\nabla\varphi^{\mathrm{GP}}\|_{L^\infty(B_{\rho/2}(x_0))}\leq  \gamma\| f\|_{L^\infty(B_\rho(x_0))} + \frac{16\gamma }{ \rho^4}\|\varphi^{\mathrm{GP}}\|_{L^1(B_\rho(x_0))},
\end{equation}
where we denote by $B_r(x_0):= \{y\in\mathbb{R}^3\,\vert\, |y-x_0| < r\}$. For a proof of \eqref{eq: elliptic estimate} we refer to (Proposition 11.2, \cite{DB}). By Cauchy-Schwarz inequality, we get 
\begin{equation} \label{eq: est grad inf 2}
	|\nabla\varphi^{\mathrm{GP}}(x_0)| \leq
	\|\nabla\varphi^{\mathrm{GP}}\|_{L^\infty(B_{\rho/2}(x_0))}\leq  C\left(\| f\|_{L^\infty(\mathbb{R}^3)} + \rho^{-\frac{5 }{ 2}}\|\varphi^{\mathrm{GP}}\|_{L^2(\mathbb{R}^3)}\right) \leq C\|f\|_{L^\infty(\mathbb{R}^3)},
\end{equation}
where in the last inequality we fix the radius $\rho$ to be $\rho= \|f\|_{L^\infty(\mathbb{R}^3)}^{-2/5}$ and used that $\|\varphi^{\mathrm{GP}}\|_{L^2(\mathbb{R}^3)} = 1$.
We then have to estimate the $L^\infty$ norm of $f$. To do that, we use the bounds \eqref{eq: bound mugp}, \eqref{eq: bound phigp} and we get 
 \begin{equation}\label{eq: bound f 1}
	\|f\|_{L^\infty(\mathbb{R}^3)} = \| V_{\mathrm{ext}}(x)\varphi^{\mathrm{GP}}(x) + g_N(\smallint v)|\varphi^{\mathrm{GP}}(x)|^2 \varphi^{\mathrm{GP}}(x) - \mu_{\mathrm{GP}}\varphi^{\mathrm{GP}}(x)\|_{L^\infty(\mathbb{R}^3)} \leq Cg_N^{\frac{2s-3 }{ 2(s+3)}} + \|V_{\mathrm{ext}}\varphi^{\mathrm{GP}}\|_{L^\infty(\mathbb{R}^3)}.
 \end{equation}
 Using the exponential decay of $\varphi^{\mathrm{GP}}$, which can be deduced from Proposition \ref{pro: int decay} together with the maximum principle for subharmonic functions and which implies that 
 \begin{equation} 
		|\varphi^{\mathrm{GP}}(x)| \leq C g_N^{-\frac{3}{2(s+3)}} e^{-Cg_N^{\frac{s+2}{s+3}}(|g_N^{-\frac{1}{s+3}}x| - C^\prime)}\qquad \mbox{for}\,\,\, |x| \geq g_N^{\frac{1}{s+3}} R, 
 \end{equation}
 for some $R>0$ big enough, it follows that  
 \begin{equation} \label{eq: bound f 2}
	\|V_{\mathrm{ext}}\varphi^{\mathrm{GP}}\|_{L^\infty(\mathbb{R}^3)} \leq Cg_N^{\frac{2s-3 }{ 2(s+3)}}.
 \end{equation}
 Inserting \eqref{eq: bound f 1} and \eqref{eq: bound f 2} in \eqref{eq: est grad inf 2}, we get
 \begin{equation} 
  \|\nabla\varphi^{\mathrm{GP}}\|_{L^\infty(\mathbb{R}^3)} \leq Cg_N^{\frac{2s-3 }{ 2(s+3)}}.
 \end{equation}
\end{proof}

\subsection{Bound for $\mathcal{N}_+$}Before proving Theorem \ref{thm: E0}, we need a bound on $\mathcal{N}_+$. This is the content of the next Proposition. The proof is very similar to the one of \cite[Lemma 1]{GS}.
In the proof we will use Lemma~\ref{lem: H vs GP}, which is proven below.

\begin{pro}[Bound for $\mathcal{N}_+$]\label{pro: N+}
	Let $ v $ satisfy \cref{asump: 1}. Then, as $N\rightarrow\infty$, there exists a constant $ C $ depending only on $ v $ such that
	\begin{equation}
		\langle \Psi_0, \mathcal{N}_+ \Psi_0 \rangle \leq C \left( g_N^{\frac{s+5}{s+3}} N^{3\beta} + g_N^{\frac{2(s+1)}{s+3}}N^{1-\beta}\right).
	\end{equation}
\end{pro}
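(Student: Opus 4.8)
I would follow the scheme of \cite[Lemma 1]{GS} (see also \cite{S}): combine the spectral gap of $h^{\mathrm{GP}}$ with an energy upper bound on the ground state. Since $\varphi^{\mathrm{GP}}$ is the non-degenerate ground state of $h^{\mathrm{GP}}$ with $\mu_{\mathrm{GP}}<\mu_1$, on a single particle $h^{\mathrm{GP}}\ge \mu_{\mathrm{GP}}P+\mu_1 Q\ge \mu_{\mathrm{GP}}+(\mu_1-\mu_{\mathrm{GP}})Q$; summing over the $N$ particles gives the operator inequality $\sum_{j=1}^N h^{\mathrm{GP}}_j\ge N\mu_{\mathrm{GP}}+(\mu_1-\mu_{\mathrm{GP}})\mathcal{N}_+$. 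Taking the expectation in $\Psi_0$ and using the gap bound $\mu_1-\mu_{\mathrm{GP}}>Cg_N^{-2/(s+3)}$ of \cref{pro: en gap},
\[
	\langle\Psi_0,\mathcal{N}_+\Psi_0\rangle\le Cg_N^{\frac{2}{s+3}}\Big(\big\langle\Psi_0,\textstyle\sum_{j=1}^N h^{\mathrm{GP}}_j\,\Psi_0\big\rangle-N\mu_{\mathrm{GP}}\Big),
\]
so it suffices to show that the bracket is $\le C\big(g_N N^{3\beta}+g_N^{2s/(s+3)}N^{1-\beta}\big)$.

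To this end I would write $\sum_j h^{\mathrm{GP}}_j=H_N-\sum_{i<j}\tfrac{g_N}{N}v_N(x_i-x_j)+g_N(\smallint v)\sum_j|\varphi^{\mathrm{GP}}(x_j)|^2$ and bound $\langle\Psi_0,H_N\Psi_0\rangle=E_0(N)\le\langle(\varphi^{\mathrm{GP}})^{\otimes N},H_N(\varphi^{\mathrm{GP}})^{\otimes N}\rangle$. Expanding the product trial energy and invoking \cref{lem: H vs GP} to replace $\iint\tfrac{g_N}{N}v_N(x-y)|\varphi^{\mathrm{GP}}(x)|^2|\varphi^{\mathrm{GP}}(y)|^2$ by $\tfrac{g_N}{N}(\smallint v)\|\varphi^{\mathrm{GP}}\|_4^4$ up to a controlled error gives $E_0(N)\le NE^{\mathrm{trap}}_{\mathrm{GP},N}+O\!\big(g_N^{s/(s+3)}\big)+O\!\big(g_N^{2s/(s+3)}N^{1-\beta}\big)$. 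Since $N\mu_{\mathrm{GP}}=NE^{\mathrm{trap}}_{\mathrm{GP},N}+\tfrac{Ng_N}{2}(\smallint v)\|\varphi^{\mathrm{GP}}\|_4^4$, the leading GP energies cancel and only interaction-type terms survive.

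The crucial input is a lower bound for $\langle\Psi_0,\sum_{i<j}\tfrac{g_N}{N}v_N(x_i-x_j)\Psi_0\rangle$: here one uses the hypothesis $\widehat v\ge 0$ (whence $\tfrac{g_N}{N}v_N$ is of positive type), completing the square in Fourier space against $N\widehat{|\varphi^{\mathrm{GP}}|^2}$ to obtain, pointwise in the positions,
\[
	\sum_{i<j}\tfrac{g_N}{N}v_N(x_i-x_j)\ \ge\ -\tfrac{g_N}{2}v_N(0)+N\sum_j\big(\tfrac{g_N}{N}v_N*|\varphi^{\mathrm{GP}}|^2\big)(x_j)-\tfrac{N^2}{2}\iint\tfrac{g_N}{N}v_N(x-y)|\varphi^{\mathrm{GP}}(x)|^2|\varphi^{\mathrm{GP}}(y)|^2.
\]
Plugging this into $\langle\Psi_0,\sum_j h^{\mathrm{GP}}_j\Psi_0\rangle-N\mu_{\mathrm{GP}}$: the term $N^2\langle\Psi_0,(\tfrac{g_N}{N}v_N*|\varphi^{\mathrm{GP}}|^2)(x_1)\Psi_0\rangle=Ng_N\langle\Psi_0,(v_N*|\varphi^{\mathrm{GP}}|^2)(x_1)\Psi_0\rangle$ cancels the cross term $Ng_N(\smallint v)\langle\Psi_0,|\varphi^{\mathrm{GP}}(x_1)|^2\Psi_0\rangle$, and $\tfrac{N^2}{2}\iint\tfrac{g_N}{N}v_N|\varphi^{\mathrm{GP}}|^2|\varphi^{\mathrm{GP}}|^2=\tfrac{Ng_N}{2}\iint v_N|\varphi^{\mathrm{GP}}|^2|\varphi^{\mathrm{GP}}|^2$ cancels $-\tfrac{Ng_N}{2}(\smallint v)\|\varphi^{\mathrm{GP}}\|_4^4$. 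In both cases the mismatch is the error in replacing $v_N=N^{3\beta}v(N^\beta\cdot)$ by $(\smallint v)\delta_0$, which by \cref{lem: H vs GP} — using $|x|v\in L^1$ together with $\|\varphi^{\mathrm{GP}}\|_\infty\le Cg_N^{-3/(2(s+3))}$ and $\|\nabla\varphi^{\mathrm{GP}}\|_\infty\le Cg_N^{(2s-3)/(2(s+3))}$ from \cref{pro: L infty norms} (so $\|\nabla|\varphi^{\mathrm{GP}}|^2\|_\infty\le Cg_N^{(s-3)/(s+3)}$) — is of size $Ng_N\cdot N^{-\beta}g_N^{(s-3)/(s+3)}=O\!\big(g_N^{2s/(s+3)}N^{1-\beta}\big)$. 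The only leftover term is $\tfrac{g_N}{2}v_N(0)=\tfrac{v(0)}{2}g_N N^{3\beta}$, finite under \cref{asump: 1}. Summing the contributions gives $\langle\Psi_0,\sum_j h^{\mathrm{GP}}_j\Psi_0\rangle-N\mu_{\mathrm{GP}}\le C\big(g_N N^{3\beta}+g_N^{2s/(s+3)}N^{1-\beta}\big)$, and multiplying by $g_N^{2/(s+3)}$ yields the claim since $g_N^{1+2/(s+3)}=g_N^{(s+5)/(s+3)}$ and $g_N^{2s/(s+3)+2/(s+3)}=g_N^{2(s+1)/(s+3)}$.

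The main obstacle is the positive-definiteness step: it is precisely what allows one to avoid estimating the second moment $\langle\Psi_0,\mathcal{N}_+^2\Psi_0\rangle$ that a naive $P/Q$ splitting of the pair interaction would require, and its output must then be dovetailed carefully with the bookkeeping of $N$- and $g_N$-powers forced by the Thomas--Fermi scaling of $\varphi^{\mathrm{GP}}$ over the length $g_N^{1/(s+3)}$, which every norm of $\varphi^{\mathrm{GP}}$ carries. A secondary technical point is the finiteness of $v_N(0)$, where the integrability of $v$ and the positive-type condition are used jointly.
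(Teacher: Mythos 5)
Your proposal is correct and follows essentially the same route as the paper: the trial-state upper bound $E_0(N)\le\langle(\varphi^{\mathrm{GP}})^{\otimes N},H_N(\varphi^{\mathrm{GP}})^{\otimes N}\rangle$, completing the square using $\widehat v\ge 0$ to bound the interaction from below by its Hartree counterpart minus $\tfrac{g_N}{2}v_N(0)$, \cref{lem: H vs GP} to swap $v_N\ast|\varphi^{\mathrm{GP}}|^2$ for $(\smallint v)|\varphi^{\mathrm{GP}}|^2$, the spectral inequality $\sum_j h_j^{\mathrm{GP}}\ge N\mu_{\mathrm{GP}}+(\mu_1-\mu_{\mathrm{GP}})\mathcal N_+$, and finally \cref{pro: en gap} and \cref{pro: L infty norms}. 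The bookkeeping of the $g_N$- and $N$-powers matches the paper's.
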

\begin{proof}
We first prove an upper bound for $E_0(N)$. As a trial state we choose the one in which all the particles occupy the GP minimizer, i.e., ${(\varphi^{\mathrm{GP}})}^{\otimes N}$, obtaining
\begin{eqnarray}
	E_0(N) &\leq& N \langle \varphi^{\mathrm{GP}}, (-\Delta + V_{\mathrm{ext}})\varphi^{\mathrm{GP}}\rangle + g_N \frac{(N-1) }{ 2}\int_{\mathbb{R}^3} dx\, (v_N\ast|\varphi^{\mathrm{GP}}|^2)(x)|\varphi^{\mathrm{GP}}(x)|^2 \nonumber	\\
	&\leq& N \langle \varphi^{\mathrm{GP}}, (-\Delta + V_{\mathrm{ext}})\varphi^{\mathrm{GP}}\rangle + g_N(\smallint v)\frac{N-1 }{ 2}\int_{\mathbb{R}^3}dx\, |\varphi^{\mathrm{GP}}(x)|^4  \nonumber
	\\
	&&+ Cg_N\frac{N-1 }{ N^\beta}\|\varphi^{\mathrm{GP}}\|_{L^\infty\left(\mathbb R^3\right)} \|\nabla\varphi^{\mathrm{GP}}\|_{L^\infty\left(\mathbb R^3\right)}\nonumber 
	\\
	&\le& N E^{\mathrm{trap}}_{\mathrm{GP},N} +Cg_N N^{1-\beta}\|\varphi^{\mathrm{GP}}\|_{L^\infty\left(\mathbb R^3\right)} \|\nabla\varphi^{\mathrm{GP}}\|_{L^\infty\left(\mathbb R^3\right)}, \label{eq: upper bound on E_0}
\end{eqnarray}
where in the second inequality we used \cref{lem: H vs GP}.

We now prove the lower bound. For any $m\in\mathbb{N}$ we set $\psi_m(x) :=  |\varphi^{\mathrm{GP}}| ^ 2 (x) -\frac{1 }{ N}\sum_{j=1}^N \eta_m(x-x_j)$, where $\eta_m(x)\in C^\infty_c(\mathbb{R}^3)$ is a sequence of mollifiers, such that $\|\eta_m\|_2 = 1$ for any $m\in\mathbb{N}$ and $(v_N\ast \eta_m)(x)\rightarrow v_N(x)$ a.e. as $m\rightarrow \infty$. We then have
\begin{eqnarray}
	0\leq \int_{\mathbb{R}^3}dx\int_{\mathbb{R}^3}dy\, \overline{\psi}_m(x)v_N(x-y)\psi_m(y)& = & \int_{\mathbb{R}^3}dx\int_{\mathbb{R}^3}dy\,|\varphi^{\mathrm{GP}}(x)|^2 v_N(x-y)|\varphi^{\mathrm{GP}}(y)|^2 
	\\
	&& - \frac{2 }{ N}\sum_{j=1}^N \int_{\mathbb{R}^3}dx\int_{\mathbb{R}^3}dy\,|\varphi^{\mathrm{GP}}(x)|^2 v_N(x-y)\eta_m (y-x_j)\nonumber
	\\
	&&+ \frac{1 }{ N^2}\sum_{j,k=1}^N \int_{\mathbb{R}^3}dx\int_{\mathbb{R}^3}dy\,\eta_m(x-x_j)v_N(x-y)\eta_{m}(y- x_k),\nonumber
\end{eqnarray}	
where the first inequality follows from the fact that $\hat{v}\geq 0$.
Taking the limit $m\rightarrow \infty$, by the monotone convergence theorem, we get
\begin{equation} 
	0\leq  \int_{\mathbb{R}^3} dx\, (v_N\ast|\varphi^{\mathrm{GP}}|^2)(x)|\varphi^{\mathrm{GP}}(x)|^2 - \frac{2 }{ N}\sum_{j=1}^N (v_N\ast |\varphi^{\mathrm{GP}}|^2)(x_j) + \frac{1 }{ N^2}\sum_{j,k=1}^Nv_N(x_j - x_k),
\end{equation}
which implies that 
\begin{equation}\label{eq: bound potential}
	\frac{g_N }{ N}\sum_{1\leq j < k \leq N} v_N(x_j - x_k) \geq - \frac{g_N N }{ 2}\int_{\mathbb{R}^3} dx\, (v_N\ast|\varphi^{\mathrm{GP}}|^2)(x)|\varphi^{\mathrm{GP}}(x)|^2 + g_N\sum_{j=1}^N (v_N\ast |\varphi^{\mathrm{GP}}|^2)(x_j) - g_NN^{3\beta} v(0).
\end{equation}
From \eqref{eq: bound potential}, we then get
\begin{eqnarray}
	H^{\mathrm{trap}}_N  
	&\geq& \sum_{j=1}^N  \left[-\Delta_j + V_{\mathrm{ext}}(x_j)  + g_N (v_N\ast |\varphi^{\mathrm{GP}}|^2)(x_j)\right]\nonumber
	\\
	&& - \frac{g_N N }{ 2}\int_{\mathbb{R}^3} dx\, (v_N\ast|\varphi^{\mathrm{GP}}|^2)(x)|\varphi^{\mathrm{GP}}(x)|^2 - g_NN^{3\beta} v(0).\nonumber
\end{eqnarray}
Now, using Lemma \ref{lem: H vs GP} twice and exploiting the normalization of $ \varphi^{\mathrm{GP}} $, we can replace both convolutions, obtaining
\begin{equation}
	\label{eq: bound on h trap}
	H^{\mathrm{trap}}_N  - NE^{\mathrm{trap}}_{\mathrm{GP},N} \geq \sum_{j=1}^N (h_{j}^{\mathrm{GP}} - \mu_{\mathrm{GP}}) - g_N N^{3\beta}v(0) - Cg_N N^{1-\beta}\|\varphi^{\mathrm{GP}}\|_{L^\infty\left(\mathbb R^3\right)}\|\nabla \varphi^{\mathrm{GP}}\|_{L^\infty\left(\mathbb R^3\right)}.
\end{equation}
Hence, recalling the eigenfunctions $\{\varphi_n\}_{n\in\mathbb{N}}$  of $h^{\mathrm{GP}}$ with  eigenvalues $ \{\mu_n\}_{n\in\mathbb N} $ (with $ \varphi_0\equiv\varphi^\mathrm{GP} $, $ \mu_0\equiv\mu_\mathrm{GP} $), we have that
\begin{equation}\label{eq: bound hgp - mugp}
	\sum_{j=1}^N (h_{j}^{\mathrm{GP}} - \mu_{\mathrm{GP}}) = \sum_{j\geq 0} (\mu_j - \mu_0)|\varphi_j\rangle \langle\varphi_j| \geq (\mu_1 - \mu_0)\mathcal{N}_+ \geq 0
\end{equation}
that we plug into \eqref{eq: bound on h trap}, so obtaining in combination with the upper bound \eqref{eq: upper bound on E_0}
\begin{equation}\label{eq: est N+}
	(\mu_1 - \mu_{\mathrm{GP}})\langle \Psi_0, \mathcal{N}_+ \Psi_0\rangle \leq g_N N^{3\beta}v(0) + C g_N N^{1-\beta}\|\varphi^{\mathrm{GP}}\|_{L^\infty\left(\mathbb R^3\right)} \| \nabla \varphi^{\mathrm{GP}}\|_{L^\infty\left(\mathbb R^3\right)}.
\end{equation}
Applying \cref{pro: en gap} and \cref{pro: L infty norms}, we finally get
\begin{equation}
	Cg_N^{-\frac2{s+3}}\langle \Psi_0, \mathcal{N}_+ \Psi_0\rangle \leq (\mu_1 - \mu_{\mathrm{GP}})\langle \Psi_0, \mathcal{N}_+ \Psi_0 \rangle \leq C^\prime \left( g_N N^{3\beta} + g_N^{\frac{2s}{s+3}}N^{1-\beta}\right ),
\end{equation}
for two positive constant $ C, C' $. The result immediately follows.
\end{proof}

We now prove a Lemma on the comparison between the Hartree and GP interactions.

\begin{lem}[Hartree and GP interactions] \label{lem: H vs GP} Let $ v $ satisfy \cref{asump: 1}. Then, there exists a constant $C>0$ such that
	\begin{equation}
		\left \| v_N\ast |\varphi^{\mathrm{GP}}|^2(x) -\left (\smallint v \right ) |\varphi^{\mathrm{GP}}|^2(x) \right \|_{L^\infty\left(\mathbb R^3\right)} \leq \frac{C }{ N^\beta}\| \varphi^{\mathrm{GP}}\|_{L^\infty\left(\mathbb R^3\right)} \|\nabla \varphi^{\mathrm{GP}}\|_{L^\infty\left(\mathbb R^3\right)}
	\end{equation}
\end{lem}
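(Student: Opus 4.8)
The plan is to estimate the difference $v_N * |\varphi^{\mathrm{GP}}|^2 - (\smallint v)|\varphi^{\mathrm{GP}}|^2$ pointwise by writing out the convolution and using that $v_N$ concentrates on a ball of radius $\sim N^{-\beta}$ as $N \to \infty$. Fix $x \in \mathbb{R}^3$. Since $\int_{\mathbb{R}^3} v_N(y)\,dy = \int_{\mathbb{R}^3} v(y)\,dy = \smallint v$ (a change of variables $y \mapsto N^\beta y$ shows $v_N$ has the same integral as $v$), we can write
\begin{equation}
	\bigl(v_N * |\varphi^{\mathrm{GP}}|^2\bigr)(x) - (\smallint v)\,|\varphi^{\mathrm{GP}}|^2(x) = \int_{\mathbb{R}^3} v_N(y)\,\bigl(|\varphi^{\mathrm{GP}}|^2(x-y) - |\varphi^{\mathrm{GP}}|^2(x)\bigr)\,dy.
\end{equation}
Now I would control the increment of $|\varphi^{\mathrm{GP}}|^2$ by the mean value theorem: $\bigl||\varphi^{\mathrm{GP}}|^2(x-y) - |\varphi^{\mathrm{GP}}|^2(x)\bigr| \le |y|\,\bigl\|\nabla |\varphi^{\mathrm{GP}}|^2\bigr\|_{L^\infty} \le 2|y|\,\|\varphi^{\mathrm{GP}}\|_{L^\infty}\,\|\nabla\varphi^{\mathrm{GP}}\|_{L^\infty}$, using $\nabla|\varphi^{\mathrm{GP}}|^2 = 2\varphi^{\mathrm{GP}}\nabla\varphi^{\mathrm{GP}}$ (recall $\varphi^{\mathrm{GP}}$ is real and positive). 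Plugging this in gives
\begin{equation}
	\Bigl|\bigl(v_N * |\varphi^{\mathrm{GP}}|^2\bigr)(x) - (\smallint v)\,|\varphi^{\mathrm{GP}}|^2(x)\Bigr| \le 2\,\|\varphi^{\mathrm{GP}}\|_{L^\infty}\,\|\nabla\varphi^{\mathrm{GP}}\|_{L^\infty} \int_{\mathbb{R}^3} |y|\,|v_N(y)|\,dy.
\end{equation}
The remaining step is to show $\int_{\mathbb{R}^3} |y|\,|v_N(y)|\,dy = C N^{-\beta}$ with $C = \||x|v\|_{L^1}$: the substitution $y = N^{-\beta}z$ gives $\int |y|\,|v_N(y)|\,dy = \int |N^{-\beta}z|\,N^{3\beta}|v(z)|\,N^{-3\beta}\,dz = N^{-\beta}\int |z|\,|v(z)|\,dz$, which is finite precisely by the hypothesis $|x|v \in L^1(\mathbb{R}^3)$ in Assumption \ref{asump: 1}. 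Taking the supremum over $x$ yields the claimed bound, with $C = 2\||x|v\|_{L^1}$.

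There is no real obstacle here: the only point requiring the hypotheses is the finiteness of the first absolute moment of $v$, which is exactly what Assumption \ref{asump: 1} guarantees, and the argument is a standard quantitative version of the fact that $v_N \to (\smallint v)\delta_0$ weakly with a rate governed by the regularity of the function against which one tests (here $|\varphi^{\mathrm{GP}}|^2$, whose Lipschitz seminorm is $\le 2\|\varphi^{\mathrm{GP}}\|_{L^\infty}\|\nabla\varphi^{\mathrm{GP}}\|_{L^\infty}$). One minor care point is that $v$ is not assumed nonnegative — only $\widehat v \ge 0$ — so I keep absolute values throughout and use $\||x|v\|_{L^1}$ rather than $\int |x| v$; this is harmless. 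If one wanted, the bound $\|\nabla\varphi^{\mathrm{GP}}\|_{L^\infty}$ on the right-hand side can subsequently be made explicit in $g_N$ via Proposition \ref{pro: L infty norms}, but for the statement of the lemma it suffices to leave it as is.
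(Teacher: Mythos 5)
Your proof is correct and follows essentially the same route as the paper: both write the difference as an integral of $v$ against the increment of $|\varphi^{\mathrm{GP}}|^2$, control that increment by $2|y|N^{-\beta}\|\varphi^{\mathrm{GP}}\|_{L^\infty}\|\nabla\varphi^{\mathrm{GP}}\|_{L^\infty}$ (the paper via the fundamental theorem of calculus along the segment, you via the equivalent mean value inequality), and conclude using the finiteness of the first moment $\||x|v\|_{L^1}$ from Assumption \ref{asump: 1}. The only cosmetic difference is where the change of variables $y\mapsto N^{-\beta}y$ is performed; the resulting constant $2\||x|v\|_{L^1}$ is the same.
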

\begin{proof}
A direct computation yields
\begin{align}
	&\left \| v_N\ast |\varphi^{\mathrm{GP}}|^2(x) -\left (\smallint v \right ) |\varphi^{\mathrm{GP}}|^2(x) \right \|_{L^\infty\left(\mathbb R^3\right)} = \sup_{x\in \mathbb{R}^3}\left | \int_{\mathbb{R}^3}dy\, v(y) \left (\left|\varphi^{\mathrm{GP}}(x-N^{-\beta}y)\right|^2 - |\varphi^{\mathrm{GP}}|^2(x)\right )\right |
	\\
	&\qquad\qquad\qquad\qquad\qquad\qquad= \sup_{x\in\mathbb{R}^3}\left|\int_{\mathbb{R}^3} dy\, v(y) \int_0^1 d\lambda\, \frac{2y }{ N^\beta}|\varphi^{\mathrm{GP}}(x-\lambda N^{-\beta}y)|  |\nabla \varphi^{\mathrm{GP}}(x-\lambda N^{-\beta}y)|\right|\nonumber
	\\
	&\qquad\qquad\qquad\qquad\qquad\qquad\leq \frac{2 }{ N^\beta}\left(\int_{\mathbb{R}^3} dy\, |y| |v(y)|\right ) \| \varphi^{\mathrm{GP}}\|_{L^\infty\left(\mathbb R^3\right)} \|\nabla \varphi^{\mathrm{GP}}\|_{L^\infty\left(\mathbb R^3\right)}.
\end{align}
\end{proof}

\subsection{Proof of \cref{thm: E0}}
By \cref{pro: N+} 
\begin{equation}
	\langle \Psi_0, \mathcal{N}_+ \Psi_0 \rangle \leq C \left( g_N^{\frac{s+5}{s+3}} N^{3\beta} + g_N^{\frac{2(s+1)}{s+3}}N^{1-\beta}\right),
\end{equation}
so that
\begin{align}
	1 - \langle \varphi^{\mathrm{GP}}, \gamma_{\Psi_0}\varphi^{\mathrm{GP}}\rangle &= \frac{1 }{ N}\left[ N - \langle \Psi_0, a^\ast(\varphi^{\mathrm{GP}})a(\varphi^{\mathrm{GP}})\Psi_0 \rangle \right] = \frac{1 }{ N}\langle \Psi_0, \mathcal{N}_+ \Psi_0\rangle\nonumber
\\
	&\leq Cg_N^{\frac{s+5}{s+3}} N^{3\beta - 1}+Cg_N^{\frac{2(s+1)}{s+3}}N^{-\beta},
\end{align}
which completes the proof.

\section{Dynamics}\label{sec: dynamics}
This section is devoted to the proof of Theorem \ref{thm:GP_cond_main}. We summarize here the main steps. 
\begin{itemize}
	\item First, we use the techniques introduced in \cite{P1} to approximate the many-body state at time $t$, $\psi_t $, in terms of $ \pht{t}$, i.e., in terms of the solution of the nonlinear Hartree equation which reads as
\begin{equation}\label{eq:Hartree}
		i\partial_t\pht{t} = \left( -\Delta + g_N v_N* \left| \pht{t} \right|^2\right)\pht{t},
\end{equation}
where we recall $v_N(x) = N^{3\beta} v(N^\beta x)$.
\item Afterwards, we estimate the distance between $ \pht{t} $ and the solution $ \pgpt{t} $ of the GP equation introduced in \eqref{eq:GP}, i.e., 
\[
	i\partial_t\varphi^{\mathrm{GP}}_t = (-\Delta + g_N(\smallint v)|\varphi^{\mathrm{GP}}_t|^2)\varphi^{\mathrm{GP}}_t	.
\]
\end{itemize}
In the following we will need to work with the Hartree energy functional, we set here the notation. We define
\begin{equation}\label{eq: H funct free}
\mathcal{E}^{\mathrm{free}}_{\mathrm{H}}(u) = \int_{\mathbb{R}^3}dx\,|\nabla u(x)|^2 + \frac{g_N }{ 2}\int_{\mathbb{R}^3}dx\, (v_N\ast|u|^2)(x)|u(x)|^2.
\end{equation}

 \subsection{Sobolev norms of the solutions}
 An important role in the proof of Theorem \ref{thm:GP_cond_main} is played the $L^\infty$ norms of $\varphi^{\mathrm{GP}}_t$ and $\varphi^{\mathrm{H}}_t$. We then adapt the proof of \cite[Proposition 3.1 (ii)]{BOS} to our framework to prove a bound for the $H^2$-norm of $\varphi^{\mathrm{GP}}$ and $\varphi^{\mathrm{H}}$, from which we deduce the estimates for the $L^\infty$-norms.
 \begin{pro}[Sobolev bound for $\varphi^{\mathrm{H}}_t$, $\varphi^{\mathrm{GP}}_t$]\label{pro: H2 norm}

	Let $ \pgpt{t} $ and $ \pht{t} $ be solutions of, respectively, \eqref{eq:GP} and \eqref{eq:Hartree}, both with initial datum $ \varphi_0\in H^n\left(\mathbb R^3\right) $. Assume that $ v $ satisfies Assumption \ref{asump: 1}.
	\begin{enumerate}[(i)]
	
	\item\label{itm: control kinetic energy}
		There exists a positive constant $C>0$ which depends only on the interaction $v$ such that
		\begin{equation}
		\label{eq: sup t GP sup t H}
			\sup_{t\in\mathbb R}
			\left\|\pgpt{t}\right\|_{H^1\left(\mathbb R^3\right)}\leq \mathcal{E}^{\mathrm{free}}_{\mathrm{GP}}(\varphi_0), \quad
			\sup_{t\in\mathbb{R}}\|\nabla\varphi^{\mathrm{H}}_t\|_2
			\leq C\mathcal{E}^{\mathrm{free}}_{\mathrm{GP}}(\varphi_0)
			+Cg_N N^{-\beta}\|\varphi_0\|_{L^\infty(\mathbb{R}^3)}^2.
		\end{equation}

	\item\label{itm: control higher norms pgpt}
		There exists a positive constant $ C $ such that for any $ t\in\mathbb R $
		\begin{align}
		\label{eq: sobolev growth for pht}
			&\left\|\pgpt{t}\right\|_{H^2\left(\mathbb R^3\right)}
			\le C\left\|\varphi_0\right\|_{H^2\left(\mathbb R^3\right)} e^{Cg_N^2[\mathcal{E}^{\mathrm{free}}_{\mathrm{GP}}(\varphi_0)]^2\left|t\right|},
		\\
		\label{eq: sobolev growth for pgpt}
			&\left\|\pht{t}\right\|_{H^2\left(\mathbb R^3\right)}
			\le C\left\|\varphi_0\right\|_{H^2\left(\mathbb R^3\right)} e^{Cg_N^2\left([\mathcal{E}^{\mathrm{free}}_{\mathrm{GP}}(\varphi_0)]^2 + g_N^2N^{-2\beta}\|\varphi_0\|_{L^\infty(\mathbb{R}^3)}^4\right)\left|t\right|}.
		\end{align}
		\end{enumerate}

\end{pro}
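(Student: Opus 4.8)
The two statements have different natures: (i) is a soft consequence of the conservation laws attached to \eqref{eq:GP} and \eqref{eq:Hartree}, whereas (ii) is a Gr\"onwall estimate at the level of the second derivative which is fed by (i). (Global well-posedness in $H^2$, which makes the manipulations below legitimate, is standard since both nonlinearities are energy-subcritical in three dimensions.)

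For (i), start from conservation of mass, $\|\pgpt{t}\|_2=\|\pht{t}\|_2=1$, and of the respective energies, $\mathcal{E}^{\mathrm{free}}_{\mathrm{GP}}(\pgpt{t})=\mathcal{E}^{\mathrm{free}}_{\mathrm{GP}}(\varphi_0)$ and $\mathcal{E}^{\mathrm{free}}_{\mathrm{H}}(\pht{t})=\mathcal{E}^{\mathrm{free}}_{\mathrm{H}}(\varphi_0)$ (see \eqref{eq: GP funct free}, \eqref{eq: H funct free}). Since $v$ is of positive type, $\smallint v=\widehat v(0)\ge 0$ and $\int (v_N*|u|^2)|u|^2=\int \widehat v(\xi/N^\beta)\,\big|\widehat{|u|^2}(\xi)\big|^2\,d\xi\ge 0$, so in both functionals the interaction term is nonnegative; hence $\|\nabla\pgpt{t}\|_2^2\le\mathcal{E}^{\mathrm{free}}_{\mathrm{GP}}(\varphi_0)$ and $\|\nabla\pht{t}\|_2^2\le\mathcal{E}^{\mathrm{free}}_{\mathrm{H}}(\varphi_0)$, which gives the first bound in \eqref{eq: sup t GP sup t H} at once (adding $\|\pgpt t\|_2^2=1$, which is absorbed into the energy). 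To close the one for $\pht{t}$, compare the two energies on the initial datum: writing the difference of interaction terms as $\tfrac{g_N}{2}\int\big[(v_N*|\varphi_0|^2)-(\smallint v)|\varphi_0|^2\big]|\varphi_0|^2$ and expanding $|\varphi_0(x-N^{-\beta}y)|^2-|\varphi_0(x)|^2$ by the fundamental theorem of calculus exactly as in the proof of \cref{lem: H vs GP} (using $|x|v\in L^1$ from \cref{asump: 1}, but Cauchy--Schwarz in $x$ so as to keep $\|\nabla\varphi_0\|_2$ rather than $\|\nabla\varphi_0\|_\infty$), one gets $\mathcal{E}^{\mathrm{free}}_{\mathrm{H}}(\varphi_0)\le\mathcal{E}^{\mathrm{free}}_{\mathrm{GP}}(\varphi_0)+Cg_NN^{-\beta}\|\varphi_0\|_{L^\infty}^2\|\nabla\varphi_0\|_2\le C\mathcal{E}^{\mathrm{free}}_{\mathrm{GP}}(\varphi_0)+Cg_N^2N^{-2\beta}\|\varphi_0\|_{L^\infty}^4$, the last step by Young's inequality and $\|\nabla\varphi_0\|_2^2\le\mathcal{E}^{\mathrm{free}}_{\mathrm{GP}}(\varphi_0)$; taking square roots gives the second bound in \eqref{eq: sup t GP sup t H}.

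For (ii) I would differentiate the equation in time and run the energy method on $\chi_t:=\partial_t\pgpt{t}$, which solves the \emph{linear} equation $i\partial_t\chi_t=-\Delta\chi_t+g_N(\smallint v)\big(2|\pgpt{t}|^2\chi_t+(\pgpt{t})^2\overline{\chi_t}\big)$. The Laplacian term and the $2|\pgpt{t}|^2\chi_t$ term contribute nothing to $\tfrac{d}{dt}\|\chi_t\|_2^2$ (their pairings with $\chi_t$ are real), so $\tfrac{d}{dt}\|\chi_t\|_2^2\le Cg_N\|\pgpt{t}\|_{L^\infty}^2\|\chi_t\|_2^2$, with $C$ depending only on $v$. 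The $L^\infty$-factor is then controlled without climbing to a genuinely higher norm: from the equation $\|\Delta\pgpt{t}\|_2\le\|\chi_t\|_2+g_N(\smallint v)\|\pgpt{t}\|_{L^6}^3\le\|\chi_t\|_2+Cg_N\|\nabla\pgpt{t}\|_2^3$, so the three-dimensional interpolation (Agmon) inequality $\|u\|_{L^\infty}^2\le C\|\nabla u\|_2\|\Delta u\|_2$ together with the uniform $H^1$ bound from (i) yields $\|\pgpt{t}\|_{L^\infty}^2\le C\sqrt{\mathcal{E}^{\mathrm{free}}_{\mathrm{GP}}(\varphi_0)}\,\|\chi_t\|_2+Cg_N[\mathcal{E}^{\mathrm{free}}_{\mathrm{GP}}(\varphi_0)]^2$. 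Inserting this into the differential inequality and applying Gr\"onwall gives $\|\chi_t\|_2\le C\|\chi_0\|_2\,e^{Cg_N^2[\mathcal{E}^{\mathrm{free}}_{\mathrm{GP}}(\varphi_0)]^2|t|}$; then $\|\chi_0\|_2\le\|\Delta\varphi_0\|_2+Cg_N\|\nabla\varphi_0\|_2^3$ and the reconstruction $\|\pgpt{t}\|_{H^2}\lesssim\|\chi_t\|_2+g_N[\mathcal{E}^{\mathrm{free}}_{\mathrm{GP}}(\varphi_0)]^{3/2}+1$ from the equation deliver \eqref{eq: sobolev growth for pht}. The Hartree case \eqref{eq: sobolev growth for pgpt} is structurally identical: the potential felt by $\chi_t$ is now $g_Nv_N*|\pht{t}|^2$, whose $L^\infty$ norm is $\le\|v\|_1\|\pht{t}\|_{L^\infty}^2$ by Young's inequality (using $\|v_N\|_1=\|v\|_1$), while the $H^1$ input from (i) is $\|\nabla\pht{t}\|_2^2\le C\mathcal{E}^{\mathrm{free}}_{\mathrm{GP}}(\varphi_0)+Cg_N^2N^{-2\beta}\|\varphi_0\|_{L^\infty}^4$; this is exactly what replaces $[\mathcal{E}^{\mathrm{free}}_{\mathrm{GP}}(\varphi_0)]^2$ by $[\mathcal{E}^{\mathrm{free}}_{\mathrm{GP}}(\varphi_0)]^2+g_N^2N^{-2\beta}\|\varphi_0\|_{L^\infty}^4$ in the exponent.

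The step I expect to require real care is closing the Gr\"onwall in (ii): the differential inequality for $\|\chi_t\|_2^2$ carries, besides the good quadratic term with coefficient $\sim g_N^2[\mathcal{E}^{\mathrm{free}}_{\mathrm{GP}}(\varphi_0)]^2$, a cubic term $\sim g_N\sqrt{\mathcal{E}^{\mathrm{free}}_{\mathrm{GP}}(\varphi_0)}\,\|\chi_t\|_2^3$ coming from the $\|\chi_t\|_2$-dependent part of the $L^\infty$ bound. Arranging that the output is a genuine exponential (rather than a bound that would blow up in finite time) uses that $\|\chi_0\|_2$ is not much larger than $\mathcal{E}^{\mathrm{free}}_{\mathrm{GP}}(\varphi_0)$ — which holds in the regime of interest, namely $\varphi_0$ close to the trapped GP ground state (cf. \cref{rem: time}) — together with a bootstrap; I would model this on \cite[Proposition 3.1 (ii)]{BOS}. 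Everything else is routine three-dimensional Sobolev bookkeeping, plus, in the Hartree case, systematic use of $\|v_N\|_1=\|v\|_1$ and Young's inequality to reduce the nonlocal nonlinearity to the local one.
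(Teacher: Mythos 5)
Part (i) of your argument is correct and is essentially the paper's: positivity of $\widehat v$ makes both interaction terms nonnegative, energy conservation gives the $H^1$ bounds, and the Hartree case is closed by comparing $\mathcal{E}^{\mathrm{free}}_{\mathrm{H}}(\varphi_0)$ with $\mathcal{E}^{\mathrm{free}}_{\mathrm{GP}}(\varphi_0)$ via the Taylor-expansion argument of \cref{lem: H vs GP}.

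Part (ii), however, has a genuine gap, and you have correctly located it yourself: the Gr\"onwall inequality you obtain for $\chi_t=\partial_t\pgpt{t}$ is not closable globally in time. Writing $y(t)=\|\chi_t\|_2^2$, your chain of estimates (Agmon plus $\|\Delta\pgpt{t}\|_2\le\|\chi_t\|_2+Cg_N[\mathcal{E}^{\mathrm{free}}_{\mathrm{GP}}(\varphi_0)]^{3/2}$) yields
\begin{equation*}
	y'\le C g_N\sqrt{\mathcal{E}^{\mathrm{free}}_{\mathrm{GP}}(\varphi_0)}\, y^{3/2}+Cg_N^2[\mathcal{E}^{\mathrm{free}}_{\mathrm{GP}}(\varphi_0)]^2\, y ,
\end{equation*}
and the superlinear term $y^{3/2}$ forces the resulting bound to blow up at a finite time of order $\bigl(g_N\sqrt{\mathcal{E}^{\mathrm{free}}_{\mathrm{GP}}(\varphi_0)}\,\|\chi_0\|_2\bigr)^{-1}$. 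A bootstrap cannot remove this: to keep the cubic term dominated by the quadratic one you need $\|\chi_t\|_2\lesssim g_N[\mathcal{E}^{\mathrm{free}}_{\mathrm{GP}}(\varphi_0)]^{3/2}$, which the claimed exponential bound itself only guarantees on a bounded time window, whereas \eqref{eq: sobolev growth for pht}--\eqref{eq: sobolev growth for pgpt} are asserted for all $t\in\mathbb R$. Moreover, the extra hypothesis you invoke to start the bootstrap ($\varphi_0$ close to the trapped GP ground state) is not part of the statement, which holds for arbitrary $\varphi_0\in H^2$. Finally, \cite[Proposition 3.1 (ii)]{BOS} — and the paper, which adapts it — does not run an energy method on $\partial_t\pgpt{t}$: it uses the Duhamel representation of $\partial^\alpha\pgpt{t}$ together with the Strichartz bound \eqref{eq: Strichartz}, which produces a prefactor $Cg_N\mathcal{E}^{\mathrm{free}}_{\mathrm{GP}}(\varphi_0)\sqrt T$ in front of $\sup_t\|\pgpt{t}\|_{H^2}$. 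The crucial point is that the smallness there comes from the \emph{length of the time interval}, not from the size of the solution, so one can fix $T$ with $Cg_N\mathcal{E}^{\mathrm{free}}_{\mathrm{GP}}(\varphi_0)\sqrt T=1/2$, obtain a doubling estimate on each interval of length $T$, and iterate indefinitely to get the global exponential $2^{t/T+1}\sim e^{Cg_N^2[\mathcal{E}^{\mathrm{free}}_{\mathrm{GP}}(\varphi_0)]^2|t|}$. You should replace your energy-method step by this Duhamel/Strichartz iteration (with Young's inequality handling the convolution in the Hartree case); the rest of your outline then goes through.
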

In the proof of Proposition \ref{pro: H2 norm}, we use some Strichartz estimates. In particular we use that for any function $ f\in L^\infty_tL^\frac65_x $, one has
\begin{align}
\label{eq: Strichartz}
	\sup_{t\in\left[0,T\right]}
	\left\|
		\int_0^t
		ds\
		e^{i\left(t-s\right)\Delta}
		f\left(s,\cdot\right)
	\right\|_{L^2\left(\mathbb R^3\right)}
	\le\sqrt T
	\sup_{t\in\left[0,T\right]}
	\left\|f\left(t,\cdot\right)\right\|_{L^\frac65\left(\mathbb R^3\right)}.
\end{align}
For a proof of \eqref{eq: Strichartz}, see \cite[Theorem 1.2]{KT}.
\begin{proof}[Proof of Proposition \ref{pro: H2 norm}]
	
	We start by showing the bounds in \eqref{eq: sup t GP sup t H}. The estimate for $\varphi^{\mathrm{GP}}_t$ directly follows from the fact that $ \widehat v\geq 0$ and that the energy is preserved by the GP dynamics.
We now look at $\varphi_t^{\mathrm{H}}$.
Notice that, proceeding similarly as in Lemma \ref{lem: H vs GP} and using Assumption \ref{asump: 1} we get
\begin{align}
	|\langle
		\varphi_0,
		(v_N \ast |\varphi_0|^2-|\varphi_0|^2)
		\varphi_0
	\rangle|
	\le
	Cg_N N^{-\beta}
	\left\|
		\nabla\varphi_0
	\right\|_{L^2\left(\mathbb R^3\right)}
	\left\|
		\varphi_0
	\right\|_{L^\infty\left(\mathbb R^3\right)}^2.
\end{align}

Using now Cauchy-Schwartz and the fact that the energy is preserved by the Hartree dynamics, we can write
\begin{equation}
	\|\nabla\varphi^{\mathrm{H}}_t\|_{L^2\left(\mathbb R^3\right)}^2\leq \langle \varphi_0, (-\Delta + \frac{g_N}2 v_N \ast |\varphi_0|^2)\varphi_0\rangle \leq \mathcal{E}^{\mathrm{free}}_{\mathrm{GP}}(\varphi_0)
	+C\left\|\nabla\varphi_0\right\|_{L^2\left(\mathbb R^3\right)}^2
	+Cg_N^2 N^{-2\beta}\|\varphi_0\|_{L^\infty(\mathbb{R}^3)}^4.
\end{equation}

We then get
\begin{equation}
	\sup_{t\in\mathbb{R}}\|\nabla\varphi^{\mathrm{H}}_t\|_2
	\leq C\sqrt{\mathcal{E}^{\mathrm{free}}_{\mathrm{GP}}(\varphi_0)}
	+Cg_N N^{-\beta}\|\varphi_0\|_{L^\infty(\mathbb{R}^3)}^2.
\end{equation}

We now prove \eqref{eq: sobolev growth for pgpt}. Being $ \pgpt{t} $ the solution  of \eqref{eq:GP}, we can write 
	\begin{align}
		\pgpt{t}
		=e^{it\Delta}\pgpt{t_0}
		-ig_N\int_{t_0}^t
		ds\
		e^{i\left(t-s\right)\Delta}
		\left(
			\left|\pgpt{t}\right|^2
			\pgpt{t}
		\right).
	\end{align}
	
	Consider now $ \alpha\in\mathbb N^3 $ with $ \left|\alpha\right|=2 $; we use the notation $ \partial^\alpha=\partial_{x_1}^{\alpha_1}\partial_{x_2}^{\alpha_2}\partial_{x_3}^{\alpha_3} $. We have
	\begin{align}
		\partial^\alpha\pgpt{t}
		&=e^{it\Delta}\partial^\alpha\pgpt{t_0}
		-ig_N\int_{t_0}^t
		ds\
		e^{i\left(t-s\right)\Delta}
		\partial^\alpha
		\left(
			\left|\pgpt{t}\right|^2
			\pgpt{t}
		\right)
	\\
		&=e^{it\Delta}\partial^\alpha\pgpt{t_0}
		-ig_N\int_{t_0}^t
		ds\
		e^{i\left(t-s\right)\Delta}
		\sum_{\gamma\le\beta\le\alpha}
		\binom\alpha\beta
		\binom\beta\gamma
		\left(
			\partial^{\left(\alpha-\beta\right)}
			\overline{\pgpt{t}}
			\partial^{\left(\beta-\gamma\right)}
			\pgpt{t}
			\partial^{\gamma}
			\pgpt{t}
		\right).\nonumber
	\end{align}
	By \eqref{eq: Strichartz}, we get
	\begin{align}
		&\sup_{t\in\left[t_0,t_0+T\right]}
		\left\|\partial^\alpha\pgpt{t}\right\|_{L^2\left(\mathbb R^3\right)}
		\le
		\left\|\partial^\alpha\pgpt{t_0}\right\|_{L^2\left(\mathbb R^3\right)}
	\\
		&\qquad\qquad+g_N
		\sup_{t\in\left[t_0,t_0+T\right]}
		\left\|
			\int_{t_0}^t
			ds\
			e^{i\left(t-s\right)\Delta}
			\sum_{\gamma\le\beta\le\alpha}
			\binom\alpha\beta
			\binom\beta\gamma
			\left(
				\partial^{\left(\alpha-\beta\right)}
				\overline{\pgpt{t}}
				\partial^{\left(\beta-\gamma\right)}
				\pgpt{t}
				\partial^{\gamma}
				\pgpt{t}
			\right)
		\right\|_{L^2\left(\mathbb R^3\right)}\nonumber
	\\
		&\qquad\le
		\left\|\partial^\alpha\pgpt{t_0}\right\|_{L^2\left(\mathbb R^3\right)}
		+64g_N\sqrt T
		\sum_{\gamma\le\beta\le\alpha}
		\sup_{t\in\left[t_0,t_0+T\right]}
		\left\|
			\partial^{\left(\alpha-\beta\right)}
			\pgpt{t}
			\partial^{\left(\beta-\gamma\right)}
			\pgpt{t}
			\partial^{\gamma}
			\pgpt{t}
		\right\|_{L^\frac65\left(\mathbb R^3\right)}.\nonumber
	\end{align}

	As in \cite[Proposition 3.1]{BOS}, via H\"older inequality and Sobolev embeddings, we get that there exists a constant $ C $ (which	depends only on the Sobolev embeddings) such that
	\begin{align}
		\sup_{t\in\left[t_0,t_0+T\right]}
		\left\|\partial^\alpha\pgpt{t}\right\|_{L^2\left(\mathbb R^3\right)}
		&\le
		\left\|\partial^\alpha\pgpt{t_0}\right\|_{L^2\left(\mathbb R^3\right)}
		+Cg_N\sqrt T
		\sup_{t\in\left[t_0,t_0+T\right]}
		\left\|
			\pgpt{t}
		\right\|_{H^1\left(\mathbb R^3\right)}^2
		\left\|
			\pgpt{t}
		\right\|_{H^2\left(\mathbb R^3\right)}.
	\end{align}
	
	This means, that up to increasing $ C $ and using \eqref{eq: sup t GP sup t H}, we get that for any $ t\in\left[t_0,t_0+T\right] $,
	\begin{align}
		\left\|\pgpt{t}\right\|_{H^2\left(\mathbb R^3\right)}
		\le\left\|\pgpt{t_0}\right\|_{H^2\left(\mathbb R^3\right)}
		+C E g_N\sqrt T
		\sup_{t\in\left[t_0,t_0+T\right]}
		\left\|
			\pgpt{t}
		\right\|_{H^2\left(\mathbb R^3\right)},\nonumber
	\end{align}
	where we denoted by $E \equiv \mathcal{E}^{\mathrm{free}}_{\mathrm{GP}}(\varphi_0)$.
	We can now fix $ T $ small enough so that $ C E g_N\sqrt T=1/2 $, as a consequence we get for any $ t\in\left[t_0,t_0+T\right] $
	\begin{align}
		\left\|\pgpt{t}\right\|_{H^2\left(\mathbb R^3\right)}
		&\le2\left\|\pgpt{t_0}\right\|_{H^2\left(\mathbb R^3\right)}.
	\end{align}
Repeating the argument, we get that for any $ t>0 $
	\begin{align}
		\left\|\pgpt{t+T}\right\|_{H^2\left(\mathbb R^3\right)}
		&\le2\left\|\pgpt{t}\right\|_{H^2\left(\mathbb R^3\right)}.
	\end{align}
	Fix now $ t>0 $ and let $ k\in\mathbb N $ such that $ t\in\left(\left(k-1\right)T,kT\right] $; then
	\begin{align}
		\left\|\pgpt{t}\right\|_{H^2\left(\mathbb R^3\right)}
		&\le2\left\|\pgpt{\left(k-1\right)T}\right\|_{H^2\left(\mathbb R^3\right)}
		\le2^k\left\|\varphi_0\right\|_{H^2\left(\mathbb R^3\right)}
		\le2^{\frac tT+1}\left\|\varphi_0\right\|_{H^2\left(\mathbb R^3\right)}
		=2^{4C^2E^2g_N^2t+1}\left\|\varphi_0\right\|_{H^2\left(\mathbb R^3\right)},
	\end{align}
	which implies \eqref{eq: sobolev growth for pht}.
		
	The proof for \eqref{eq: sobolev growth for pgpt} can be done in the same way. Proceeding as above, one gets
	\begin{align}
		\sup_{t\in\left[t_0,t_0+T\right]}
		\left\|\partial^\alpha\pht{t}\right\|_{L^2\left(\mathbb R^3\right)}
		\le&
		\left\|\partial^\alpha\pht{t_0}\right\|_{L^2\left(\mathbb R^3\right)}
	\\
		&
		+g_N\sqrt T
		\sum_{\gamma\le\beta\le\alpha}
		\binom\alpha\beta
		\binom\beta\gamma
		\sup_{t\in\left[t_0,t_0+T\right]}
		\left\|
			v_N*\left(
				\partial^{\left(\alpha-\beta\right)}
				\pht{t}
				\partial^{\left(\beta-\gamma\right)}
				\overline{\pht{t}}
			\right)
			\partial^{\gamma}
			\pht{t}
		\right\|_{L^\frac65\left(\mathbb R^3\right)}.\nonumber
	\end{align}
The only difference with respect to before is that to bound the interaction term one has to use Young's inequality for the convolution (see \cite[Theorem 4.2]{LL}).
\end{proof}

\subsection{From the many-body problem to Hartree.} 
In this section we want to compare the many-body problem with the Hartree approximation, the main result is in the proposition below.
\begin{pro}[From Many-Body to Hartree]\label{pro: from MB to H}
	Let $ \beta \in\left(0,\frac16\right) $ and $ \lambda \in\left(3\beta, 1-3\beta\right) $.
	Let $ \psi_t $ and $ \pht{t} $   be the normalized solution of \eqref{eq:MB_schroedinger} and \eqref{eq:Hartree} with initial data $ \psi_0 $ and $ \varphi_0 $, respectively.
	Suppose that $ v $ satisfies Assumption \ref{asump: 1}. We have
	\begin{equation}
	\label{eq:manybody_to_Hartree}
		\left\|
			|\pht{t}\rangle\langle\pht{t}|-\gamma_{\psi_t}^{\left(1\right)}
		\right\|
		\le\sqrt2\left(
			N^{\frac{1-\lambda }{ 2}}\left\||\varphi_0\rangle\langle\varphi_0|-\gamma_{\psi_0}^{\left(1\right)}\right\|^\frac12
			+N^\frac{3\beta}2
		\right)
		e^{C_{v}(C_N(\varphi_0,t) + C_N(\varphi_0,t)^2)g_N\left|t\right|},
	\end{equation}
	where	 $ C_{v}$ is a constant depending on the $ L^1 $ and the $ L^2 $ norms of $ v $, and $C_N(\varphi_0,t)$ is given by
	
	\begin{equation}
		C_N(\varphi_0,t)=
		\left\|
			\varphi_0
		\right\|_{H^2\left(\mathbb R^3\right)}
		e^{Cg_N^2\left([\mathcal{E}^{\mathrm{free}}_{\mathrm{GP}}(\varphi_0)]^2 + g_N^2N^{-2\beta}\|\varphi_0\|_{L^\infty(\mathbb{R}^3)}^4\right)\left|t\right|}.
	\end{equation}
\end{pro}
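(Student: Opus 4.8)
The plan is to follow the now-classical counting argument of Pickl \cite{P1}, adapted to a coupling constant $g_N$ that depends on $N$. First I would fix the Hartree solution $\pht{t}$ and, for a normalized many-body state $\psi_t$, introduce the time-dependent orthogonal projectors $p_t := |\pht{t}\rangle\langle\pht{t}|$ acting on each particle and $q_t := \mathbbm{1} - p_t$, together with the weighted counting functionals $\alpha_m(t) := \langle \psi_t, \widehat{m}\, \psi_t\rangle$, where $\widehat{m}$ is the multiplication operator that assigns to a state with $k$ particles outside $\pht{t}$ the weight $m(k/N)$; the two weights that matter here are $m(x)=x$ (giving $\langle \psi_t,\mathcal N_+^{(H)}\psi_t\rangle/N$) and $m(x)=\sqrt{x}$. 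The key observation, standard since \cite{P1}, is that $\alpha_1(t)$ and $\|\,|\pht{t}\rangle\langle\pht{t}| - \gamma^{(1)}_{\psi_t}\|$ are comparable up to square roots, so it suffices to control the growth of a suitable $\alpha_m$.

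The heart of the matter is the differential inequality for $\frac{d}{dt}\alpha_m(t)$. Writing $H_N - \sum_j(-\Delta_j + g_N v_N * |\pht{t}|^2(x_j))$ for the difference between the true generator and the Hartree mean-field generator, one commutes $\widehat{m}$ past the evolution and groups the resulting terms according to how many factors of $q_t$ they carry (the ``$p p \to p q$'', ``$p p \to q q$'' and symmetric pieces). Each term is estimated by Cauchy--Schwarz, pulling out one interaction potential $g_N v_N(x_i-x_j)$; the $L^1$ and $L^2$ bounds on $v$ from Assumption \ref{asump: 1}, together with the smoothing $v_N * f$ controlled via Young's inequality, convert these into expressions of the form $g_N\,\|\pht{t}\|_{L^\infty}$ or $g_N\,\|\pht{t}\|_{H^1}^{1/2}\|\pht{t}\|_{H^2}^{1/2}$ times $(\alpha_m + \text{const})$, plus a genuinely small remainder. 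This remainder is where the exponents $\lambda$ and $3\beta$ enter: one piece is of order $N^{-\lambda}$ coming from the kinetic/diagonal bookkeeping (this is why one needs $\lambda < 1-3\beta$, so that $N^{1-\lambda}$ beats the other losses) and one piece is of order $N^{3\beta}$ coming from the $v_N(0)$-type self-interaction and the short-range concentration of $v_N$ (this is why one needs $\lambda > 3\beta$). Feeding the Sobolev bounds of Proposition \ref{pro: H2 norm} — which is exactly where the factor $C_N(\varphi_0,t) = \|\varphi_0\|_{H^2} \exp(Cg_N^2([\mathcal{E}^{\mathrm{free}}_{\mathrm{GP}}(\varphi_0)]^2 + g_N^2 N^{-2\beta}\|\varphi_0\|_\infty^4)|t|)$ is produced — one arrives at
\[
	\left|\frac{d}{dt}\alpha_m(t)\right| \le C_v\,(C_N(\varphi_0,t) + C_N(\varphi_0,t)^2)\,g_N\,\bigl(\alpha_m(t) + N^{-\lambda} + N^{3\beta}\bigr).
\]
Gr\"onwall's lemma then gives $\alpha_m(t) \le \bigl(\alpha_m(0) + N^{-\lambda} + N^{3\beta}\bigr)\exp\bigl(C_v(C_N + C_N^2)g_N|t|\bigr)$, and translating back via $\alpha_m(0) \le \|\,|\varphi_0\rangle\langle\varphi_0| - \gamma^{(1)}_{\psi_0}\|$ and the square-root comparison between $\alpha_1$ and the operator norm yields \eqref{eq:manybody_to_Hartree}, the constant $\sqrt2$ coming precisely from that comparison.

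The main obstacle, and the place where care with the $N$-dependence is essential, is the estimation of the ``$pp\to qq$'' term: naively this produces a factor $g_N N^\beta$ (from $\|v_N\|_1 \cdot N^{\beta}$-type scaling or from evaluating a mollified $v_N$ near the origin), which would destroy the bound unless it is reabsorbed. One has to split $v_N = v_N^{(1)} + v_N^{(2)}$ into a part handled by $\|v_N^{(1)}\|_1 \le \|v\|_1$ (bounded) and a part handled by $\|v_N^{(2)}\|_2$, which scales like $N^{3\beta/2}$; balancing the two, together with the constraint $\beta < 1/6$, is what keeps $N^{3\beta}$ (rather than a larger power) as the only genuine loss, and it is exactly the condition $\beta<1/6$ together with $3\beta < \lambda < 1-3\beta$ that makes the interval of admissible $\lambda$ nonempty. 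Once this splitting is done correctly, everything else is routine bookkeeping of the type in \cite{P1}.
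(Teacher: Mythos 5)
Your overall strategy is the right one and is the same as the paper's: Pickl's counting method, a Gr\"onwall inequality for a weighted number of ``bad'' particles, and the standard comparison between that functional and $\|\,|\varphi\rangle\langle\varphi|-\gamma^{(1)}_\psi\|$. However, there are two concrete problems in how you set it up, and they are precisely the points where the exponents in \eqref{eq:manybody_to_Hartree} are produced.

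First, the weight function. You propose to work with the standard weights $m(x)=x$ and $m(x)=\sqrt x$, i.e.\ $\widehat m = \mathcal N_+/N$ and its square root. With these weights there is no mechanism that produces a free parameter $\lambda$: the increments $m(k/N)-m((k-1)/N)$ are of order $N^{-1}$, not $N^{-\lambda}$, and the comparison with the initial data reads $\alpha(0)\le\|\,|\varphi_0\rangle\langle\varphi_0|-\gamma^{(1)}_{\psi_0}\|$ with no $N^{1-\lambda}$ in front — which is inconsistent with the factor $N^{(1-\lambda)/2}$ you then claim in the conclusion. The paper's argument hinges on the truncated weight $\mu^\lambda(k)=\min(k/N^\lambda,1)$ of Definition \ref{def:weight_fnct}: its increments are $O(N^{-\lambda})$ and vanish for $k>N^\lambda+O(1)$, it satisfies $\nu^2\le\mu^\lambda\le N^{1-\lambda}\nu^2$ (whence both \eqref{eq: second step} and \eqref{eq: first step}), and $\lambda$ enters the estimates only through these properties. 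Without this weight the statement you are asked to prove cannot come out of the computation.

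Second, the bookkeeping of the remainder. You claim a differential inequality with additive error $N^{-\lambda}+N^{3\beta}$ and assert that the $N^{3\beta}$ piece is ``why one needs $\lambda>3\beta$''. This is not coherent: an additive error $N^{3\beta}\to\infty$ makes the Gr\"onwall bound vacuous (recall $\alpha\le1$ trivially), and a term with no $\lambda$-dependence cannot be the reason for the constraint $\lambda>3\beta$. In the actual estimate the two factors you list are \emph{multiplied}, not added: the diagonal ($j=\ell$) self-interaction term in the $pp\to qq$ contribution is bounded by $N\,\|v_N(x_1-x_2)p_1p_2\|^2\,\|(\widehat\mu^\lambda_2-\widehat\mu^\lambda)^{1/2}\|^2\lesssim N\cdot N^{3\beta}\|v\|_{L^2}^2\|\pht{t}\|_{L^4}^4\cdot N^{-\lambda}$, and after taking the square root and pairing with the $\sqrt{2/N}\sqrt{\alpha}$ factor from the other side of the Cauchy--Schwarz, the remainder in the Gr\"onwall inequality is $N^{3\beta-\lambda}$, which is small exactly because $\lambda>3\beta$. (The exponent $N^{3\beta/2}$ printed in \eqref{eq:manybody_to_Hartree} is a typo for $N^{(3\beta-\lambda)/2}$, as one sees from \eqref{eq: est alpha MB to H} and from the statement \eqref{eq: dynamics} of Theorem \ref{thm:GP_cond_main}; your version would literally match the typo but proves a divergent, hence useless, bound.) Relatedly, there is no ``$v_N(0)$-type self-interaction'' in the dynamical argument — that term belongs to the ground-state proof of Proposition \ref{pro: N+}; here the $N^{3\beta}$ arises solely as $\|v_N\|_{L^2}^2/\|v\|_{L^2}^2$, and no splitting $v_N=v_N^{(1)}+v_N^{(2)}$ is needed.
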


The key idea for the proof of Proposition \ref{pro: from MB to H} is to control the number of {\itshape bad particles} in the many-body system  (i.e. the particles not in the state $ \pht{t} $), we do that using methods introduced in \cite{P1}.

We now fix the notation for the projectors on the spaces of good and bad particles. We collect and prove the main general properties of these operators in \cref{app: projector}.

\begin{defi}\label{def:projectors}

	Let $ \varphi \in L^2\left(\mathbb R^3\right) $ and $ \psi\in L^2_\mathrm s\left(\mathbb R^{3N}\right) $.

	\begin{enumerate}

	\item
		For any $ 1\le j\le N $, the projectors $ p_j^\varphi: L^2\left(\mathbb R^{3N}\right)\to L^2\left(\mathbb R^{3N}\right) $ represent the probability of the $ j $-th particle of the state $ \psi $ of being in the state $ \varphi $, i.e. 
		\begin{align}\label{eq: def pj qj}
			p_j^\varphi\psi\left(x_1,\ldots,x_N\right)
			:=&\left(
				|\varphi\rangle\langle\varphi|
			\right)_j\psi\left(x_1,\ldots,x_N\right)
		\\
			\equiv&\varphi\left(x_j\right)
			\int_{\mathbb R^3}
			dz\
			\varphi^*\left(z\right)
			\psi\left(x_1,\ldots,x_{j-1},z,x_{j+1},\ldots,x_N\right).\nonumber
		\end{align}		
		Analogously, $ q_j^\varphi:=1-p_j^\varphi $ represents the probability of the $ j $-th particle of the state $ \psi $ not being in the state $ \varphi $.

	\item 
		For any $ 0\le k\le j\le N $ we set
		\begin{equation}
			\mathcal{A}_k^j:
			=\left\{a:=\left(a_1, a_2,\ldots, a_j\right)
			\in\left\{0,1\right\}^j
			:\
			\sum_{l=1}^j a_l =k\right\}
		\end{equation}
		and define the orthogonal projector $ P_{j, k}^\varphi $ on $ L^2\left(\mathbb R^{3N}\right) $ as
		\begin{equation}
			P_{j, k}^\varphi:
			=\sum_{a \in \mathcal{A}_k^j}
			\prod_{l=N-j+1}^N
			\left[
				\left(
					p_l^\varphi
				\right)^{1-a_l}
				\left(
					q_l^\varphi\
				\right)^{a_l}
			\right].
		\end{equation}
		Note that the rank of $ P_{j,k}^\varphi $ is the space of states that in the last $ j $ particles have exactly $ k $ that are not in the state $ \varphi $ (we sum over $ \mathcal A_k^j $ to make the state symmetric). In the following we denote by $P_{k}^{\varphi}$ the projector given by $P_{N,k}^\varphi$.
	\item\label{itm:transf_of_f}
		For any function $ f:\left\{0,\ldots,N\right\}\to \mathbb R $ we define the operator $ \widehat f^\varphi: L^2\left(\mathbb R^{3N}\right)\to L^2\left(\mathbb R^{3N}\right) $ as
		\begin{equation}
		\label{eq:hat}
			\widehat{f}^\varphi:
			=\sum_{k\in\mathbb Z} f\left(k\right)
			P_k^\varphi.
		\end{equation}
		where $ f \left(k\right)=0 $ for any $ k\in\mathbb Z\setminus\left\{0,\ldots,N\right\} $.

		Similarly, if $ f_d\left(k\right) :=f\left(k+d\right) $ for any $d, k\in\mathbb Z $, we define the operator $ \widehat f_d^\varphi:L^2\left(\mathbb R^{3N}\right)\to L^2\left(\mathbb R^{3N}\right) $ associated to $ f_d $ as
		\begin{equation}
			\widehat{f}^\varphi_d:
			=\sum_{k \in\mathbb{Z}}
			f\left(k+d\right) P_k^\varphi.
		\end{equation}
		
		Throughout the following Section, the hat $ \widehat{\cdot} $ will solely be used in the sense of Definition \ref{def:projectors}.
		
	\end{enumerate}

\end{defi}

	Following \cite{P1}, we now introduce a functional which encodes the fraction of condensation within itself. 

	\begin{defi}
	\label{def:weight_fnct}
	
		For any $ \lambda\in\left(0,1\right) $ we define the weight function $ \mu^\lambda:\left\{0,\ldots,N\right\}\to\mathbb R $ as
		\begin{equation}
			\mu^\lambda\left(k\right):
			=\left\{
				\begin{array}{ll}
					\frac k{N^\lambda}, & \mbox{for } k\le N^{\lambda} ,\\
					1, & \mbox{otherwise.}
				\end{array}
			\right.
		\end{equation}
		
		Moreover, for any $ N \in \mathbb{N} $, we define the functional $ \alpha_N^\lambda:L_\mathrm s^2\left(\mathbb R^{3N}\right)\times L^2\left(\mathbb R^{3N}\right)\to \mathbb R $ as\footnote{For ease of notation we will write $ \widehat\mu^{\lambda,\varphi} $ instead of $ \widehat{\mu^\lambda}^\varphi $.}
		\begin{equation}
			\alpha_N^\lambda\left(\psi,\varphi\right):
			=\left\langle
				\psi, 
				\widehat\mu^{\lambda,\varphi}
				\psi
			\right\rangle
			=\left\|
				\left(\widehat\mu^{\lambda,\varphi}\right)^{1/2}
				\psi
			\right\|_{L_\mathrm s^2\left(\mathbb R^{3N}\right)}^2.
		\end{equation}
	
	\end{defi}
	
The proof of \cref{pro: from MB to H} is based on some Grönwall estimate on the quantity $\alpha^\lambda_N$, which we prove in the next proposition.
	
	\begin{pro}[Grönwall estimate on $\alpha^\lambda_N$]
	\label{pro: Gronwall alpha}
	
		Let $ \beta \in\left(0,\frac16\right) $ and $ \lambda \in\left(3\beta, 1-3\beta\right) $.
		Let $ \psi_t $ and $ \pht{t} $   be the normalized solution of \eqref{eq:MB_schroedinger} and \eqref{eq:Hartree} with initial data $ \psi_0 $ and $ \varphi_0 $, respectively.
		Suppose that $ v $ satisfies Assumption \ref{asump: 1}; we then have
		\begin{equation}
		\label{eq:main_Grönwall}
			\alpha_N^\lambda\left(\psi_t,\pht{t}\right)
			\le\left(
				\alpha_N^\lambda\left(\psi_0,\varphi_0\right)
				+\frac1{N^{\lambda-3\beta}}
			\right)e^{C_{v}(C_N(\varphi_0,t) + C_N(\varphi_0,t)^2)g_N|t|},
		\end{equation}
		where	 $ C_{v} $ is a constant depending on the $ L^1 $ and the $ L^2 $ norms of $ v $, and $C_N(\varphi_0,t)$ is given by 
		\begin{equation}
			C_N(\varphi_0,t)=
			\left\|
				\varphi_0
			\right\|_{H^2\left(\mathbb R^3\right)}
			e^{Cg_N^2\left([\mathcal{E}^{\mathrm{free}}_{\mathrm{GP}}(\varphi_0)]^2 + g_N^2N^{-2\beta}\|\varphi_0\|_{L^\infty(\mathbb{R}^3)}^4\right)\left|t\right|}.
		\end{equation}
	\end{pro}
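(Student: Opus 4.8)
The plan is to run the Pickl counting argument from \cite{P1}, keeping every power of $N$ and of $g_N$ explicit. Set $\alpha(t):=\alpha_N^\lambda(\psi_t,\pht{t})=\langle\psi_t,\widehat\mu^{\lambda,\pht{t}}\psi_t\rangle$. Since $\psi_t$ solves \eqref{eq:MB_schroedinger}, $\pht{t}$ solves \eqref{eq:Hartree}, and $\widehat\mu^{\lambda,\pht{t}}$ is a function of the one-particle projectors $p_j^{\pht{t}},q_j^{\pht{t}}$ of \cref{def:projectors}, the Hartree flow gives $i\,\partial_t\widehat\mu^{\lambda,\pht{t}}=[\,H^{\mathrm H}_t,\widehat\mu^{\lambda,\pht{t}}\,]$ with $H^{\mathrm H}_t:=\sum_{j=1}^N\big({-}\Delta_j+g_N(v_N\ast|\pht{t}|^2)(x_j)\big)$. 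Differentiating $\alpha$ and using this identity, the kinetic contributions cancel and one is left with
\[
  \dot\alpha(t)=i\big\langle\psi_t,\big[\,H_N-H^{\mathrm H}_t,\ \widehat\mu^{\lambda,\pht{t}}\,\big]\psi_t\big\rangle
  =i\Big\langle\psi_t,\Big[\,\tfrac{g_N}{2N}\!\sum_{j\neq k} v_N(x_j-x_k)-g_N\sum_{j=1}^N(v_N\ast|\pht{t}|^2)(x_j),\ \widehat\mu^{\lambda,\pht{t}}\,\Big]\psi_t\Big\rangle ,
\]
so only the difference between the true pair interaction and the self-consistent Hartree potential enters.

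Next I would insert $1=p_j^{\pht{t}}+q_j^{\pht{t}}$ on particles $j$ and $k$ in the pair interaction and in the subtracted term, and use the combinatorial ``shift'' relations $\widehat\mu^{\lambda,\pht{t}}\,O=O\,\widehat\mu^{\lambda,\pht{t}}_d$ (with $O$ changing the number of excited particles by $d$) collected in \cref{app: projector}. After symmetrization this produces the usual finite sum of terms: the contributions with zero $q$ drop (they commute with $\widehat\mu$), those with exactly one $q$ are essentially cancelled by the subtraction of $g_N\sum_j(v_N\ast|\pht{t}|^2)(x_j)$ — which is precisely what the Hartree nonlinearity is tailored for — and what remains are the ``doubly excited'' terms such as $q_jq_k\,v_N(x_j-x_k)\,p_jp_k$ and $q_jq_k\,v_N(x_j-x_k)\,q_jp_k$, together with lower-order remainders. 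These I would estimate by $\widehat\mu^{\lambda,\pht{t}}$-weighted Cauchy--Schwarz, using $\|v_N\ast|\pht{t}|^2\|_\infty\le\|v\|_1\|\pht{t}\|_\infty^2$, $\|v_N\|_1=\|v\|_1$, $\|\,v_N(x_1-x_2)\pht{t}(x_1)\pht{t}(x_2)\,\|_{L^2}\le\|v_N\|_2\,\|\pht{t}\|_\infty=N^{3\beta/2}\|v\|_2\,\|\pht{t}\|_\infty$, and the elementary bound $\langle\psi_t,\mathcal{N}_+\psi_t\rangle\le N\alpha(t)$; the $L^\infty$ and $H^2$ norms of $\pht{t}$ are then controlled by $C_N(\varphi_0,t)$ through \cref{pro: H2 norm} (and the Sobolev embedding $H^2\hookrightarrow L^\infty$). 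The bookkeeping is arranged so that the $\sim N^2$ pairs against the $1/N$ coupling, the factor $\|v_N\|_2\sim N^{3\beta/2}$, and the weight gap $\sim N^{-\lambda}$ from the shift combine into a contribution of size $\sim C_v\,g_N\,(C_N(\varphi_0,t)+C_N(\varphi_0,t)^2)\,N^{(3\beta-\lambda)/2}\sqrt{\alpha(t)}$, which by Young's inequality is absorbed into a term $\le\tfrac12 C_v\,g_N\,(C_N(\varphi_0,t)+C_N(\varphi_0,t)^2)\big(\alpha(t)+N^{3\beta-\lambda}\big)$. Collecting everything yields the differential inequality
\[
  \big|\dot\alpha(t)\big|\le C_v\,g_N\,\big(C_N(\varphi_0,t)+C_N(\varphi_0,t)^2\big)\Big(\alpha(t)+\tfrac{1}{N^{\lambda-3\beta}}\Big).
\]

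Finally I would integrate this by Grönwall's lemma; since $C_N(\varphi_0,t)$ is nondecreasing in $|t|$, the integrating factor $\exp\!\big(C_v\!\int_0^{|t|}\! g_N(C_N+C_N^2)\,ds\big)$ is bounded by $\exp\!\big(C_v(C_N(\varphi_0,t)+C_N(\varphi_0,t)^2)g_N|t|\big)$, and \eqref{eq:main_Grönwall} follows. I expect the genuine obstacle to be the doubly-excited terms: because $g_N\to\infty$ while $v_N$ concentrates on scale $N^{-\beta}$, these are the contributions that a priori threaten to be unbounded, and making them harmless forces one to play the concentration $\|v_N\|_2\sim N^{3\beta/2}$ against the smallness $N^{-\lambda/2}$ gained from the weight, i.e.\ against the standing constraint $3\beta<\lambda<1-3\beta$ — which is in particular why the interval for $\lambda$, hence the range $\beta<1/6$, must appear. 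Throughout, the only real work beyond \cite{P1} is to refrain from absorbing any $N$- or $g_N$-dependent quantity into a generic constant, since each such factor must be tracked all the way to the final rate.
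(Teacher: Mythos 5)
Your proposal is correct and follows essentially the same route as the paper: differentiate $\alpha_N^\lambda$ to reduce to the commutator with $H_N-H^{\mathrm H}_t$ (the paper's Lemma \ref{lem:derivative_alpha}), decompose with $p$'s and $q$'s so that the mean-field term cancels the leading one-$q$ contribution, and bound the remaining terms by weighted Cauchy--Schwarz (the paper's Proposition \ref{pro:terms_of_alpha}), with the diagonal doubly-excited term producing exactly the $N^{3\beta-\lambda}$ inhomogeneity from $\|v_N\|_{L^2}\sim N^{3\beta/2}$ against the $N^{-\lambda}$ weight gap, before closing with Proposition \ref{pro: H2 norm} and Grönwall. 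All the quantitative hinges ($g_N$-tracking, $C_N(\varphi_0,t)$ via $H^2\hookrightarrow L^\infty$, the role of $3\beta<\lambda<1-3\beta$) match the paper's argument.
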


\subsubsection{Proof of \cref{pro: Gronwall alpha}} To prove \cref{pro: Gronwall alpha} we need some preliminary results which are stated in \cref{lem:derivative_alpha} and in \cref{pro:terms_of_alpha}. In the proof of these two intermediate results, we use some operator estimates proven in \cref{app: projector}. To simplify notation we will use the following definition.

\begin{defi}
We denote by $ U_{j,k} $ the difference between the pair interaction and the mean-field interactions for two particles $ j $ and $ k $, i.e.
\begin{equation}
	U_{j,k}:=
	\left(N-1\right)v_N\left(x_j-x_k\right)
	-N v_N*|\pht{t}|^2\left(x_j\right)
	-N v_N*|\pht{t}|^2\left(x_k\right).
\end{equation}

\end{defi}

\begin{lem}
\label{lem:derivative_alpha}

Let $ \lambda\in\left(0,1\right) $. Let $ \psi_t $ and $ \pht{t} $   be the normalized solution of \eqref{eq:MB_schroedinger} and \eqref{eq:Hartree}, respectively. Then we have,
\begin{equation}
	\partial_t\alpha_N^\lambda\left(\psi_t,\pht{t}\right)
	=\Gamma_N^\lambda\left(\psi_t,\pht{t}\right),
\end{equation}
where $ \Gamma_N^\lambda:L_\mathrm s^2\left(\mathbb R^{3N}\right)\times L^2\left(\mathbb R^3\right)\to\mathbb R $ is defined as
\begin{eqnarray}
	\Gamma_N^\lambda\left(\psi,\varphi\right)
	&:= &2g_N\Im\left(\left\langle
		\psi,
		\left(
			\widehat\mu^{\lambda,\varphi}
			-\widehat\mu_1^{\lambda,\varphi}
		\right)
		p_1^\varphi p_2^\varphi
		U_{1,2}
		p_1^\varphi q_2^\varphi
		\psi
	\right\rangle\right)
\\
	&&+g_N\Im\left(\left\langle
		\psi,
		\left(
			\widehat\mu^{\lambda,\varphi}
			-\widehat\mu_2^{\lambda,\varphi}
		\right)
		p_1^\varphi p_2^\varphi
		U_{1,2}
		q_1^\varphi q_2^\varphi
		\psi
	\right\rangle\right)\nonumber
\\
	&&+2g_N\Im\left(\left\langle
		\psi,
		\left(
			\widehat\mu^{\lambda,\varphi}
			-\widehat\mu_1^{\lambda,\varphi}
		\right)
		p_1^\varphi q_2^\varphi
		U_{1,2}
		q_1^\varphi q_2^\varphi
		\psi
	\right\rangle\right).\nonumber
\end{eqnarray}

\end{lem}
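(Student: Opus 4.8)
The plan is to differentiate $\alpha_N^\lambda(\psi_t,\pht{t})=\langle\psi_t,\widehat\mu^{\lambda,\pht{t}}\psi_t\rangle$ in time and collect the surviving terms. There are two sources of time dependence: the many-body state $\psi_t$, which evolves under $H_N$ via \eqref{eq:MB_schroedinger}, and the operator $\widehat\mu^{\lambda,\pht{t}}$, which depends on time only through the one-particle state $\pht{t}$ solving the Hartree equation \eqref{eq:Hartree}. So formally
\[
	\partial_t\alpha_N^\lambda(\psi_t,\pht{t})
	=i\langle\psi_t,[H_N,\widehat\mu^{\lambda,\pht{t}}]\psi_t\rangle
	+\langle\psi_t,(\partial_t\widehat\mu^{\lambda,\pht{t}})\psi_t\rangle.
\]
First I would justify this differentiation rigorously: $\psi_t\in D(H_N)=L^2_s\cap H^2$ by assumption, $\pht{t}$ stays in $H^2$ (hence in particular in $L^2$ with controlled norm) by \cref{pro: H2 norm}, and the map $\varphi\mapsto\widehat\mu^{\lambda,\varphi}$ is differentiable with bounded derivative since $\mu^\lambda$ is a bounded function of $k$ and the $P_k^\varphi$ are polynomials in the rank-one projectors $p_j^\varphi=|\varphi\rangle\langle\varphi|_j$. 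This makes $t\mapsto\alpha_N^\lambda$ genuinely $C^1$.

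The heart of the computation is to show that the kinetic parts and the mean-field parts cancel, leaving only the contribution of the true pair interaction measured against the mean-field one. Write $H_N=\sum_j(-\Delta_j)+\frac{g_N}{N}\sum_{j<k}(N-1)^{-1}\cdots$ — more precisely $H_N=\sum_j h_j+g_N N^{3\beta-1}\sum_{j<k}v(N^\beta(x_j-x_k))$ with $h_j=-\Delta_j$ in the trap-free dynamics. The one-body operators $-\Delta_j$ commute through $\widehat\mu^{\lambda,\pht{t}}$ up to exactly the term produced by the $-\Delta\pht{t}$ piece of the Hartree equation; and the piece of the Hartree equation carrying $g_N v_N*|\pht{t}|^2$ is designed to cancel the ``diagonal'' part of the commutator of the pair interaction with $\widehat\mu^{\lambda,\pht{t}}$. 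Concretely, one inserts $1=p_1^{\pht{t}}+q_1^{\pht{t}}$ and $1=p_2^{\pht{t}}+q_2^{\pht{t}}$ around $v_N(x_1-x_2)$, uses symmetry of $\psi_t$ to reduce all pair sums to the $(1,2)$ term with combinatorial factor $\binom{N}{2}$, and exploits that $[\,\widehat\mu^{\lambda,\varphi},p_1^\varphi p_2^\varphi\,]=0$ while $p_1^\varphi p_2^\varphi\,\widehat\mu^{\lambda,\varphi}=\widehat\mu_2^{\lambda,\varphi}\,p_1^\varphi p_2^\varphi$ etc. — the shift operators $\widehat\mu_d^{\lambda,\varphi}$ arise precisely because moving a $p$ or $q$ past the weighted number operator shifts the excitation count. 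Terms with $p_1p_2\cdots p_1p_2$ and $q_1q_2\cdots q_1q_2$ (same number of excitations on both sides) cancel against the Hartree mean-field term; the terms $p_1p_2 U_{1,2}p_1q_2$, $p_1p_2 U_{1,2}q_1q_2$, $p_1q_2 U_{1,2}q_1q_2$ (and their adjoints, producing the $\Im$ and the factors $2,1,2$) are the ones that survive, with the interaction replaced by $U_{1,2}$ because the $N v_N*|\pht{t}|^2$ subtractions from the Hartree nonlinearity attach to the $p$'s and reconstitute exactly that combination. One also checks that the genuinely diagonal $p_1q_2\,U_{1,2}\,p_1q_2$ (and $q_1p_2\cdots q_1p_2$) type terms vanish because $\widehat\mu^{\lambda,\varphi}-\widehat\mu_1^{\lambda,\varphi}$ acting between states with the same excitation number on the relevant slot gives zero after taking the imaginary part, or are absorbed into the listed terms by symmetrizing $1\leftrightarrow2$.

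The main obstacle is the bookkeeping: tracking which projector insertion produces which shift $\widehat\mu_d^{\lambda,\varphi}$, making sure every term of the commutator is accounted for, and verifying that the mean-field subtraction in the Hartree equation matches the ``off-by-zero-shift'' pieces exactly so that only the three asymmetric terms remain. This is where the operator identities relegated to \cref{app: projector} (commutation of $\widehat f^\varphi$ with products of $p_j,q_j$, and the shift relations $p_j\widehat f^\varphi=\widehat f^\varphi p_j$, $q_j\widehat f^\varphi = \widehat f^\varphi q_j$, and $\widehat f^\varphi$ shifting under two-body operators) do all the work, following the scheme of \cite{P1}. Once the identity $\partial_t\alpha_N^\lambda=\Gamma_N^\lambda$ is established with $\Gamma_N^\lambda$ in the stated form, the proof of the lemma is complete; the subsequent estimates bounding $\Gamma_N^\lambda$ in terms of $\alpha_N^\lambda$ itself (needed for the Grönwall step in \cref{pro: Gronwall alpha}) are deferred to \cref{pro:terms_of_alpha}.
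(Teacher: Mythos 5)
Your plan follows essentially the same route as the paper's proof: differentiate $\alpha_N^\lambda(\psi_t,\pht{t})$, use the Hartree equation to write $\partial_t\widehat\mu^{\lambda,\pht{t}}$ as a commutator with the effective Hamiltonian $H_{\mathrm H}^{\pht{t}}$, reduce by the bosonic symmetry of $\psi_t$ to $g_N\Im\langle\psi_t,\widehat\mu^{\lambda,\pht{t}}U_{1,2}\psi_t\rangle$, decompose $\widehat\mu^{\lambda,\pht{t}}$ through insertions of $p_j^{\pht{t}},q_j^{\pht{t}}$ and the shift relations of \cref{app: projector}, and discard the diagonal terms (self-adjointness) and the $p_1q_2\cdots q_1p_2$ term (reality by symmetry), exactly as in the paper. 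One small correction: the identity $p_1^\varphi p_2^\varphi\,\widehat\mu^{\lambda,\varphi}=\widehat\mu_2^{\lambda,\varphi}\,p_1^\varphi p_2^\varphi$ that you write in passing is false — it contradicts $[\widehat\mu^{\lambda,\varphi},p_1^\varphi p_2^\varphi]=0$, which you correctly state just before — since the shift $\widehat f\mapsto\widehat f_{j-k}$ only arises when a two-body operator sits between the two projector blocks, as in \eqref{itm:commuting_hats}; the actual computation uses that corrected form, so this does not affect the validity of your overall strategy.
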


\begin{proof}

	First notice that for any $ 1\le j\le N $ if we differentiate $ p_j^\pht{t} $ we get
	\begin{equation}
		\partial_tp_j^{\pht{t}}=\left( |\pht{t}\rangle\langle\partial_t\pht{t}|\right)_j + \left(|\partial_t\pht{t}\rangle\langle\pht{t}|\right)_j=i\left[p_j^\pht{t},-\Delta_j+g_Nv_N*\left|\pht{t}\right|^2\left(x_j\right)
		\right],
	\end{equation}
	and similarly for $ q_j^\pht{t} $
	\begin{equation}
		\partial_tq_j^{\pht{t}}=i\left[q_j^\pht{t},-\Delta_j+g_Nv_N*\left|\pht{t}\right|^2\left(x_j\right)\right].
	\end{equation}
	It is then useful to introduce $ H_\mathrm H^\varphi $ defined as
	\begin{equation}
		H_\mathrm H^\varphi:
		=\sum_{j=1}^N\left[-\Delta_j+g_Nv_N*\left|\varphi\right|^2\left(x_j\right)
		\right].
	\end{equation}
	As a consequence, for any $ 1\le j\le N $, we can write
	\begin{equation}
		\partial_tp_j^{\pht{t}}
		=i\left[
			p_j^{\pht{t}},
			H_\mathrm H^{\varphi^{\mathrm{H}}_t}
		\right],\qquad \partial_tq_j^{\pht{t}}
		=i\left[
			q_j^{\pht{t}},
			H_\mathrm H^{\varphi^{\mathrm{H}}_t}
		\right].
	\end{equation}
	Given that for any $ 0\le k\le j\le N $ the operator $ P_{j,k}^\pht{t} $ is just a linear combination of $ p_l^\pht{t} $'s and $ q_l^\pht{t} $'s, we deduce that for any weight $ f:\left\{0,\ldots,N\right\}\to\mathbb R^+ $ we get
	\begin{equation}
	\label{eq:deriv_of_f}
		\partial_t\widehat{f}^\pht{t}
		=i\left[
			\partial_t\widehat{f}^\pht{t},
			H_\mathrm H^\pht{t}
		\right].
	\end{equation}
	In order to simplify the notation, for the rest of the proof we will drop the labels $ \pht{t} $ and $ \lambda $ in the formulas involving the weights and the effective many-body nonlinear operator. 
	If now we differentiate $ \partial_t\alpha_N^\lambda\left(\psi_t,\pht{t}\right) $, using \eqref{eq:deriv_of_f} and the symmetry of $ \psi_t $ and $ \widehat\mu\psi_t $, we get
	\begin{equation}\label{eq: deriv t alpha}
		\partial_t\alpha_N^\lambda\left(\psi_t,\pht{t}\right)
		=-i\left\langle
			\psi_t,
			\left[
				\widehat\mu,
				H_N-H_\mathrm H
			\right]
			\psi_t
		\right\rangle
	=-i\frac{g_N}2
		\left\langle
			\psi_t,
			\left[
				\widehat\mu,
				U_{1,2}
			\right]
			\psi_t
		\right\rangle =  g_N\Im\left(
			\left\langle
				\psi_t,
				\widehat\mu\
				U_{1,2}
				\psi_t
			\right\rangle
		\right).
	\end{equation}
	We want to now decompose the action of the operator $ \widehat\mu\ U_{1,2} $ in terms of $ p_1 $, $ p_2 $, $ q_1 $ and $ q_2 $.
	To do so, recall that $ P_{j,k} $ projects on the space of states in which in the last $ j $ particles there are exactly $ k $ particles that are not in the state $ \pht{t} $; we can then write
	\begin{equation}
		P_k\equiv P_{N,k}
		 = p_1p_2P_{N-2,k}
		+\left(p_1q_2+q_1p_2\right)P_{N-2,k-1}
		+q_1q_2P_{N-2,k-2}.\nonumber
	\end{equation}
	As a consequence, for any weight $ f:\left\{0,\ldots, N\right\}\to\mathbb R $ we can use the equality above to get
	\begin{eqnarray}\label{eq: hat f Pk}
		\widehat f
		&=&\sum_{k\in\mathbb Z}
		f\left(k\right)P_k
	=\sum_{k\in\mathbb Z}
		f\left(k\right)\left(
			p_1p_2P_{N-2,k}
			+\left(p_1q_2+q_1p_2\right)P_{N-2,k-1}
			+P_{N-2,k-2}
		\right)\nonumber
	\\
		&&-p_1p_2\sum_{k\in\mathbb Z}f\left(k+2\right)P_{N-2,k}
		-\left(p_1q_2+q_1p_2\right)\sum_{k\in\mathbb Z}f\left(k+1\right)P_{N-2,k-1}\nonumber
	\\
		&=&p_1p_2(
			\widehat f
			-\widehat f_2
		)
		+\left(
			p_1q_2
			+q_1p_2
		\right)(
			\widehat f
			-\widehat f_1
		)
		+\sum_{k\in\mathbb Z}f\left(k\right)P_{N-2,k-2}.
	\end{eqnarray}
	We can then insert \eqref{eq: hat f Pk} in \eqref{eq: deriv t alpha} and we get 
	\begin{align}
		\partial_t\alpha_N^\lambda\left(\psi_t,\pht{t}\right)
		&=g_N\Im\left(
			\left\langle
				\psi_t,
				\left(
					p_1p_2\left(
						\widehat\mu
						-\widehat\mu_2
					\right)
					+2p_1q_2\left(
						\widehat\mu
						-\widehat\mu_1
					\right)
				\right)
				U_{1, 2}
				\psi_t
			\right\rangle
		\right)
	\\
	&= \sum_{\substack{\sharp_1 = p_1, q_1 \\ \sharp_2 = p_2, q_2}} g_N\Im\left(\langle \psi_t, p_1p_2 (\hat{\mu} -\hat{\mu}_2)U_{1,2} \sharp_1\sharp_2\psi_t\rangle\right) + g_N\Im\left(\langle \psi_t, p_1q_2 (\hat{\mu} -\hat{\mu}_1)U_{1,2} \sharp_1\sharp_2\psi_t\rangle\right),\nonumber
 	\end{align}
	where we used that the last term in \eqref{eq: hat f Pk} is self-adjoint and commutes with $ U_{1,2} $.
	It turns out that many of those terms are either identical or vanishing. In particular, we can use \eqref{itm:commuting_hats} to have that $p_1p_2
			\left(
				\widehat\mu
				-\widehat\mu_2
			\right)
			U_{1, 2}
			p_1p_2
			=\left(
				p_1p_2
				\left(
					\widehat\mu
					-\widehat\mu_2
				\right)
				U_{1, 2}
				p_1p_2
			\right)^\dagger$,
	which implies that $\Im\left(\langle \psi_t, p_1p_2 (\hat{\mu} -\hat{\mu}_2)U_{1,2} p_1p_2\psi_t\rangle\right) =0$ and similarly we get that $\Im\left(\langle \psi_t, p_1q_2 (\hat{\mu} -\hat{\mu}_1)U_{1,2} p_1q_2\psi_t\rangle\right) =0$. Moreover, using the symmetry of $\psi_t$, an easy computation shows that $\langle \psi_t, p_1q_2 (\hat{\mu} - \hat{\mu}_1) U_{1,2}q_1p_2\psi_t\rangle \in \mathbb{R}$, therefore the imaginary part is vanishing. For the remaining terms, we can use again 
 \eqref{itm:commuting_hats} and after some manipulations we can combine them to get $g_N\mathrm{Im}\langle \psi_t, p_1p_2 (\hat{\mu} - \hat{\mu}_1)U_{1,2}p_1 q_2 \psi_t\rangle.$ %
	We then obtain
	\begin{eqnarray}
		\partial_t\alpha_N^\lambda\left(\psi_t,\pht{t}\right)
		&=&2g_N\Im\left(
			\left\langle
				\psi_t,
				p_1p_2
				\left(
					\widehat\mu
					-\widehat\mu_1
				\right)
				U_{1, 2}
				p_1q_2
				\psi_t
			\right\rangle
		\right)
	\\
		&&+g_N\Im\left(
			\left\langle
				\psi_t,
				p_1p_2
				\left(
					\widehat\mu
					-\widehat\mu_2
				\right)
				U_{1, 2}
				q_1q_2
				\psi_t
			\right\rangle
		\right)\nonumber
	\\
		&&+2g_N\Im\left(
			\left\langle
				\psi_t,
				p_1q_2
				\left(
					\widehat\mu
					-\widehat\mu_1
				\right)
				U_{1, 2}
				q_1q_2
				\psi_t
			\right\rangle
		\right),\nonumber
	\end{eqnarray}
	which concludes the proof.
\end{proof}
The next step is to estimate separately the three terms appearing in $ \Gamma_N^\lambda\left(\psi_t,\pht{t}\right) $ in terms of $ \alpha_N^\lambda\lambda\left(\psi_t,\pht{t}\right) $ in order to prove a Grönwall-type estimate for $ \alpha_N^\lambda\lambda\left(\psi_t,\pht{t}\right) $. This is the aim of the next Proposition.

\begin{pro}\label{pro:terms_of_alpha}

	Let $ \lambda\in\left(0,1\right) $, let $ \psi_t $ and $ \varphi_t $ be normalized solutions of \eqref{eq:MB_schroedinger} and  \eqref{eq:Hartree}, respectively. The following estimates hold true.
	
	\begin{align}
	\label{eq:oneq}
		\left|
			\Im\left(\left\langle
				\psi_t,
				\left(
					\widehat\mu_1^{\lambda,\pht{t}}
					-\widehat\mu^{\lambda,\pht{t}}
				\right)
				p_1^\pht{t} p_2^\pht{t}
				U_{1,2}
				p_1^\pht{t} q_2^\pht{t}
				\psi_t
			\right\rangle\right)
		\right|
		&\le\frac{
			\left\|
				v
			\right\|_{L^1\left(\mathbb R^3\right)}
		 }{ N^\frac{1+\lambda}2}
		\left\|
			\pht{t}
		\right\|_{L^\infty\left(\mathbb R^3\right)}^2
		\sqrt{\alpha_N^\lambda\left(\psi_t,\pht{t}\right)},
	\\
	\label{eq:twoqs}
		\left|
			\Im\left(\left\langle
				\psi_t,
				\left(
					\widehat\mu_2^{\lambda,\pht{t}}
					-\widehat\mu^{\lambda,\pht{t}}
				\right)
				p_1^\pht{t} p_2^\pht{t}
				U_{1,2}
				q_1^\pht{t} q_2^\pht{t}
				\psi_t
			\right\rangle\right)
		\right|&\le
	\\
		&\hspace{-150pt}\le 2
		\max\left\{
			\left\|v\right\|_{L^1\left(\mathbb R\right)},
			\left\|v\right\|_{L^2\left(\mathbb R\right)}
		\right\}
		\max\left\{
			\left\|\pht{t}\right\|_{L^4\left(\mathbb R^3\right)},
			\left\|\pht{t}\right\|_{L^\infty\left(\mathbb R^3\right)}
		\right\}^2
		\left(
			\alpha_N^\lambda\left(\psi_t,\pht{t}\right)
			+\frac{N^{3\beta-\lambda}}2
		\right),
	\\
	\label{eq:threeqs}
		\left|
			\Im\left(\left\langle
				\psi_t,
				\left(
					\widehat\mu_1^{\lambda,\pht{t}}
					-\widehat\mu^{\lambda,\pht{t}}
				\right)
				p_1^\pht{t} q_2^\pht{t}
				U_{1,2}
				q_1^\pht{t} q_2^\pht{t}
				\psi_t
			\right\rangle\right)
		\right|
		&\le
	\\
		&\hspace{-150pt}\le
		\frac{
			2\max\left\{
				\left\|v\right\|_{L^1\left(\mathbb R^3\right)},
				\left\|v\right\|_{L^2\left(\mathbb R^3\right)}
			\right\}
		 }{ N^\frac{1-\lambda}2}
		\left\|\pht{t}\right\|_{L^\infty\left(\mathbb R^3\right)}
		\left(
			N^{\frac{3\beta}2}
			+\left\|\pht{t}\right\|_{L^\infty\left(\mathbb R^3\right)}
		\right)
		\alpha_N^\lambda\left(\psi_t,\pht{t}\right).
	\end{align}

\end{pro}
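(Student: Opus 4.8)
The strategy is the standard Pickl-type counting argument from \cite{P1}: each of the three terms is a sesquilinear form in $\psi_t$ involving $U_{1,2}$ sandwiched between products of $p^\pht{t}$'s and $q^\pht{t}$'s, together with a difference of shifted weights $\widehat\mu^{\lambda}-\widehat\mu^\lambda_d$. The key algebraic fact is that the weight function $\mu^\lambda$ is $1/N^\lambda$-Lipschitz, so $\left|\mu^\lambda(k)-\mu^\lambda(k+d)\right|\le d/N^\lambda$, which turns each weight-difference factor into a gain of a power of $N$; combined with the operator identities from \cref{app: projector} (commuting the hatted weights past the $p$'s and $q$'s, and the bound $\|q_j^\pht{t}\widehat{n}^{-1/2}\|\le C$ relating $\mathcal N_+$-type operators to $\alpha_N^\lambda$) this produces the claimed right-hand sides. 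Throughout one uses that $v_N*|\pht{t}|^2$ and $v_N$ are controlled via Young's inequality by $\|v\|_{L^1}$, $\|v\|_{L^2}$ and the $L^\infty$/$L^4$ norms of $\pht{t}$; the $N^{3\beta}$ factors arise precisely from the $L^2$-norm of $v_N$, since $\|v_N\|_{L^2}^2 = N^{3\beta}\|v\|_{L^2}^2$.

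More concretely: for \eqref{eq:oneq}, I would write $\widehat\mu^\lambda-\widehat\mu_1^\lambda = \widehat{r}^\lambda$ for a weight $r$ with $\|r\|_\infty\le 1/N^\lambda$, move it to act as $(\widehat r^\lambda)^{1/2}$ on each side, and bound $\|(\widehat r^\lambda)^{1/2}p_1p_2 U_{1,2}p_1 q_2\psi_t\|$ by $N^{-\lambda/2}$ times $\|U_{1,2}p_1q_2\psi_t\|$; since $p_1p_2$ contracts $x_1,x_2$ onto $\pht{t}$, the term $(N-1)v_N(x_1-x_2)$ contributes $\le (N-1)\|v_N*|\pht{t}|^2\|_\infty\|\pht{t}\|_\infty^{2}$ (up to the counting), and the mean-field terms cancel the leading $N$, leaving the stated $N^{-(1+\lambda)/2}\|v\|_{L^1}\|\pht{t}\|_\infty^2\sqrt{\alpha_N^\lambda}$ after identifying $\|q_2\psi_t\|^2$ with (a piece of) $\alpha_N^\lambda$. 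For \eqref{eq:twoqs} there are two $q$'s on the right, so no square root survives; here one must split $U_{1,2}$ into the genuinely short-range part $(N-1)v_N(x_1-x_2)$ and the mean-field part. The mean-field part is handled as before with an extra $q$; the short-range part is the delicate one: one writes $(N-1)v_N = (N-1)v_N^{1/2}\cdot v_N^{1/2}$ (using $v\ge0$ of positive type, or rather $|v|^{1/2}$ with a sign bookkeeping), puts one factor on each side of the inner product, and uses $\|v_N^{1/2}\|_{L^2}^2 = N^{3\beta}\|v\|_{L^1}$ together with $\|q_1\psi_t\|,\|q_2\psi_t\|$-type bounds; the $N^{3\beta-\lambda}/2$ additive error is exactly the residue of this Cauchy–Schwarz split when one cannot pair a weight-difference with both $q$'s. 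Finally \eqref{eq:threeqs} combines the two mechanisms: the weight difference gives $N^{-(1-\lambda)/2}$, the three $q$'s give a full $\alpha_N^\lambda$ (no square root), and the $N^{3\beta/2}+\|\pht{t}\|_\infty$ factor reflects whether one estimates $v_N(x_1-x_2)$ in $L^2$ (giving $N^{3\beta/2}$) or absorbs it into a convolution with $|\pht{t}|^2$ (giving $\|\pht{t}\|_\infty$).

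The main obstacle is the bookkeeping in \eqref{eq:twoqs}: one has to exploit the structure $U_{1,2} = (N-1)v_N(x_1-x_2) - N v_N*|\pht{t}|^2(x_1) - N v_N*|\pht{t}|^2(x_2)$ so that the $O(N)$ pieces cancel against each other when tested against $p_1p_2(\cdots)q_1q_2$, and the correct way to see this is to insert a resolution of identity $1 = p_1 + q_1$ (and similarly for particle $2$) inside the mean-field terms and track which combinations are compatible with the outer $p_1p_2\cdots q_1q_2$ structure after commuting the hatted weights — this is where \cref{app: projector} is essential. A secondary subtlety is that $v$ need not be nonnegative pointwise (only $\widehat v\ge 0$), so the "split $v_N$ into two square-root factors" step must be done with $|v_N|^{1/2}$ and the sign carried along, which only affects constants. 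Once these forms are bounded, the three estimates feed directly into \cref{lem:derivative_alpha} to close the Grönwall inequality in \cref{pro: Gronwall alpha}; I would carry out the three estimates in the order \eqref{eq:oneq}, \eqref{eq:threeqs}, \eqref{eq:twoqs}, leaving the hardest for last.
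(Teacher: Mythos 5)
Your general framework (Pickl's counting method, the $N^{-\lambda}$-Lipschitz bound on the weight $\mu^\lambda$, the operator estimates of \cref{app: projector}, and the identification of $N^{3\beta}$ with $\|v_N\|_{L^2}^2$) is the right one, and your treatment of \eqref{eq:oneq} and \eqref{eq:threeqs} matches the paper: for \eqref{eq:oneq} the identity $p_1 v_N(x_1-x_2)p_1=(v_N*|\pht{t}|^2)(x_2)p_1$ makes the $O(N)$ pieces cancel exactly, leaving $-p_1p_2(v_N*|\pht{t}|^2)(x_2)p_1q_2$, and for \eqref{eq:threeqs} one simply pays $\|U_{1,2}p_1\|\le CN\|\pht{t}\|_\infty(N^{3\beta/2}+\|\pht{t}\|_\infty)$ against the two weighted $q$-factors, exactly as you describe.

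The gap is in \eqref{eq:twoqs}, and it is twofold. First, your diagnosis of the difficulty is off: against $p_1p_2\cdots q_1q_2$ the mean-field pieces of $U_{1,2}$ do not cancel against the pair term --- they vanish identically (a function of $x_1$ alone is annihilated by $p_2\cdots q_2$, and vice versa), so $p_1p_2U_{1,2}q_1q_2=(N-1)p_1p_2v_N(x_1-x_2)q_1q_2$ survives at full strength $O(N)$ with nothing to cancel it. Second, your proposed remedy --- splitting $v_N=|v_N|^{1/2}\cdot|v_N|^{1/2}$ and Cauchy--Schwarzing one factor to each side --- fails: the factor $|v_N|^{1/2}(x_1-x_2)$ that lands on $q_1q_2(\widehat\mu-\widehat\mu_{-2})^{1/2}\psi_t$ has no adjacent projection $p$, so it can only be bounded in operator norm by $\|v_N\|_\infty^{1/2}\sim N^{3\beta/2}\|v\|_\infty^{1/2}$ (which is not even finite under \cref{asump: 1}), and even granting $v\in L^\infty$ the resulting bound is of order $N^{3\beta/2}\sqrt{\alpha_N^\lambda}$, i.e.\ an additive error $\sim N^{3\beta}$ after AM--GM rather than the needed $N^{3\beta-\lambda}$; this would destroy the Gr\"onwall closure in \cref{pro: Gronwall alpha}, which requires $\lambda>3\beta$ to win. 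The missing idea is the symmetrization step: using the bosonic symmetry of $\psi_t$, replace $(N-1)v_N(x_1-x_2)$ by $\sum_{j=2}^Nv_N(x_1-x_j)$ \emph{before} Cauchy--Schwarz, and then expand
\begin{equation*}
\Bigl\|\sum_{j=2}^N q_j\,v_N(x_1-x_j)\,p_1p_j\,(\widehat\mu_2-\widehat\mu)^{1/2}\psi_t\Bigr\|^2
\end{equation*}
into its $\sim N^2$ off-diagonal terms, each of which admits a $q$ that can be moved onto $\psi_t$ to produce $\alpha_N^\lambda/N$ (with the potential controlled by $\|p_2|v_N|p_2\|\le\|v\|_{L^1}\|\pht{t}\|_\infty^2$), plus the $N$ diagonal terms, each bounded crudely by $\|v_N(x_1-x_2)p_1p_2\|^2\le N^{3\beta}\|v\|_{L^2}^2\|\pht{t}\|_{L^4}^4$ times $\|(\widehat\mu_2-\widehat\mu)^{1/2}\|^2\le 2N^{-\lambda}$. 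It is only this diagonal contribution that produces the additive $N^{3\beta-\lambda}$ in \eqref{eq:twoqs}; without the symmetrization the estimate cannot be obtained by the route you outline. (A minor slip along the way: $\||v_N|^{1/2}\|_{L^2}^2=\|v\|_{L^1}$, not $N^{3\beta}\|v\|_{L^1}$; the $N^{3\beta}$ really does come only from $\|v_N\|_{L^2}^2$.)
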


\begin{proof}
In order to simplify the notation, during the proof we shall drop the labels $ \lambda $ and $ \pht{t} $ for short. 
	To start we note that for any function we can rewrite the operator $ p_1 v_N\left(x_1-x_2\right)p_1 $ in the following more convenient form:
	\begin{equation}\label{eq:p_convolves}
		p_1v_N\left(x_1-x_2\right)p_1=\left(
			|\pht{t}\rangle\langle\pht{t}|
		\right)_1
		v_N\left(x_1-x_2\right)
		\left(
			|\pht{t}\rangle\langle\pht{t}|
		\right)_1=\left(v_N*|\pht{t}|^2\right)\left(x_2\right) p_1.
	\end{equation}

	\noindent \underline{Proof of \eqref{eq:oneq}.} Using the fact $ p_jq_j=q_jp_j=0 $ for any $ 1\le j\le N $, by \eqref{eq:p_convolves}, we can rewrite 
	\begin{eqnarray}
		p_1p_2 U_{1,2}p_1q_2
		&=&p_1p_2\
		\left[
			\left(N-1\right)v_N\left(x_1-x_2\right)
			-N v_N*|\pht{t}|^2\left(x_1\right)
			-N v_N*|\pht{t}|^2\left(x_2\right)
		\right]
		p_1q_2
	\\
		&&-Np_1p_2\
		v_N*|\pht{t}|^2\left(x_2\right)
		p_1q_2\nonumber
	\\
		&=&-p_1p_2\
		v_N*|\pht{t}|^2\left(x_2\right)
		p_1q_2.\nonumber
	\end{eqnarray}
	As a consequence, using \eqref{itm:combinatorics} and \eqref{itm:commuting_hats}, we are able to bound the left-hand term in \eqref{eq:oneq} as
	\begin{align}
		\left|\Im\left(\left\langle
				\psi_t,
				\left(
					\widehat\mu_1
					-\widehat\mu
				\right)
				p_1p_2\
				v_N*|\pht{t}|^2\left(x_2\right)
				p_1q_2
				\psi_t
			\right\rangle\right)
		\right|
		&\le\left\|
			v_N*|\pht{t}|^2\left(x_2\right)
			p_2
		\right\|
		\left\|
			\left(
				\widehat\mu
				-\widehat\mu_{-1}
			\right)
			q_2
			\psi_t
		\right\|\nonumber
	\\
		&\le\left\|
			v_N*|\pht{t}|^2\left(x_2\right)
			p_2
		\right\|
		\left\|
			\left(
				\widehat\mu
				-\widehat\mu_{-1}
			\right)
			\widehat\nu
			\psi_t
		\right\|.
	\end{align}
We then bound separately these last two terms. To bound the first term, one can easily show that
		\begin{eqnarray}\label{eq: est 1}
			\left\|
				v_N*|\pht{t}|^2\left(x_2\right)
				p_2
			\right\|^2
			&\le&\langle
				\pht{t},
				\left(v_N*|\pht{t}|^2\left(x_2\right)\right)^2
				\pht{t}
			\rangle\le\left\|
				\pht{t}
			\right\|_{L^6\left(\mathbb R^3\right)}^2
			\left\|
				v_N*|\pht{t}|^2\left(x_2\right)
			\right\|_{L^3\left(\mathbb R^3\right)}^2
			\\
			&\le&\left\|
				\pht{t}
			\right\|_{L^6\left(\mathbb R^3\right)}^6
			\left\|
				v
			\right\|_{L^1\left(\mathbb R^3\right)}^2 \leq \left\|
			\pht{t}
		\right\|_{L^\infty\left(\mathbb R^3\right)}^{4}
		\left\|
			v
		\right\|_{L^1\left(\mathbb R^3\right)}^{2}.\nonumber 
		\end{eqnarray}
	
		To bound the second term we want to use the explicit definition of $ \mu $. In particular, we have that $|\mu(k) - \mu_{-1}(k)| \leq CN^{-\lambda} $. Moreover, for any $ k\le N^\lambda $, we have
		\begin{align}
			\left|
				\left(\mu\left(k\right)-\mu_{-1}\left(k\right)\right)^2
				\nu^2\left(k\right)
			\right|
			\le \frac k{N^{1+2\lambda}}
			=\frac1{N^{1+\lambda}}\mu\left(k\right).
		\end{align}
		Given that $\mu(k) - \mu_{-1}(k)  = 0$ if $ k \geq N^\lambda + 1$, we get that 
		\begin{equation}
		\label{eq:simple_mu_bound}
			\left\|
				\left(
					\widehat\mu
					-\widehat\mu_{-1}
				\right)
				\widehat\nu
				\psi_t
			\right\|
			=\left\langle
				\psi_t,
				\left(
					\widehat\mu
					-\widehat\mu_{-1}
				\right)^2
				\widehat\nu^2
				\psi_t
			\right\rangle^\frac12
		\le\frac1{N^\frac{1+\lambda}2}
			\left\langle
				\psi_t,
				\widehat\mu
				\psi_t
			\right\rangle^\frac12
			=\frac1{N^\frac{1+\lambda}2}
			\sqrt{\alpha_N^\lambda\left(\psi_t,\pht{t}\right)}.
		\end{equation}
	The estimates \eqref{eq: est 1} and \eqref{eq:simple_mu_bound} imply \eqref{eq:oneq}.\\
	\noindent{\underline{Proof of \eqref{eq:twoqs}}.} To prove \eqref{eq:twoqs}, we first look directly at the operator $ p_1p_2U_{1,2}q_1q_2 $. We have 
	\begin{equation}
		p_1p_2U_{1,2}q_1q_2
		=\left(N-1\right)p_1p_2v_N\left(x_1-x_2\right)q_1q_2.
	\end{equation}
We act similarly as before. In order to do so, notice first that the function $ \mu_2-\mu $ is positive and corresponds still to a weight, so we can consider the square root of the operator $ \widehat\mu_2-\widehat\mu $ and using \eqref{itm:commuting_hats}, we get
	\begin{multline}
	\left|
	\Im\left(\left\langle
		\psi_t,
		\left(
			\widehat\mu_2^{\lambda,\pht{t}}
			-\widehat\mu^{\lambda,\pht{t}}
		\right)
		p_1^\pht{t} p_2^\pht{t}
		U_{1,2}
		q_1^\pht{t} q_2^\pht{t}
		\psi_t
	\right\rangle\right)
\right|
		\\
		\le\left(N-1\right)
		\left|
			\left\langle
				\psi_t,
				\left(\widehat\mu_2-\widehat\mu\right)^\frac12
				p_1p_2
				v_N\left(x_1-x_2\right)
				q_1q_2
				\left(\widehat\mu-\widehat\mu_{-2}\right)^\frac12
				\psi_t
			\right\rangle
		\right|.\nonumber
	\end{multline}
	We now use that $ v_N\left(x_1-x_2\right) $ is nonzero only in a small region where $ x_1\approx x_2 $, because of the quick decay of $ v_N $. 
	To exploit this fact is then convenient to symmetrize $ \left(N-1\right)v_N\left(x_1-x_2\right) $ replacing it with $ \sum_{k=2}^Nv_N\left(x_1-x_k\right) $ to get, using Cauchy-Schwarz, 
	\begin{multline}\label{eq: start}
		\left(N-1\right)
		\left|
			\left\langle
				\psi_t,
				\left(\widehat\mu_2-\widehat\mu\right)^\frac12
				p_1p_2
				v_N\left(x_1-x_2\right)
				q_1q_2
				\left(\widehat\mu-\widehat\mu_{-2}\right)^\frac12
				\psi_t
			\right\rangle
		\right|
	\\ \le\left\|
			\sum_{j=2}^N
			q_j
			v_N\left(x_1-x_j\right)
			p_1p_j
			\left(\widehat\mu_2-\widehat\mu\right)^{1/2}\psi_t
		\right\|
		\left\|
			\left(\widehat\mu-\widehat\mu_{-2}\right)^{1/2}q_1\psi_t
		\right\|.
	\end{multline}
	We now bound those terms separately.
For the second term, we can again write explicitly the difference $ \widehat\mu-\widehat\mu_{-2} $: 
		\begin{equation}
			\widehat\mu\left(k\right)
			-\widehat\mu_{-2}\left(k\right)
			=\left\{
				\begin{array}{ll}
					\frac2{N^\lambda} & \mbox{for }k\le N^\lambda, \\
					\frac{N^\lambda+2-k }{ N^\lambda} & \mbox{for }N^\lambda<k\le N^\lambda+2, \\
					0 & \mbox{otherwise,}
				\end{array}
			\right.
		\end{equation}
and by \eqref{itm:combinatorics} we then get
		\begin{equation}
		\label{eq:second_derivative_mu}
			\left\|
				\left(\widehat\mu-\widehat\mu_{-2}\right)^{1/2}q_1\psi_t
			\right\|
			=\left\langle
				\psi_t,
				\left(\widehat\mu-\widehat\mu_{-2}\right)
				\widehat\nu^2
				\psi_t
			\right\rangle^\frac12
		\le\sqrt\frac2N
			\left\langle
				\psi_t,
				\widehat\mu
				\psi_t
			\right\rangle^\frac12
			=\sqrt\frac2N\sqrt{\alpha_N^\lambda\left(\psi_t,\pht{t}\right)}.
		\end{equation}
We now take into account the first term in the r.h.s of \eqref{eq: start}. Explicitly, it is 
		\begin{multline}\label{eq: eq 1a}
			\left\|\sum_{j=2}^Nq_jv_N\left(x_1-x_j\right)p_1p_j\left(\widehat\mu_2-\widehat\mu\right)^{1/2}\psi_t	\right\|^2
			\\
			=\sum_{j=2}^N\sum_{\ell=2}^N\left\langle\psi_t,\left(\widehat\mu_2-\widehat\mu\right)^{1/2}p_1p_jv_N\left(x_1-x_j\right)q_jq_\ell v_N\left(x_1-x_\ell\right)p_1p_\ell\left(\widehat\mu_2-\widehat\mu\right)^{1/2}\psi_t\right\rangle, 
		\end{multline}
	and now we bound it as a sum of two different terms, one in which $j= \ell$ and one for the case $j\neq \ell$.
		For the case $j \neq \ell$, by \eqref{itm:onep} and \eqref{itm:twops}, we can write
		\begin{multline}
			N^2\left\langle
				\psi_t,
				\left(\widehat\mu_2-\widehat\mu\right)^{1/2}
				p_1p_2
				v_N\left(x_1-x_2\right)
				q_2
				q_3
				v_N\left(x_1-x_3\right)
				p_1p_3
				\left(\widehat\mu_2-\widehat\mu\right)^{1/2}
				\psi_t
			\right\rangle
		\\
		\le N^2\left\|
				\sqrt{\left|
					v_N\left(x_1-x_2\right)
				\right|}
				\sqrt{\left|
					v_N\left(x_1-x_3\right)
				\right|}
				p_1p_3
				\left(\widehat\mu_2-\widehat\mu\right)^{1/2}
				q_2
				\psi_t
			\right\|^2\nonumber
		\\
		\le N^2
			\left\|
				\sqrt{\left|
					v_N\left(x_1-x_2\right)
				\right|}
				p_2
			\right\|^4
			\left\|
				\left(\widehat\mu_2-\widehat\mu\right)^{1/2}
				q_2
				\psi_t
			\right\|^2.
		\end{multline}
Moreover, by \eqref{itm:twops}, we get
		\begin{equation}
			\left\|
				\sqrt{\left|
					v_N\left(x_1-x_2\right)
				\right|}
				p_2
			\right\|^4
			=\left\|
				p_2
				\left|
					v_N\left(x_1-x_2\right)
				\right|
				p_2
			\right\|^2
		\leq \left\|\pht{t}\right\|_{L^\infty\left(\mathbb R^3\right)}^4
			\left\|v\right\|_{L^1\left(\mathbb R^3\right)}^2,\nonumber
		\end{equation}
		and on the other hand proceeding similarly as in \eqref{eq:second_derivative_mu} we can estimate
		\begin{equation}
			\left\|
				\left(\widehat\mu_2-\widehat\mu\right)^{1/2}
				q_2
				\psi_t
			\right\|^2
			\le\frac2N\alpha_N^\lambda\left(\psi_t,\pht{t}\right).
		\end{equation}
		We then get
		\begin{multline}\label{eq: eq 1}
			\left|N^2\left\langle
				\psi_t,
				\left(\widehat\mu_2-\widehat\mu\right)^{1/2}
				p_1p_2
				v_N\left(x_1-x_2\right)
				q_2
				q_3
				v_N\left(x_1-x_3\right)
				p_1p_3
				\left(\widehat\mu_2-\widehat\mu\right)^{1/2}
				\psi_t
			\right\rangle\right|
			\\
		\le2N\left\|v\right\|_{L^1\left(\mathbb R^3\right)}^2
			\left\|\pht{t}\right\|_{L^\infty\left(\mathbb R^3\right)}^4
			\alpha_N^\lambda\left(\psi_t,\pht{t}\right).
		\end{multline}
We then look at the second term, i.e., the case $j=\ell$, in the right hand side of \eqref{eq: eq 1a}: for this term we cannot bring any $ q $ close to the state $ \psi_t $, and is therefore not possible to exploit the state $ \psi_t $. We then bound every operator in it directly in norm using \eqref{itm:fourps}, we get 
		\begin{multline}\label{eq: eq 2}
			N\left\langle
				\psi_t,
				\left(\widehat\mu_2-\widehat\mu\right)^{1/2}
				p_1p_2
				v_N\left(x_1-x_2\right)
				q_2
				v_N\left(x_1-x_2\right)
				p_1p_2
				\left(\widehat\mu_2-\widehat\mu\right)^{1/2}
				\psi_t
			\right\rangle
		\\
			\le N\left\|
				v_N\left(x_1-x_2\right)p_1p_2
			\right\|^2
			\left\|
				\left(
					\widehat\mu_2
					-\widehat\mu
				\right)^\frac12
			\right\|^2
			\leq 2N^{1+3\beta-\lambda}
			\left\|\pht{t}\right\|_{L^4\left(\mathbb R^3\right)}^4
			\left\|v\right\|_{L^2\left(\mathbb R^3\right)}^2.
		\end{multline}
Combining \eqref{eq: eq 1} and \eqref{eq: eq 2}, we can conclude that 
		\begin{multline}\label{eq: eq 3}
			\left\|
				\sum_{j=2}^N
				q_j
				v_N\left(x_1-x_j\right)
				p_1p_j
				\left(\widehat\mu_2-\widehat\mu\right)^{1/2}\psi_t
			\right\|
		\\
			\qquad
			\le\sqrt{
				2N\left\|v\right\|_{L^1\left(\mathbb R\right)}^2
				\left\|\pht{t}\right\|_{L^\infty\left(\mathbb R^3\right)}^4
				\alpha_N^\lambda\left(\psi_t,\pht{t}\right)
				+2N^{1+3\beta-\lambda}
				\left\|\pht{t}\right\|_{L^4\left(\mathbb R^3\right)}^4
				\left\|v\right\|_{L^2\left(\mathbb R^3\right)}^2.
			}
		\end{multline}
Moreover, the estimates \eqref{eq:second_derivative_mu} and \eqref{eq: eq 3}, imply \eqref{eq:twoqs}.\\
	\noindent\underline{Proof of \eqref{eq:threeqs}
	}. 
	We start the proof of \eqref{eq:threeqs} by writing 
	\begin{align}
	\nonumber
	\left|
	\Im\left(\left\langle
		\psi_t,
		\left(
			\widehat\mu_1^{\lambda,\pht{t}}
			-\widehat\mu^{\lambda,\pht{t}}
		\right)
		p_1^\pht{t} q_2^\pht{t}
		U_{1,2}
		q_1^\pht{t} q_2^\pht{t}
		\psi_t
	\right\rangle\right)
\right|
	&\le\left\|
			\left(\widehat\mu_1-\widehat\mu\right)^\frac12
			q_2\psi_t
		\right\|
		\left\| U_{1,2}p_1\right\|
		\left\|
			\left(\widehat\mu-\widehat\mu_{-1}\right)^{1/2}
			q_1q_2
			\psi_t
		\right\|\nonumber
	\\
	\label{eq:third_term_almost_finished}
		&\le\sqrt{\frac N{N-1}}
		\left\|
			\left(\widehat\mu_1-\widehat\mu\right)^{\frac{1}{2}}\widehat\nu\psi_t
		\right\|
		\left\| U_{1,2}p_1\right\|
		\left\|
			\left(\widehat\mu-\widehat\mu_{-1}\right)^{\frac{1}{2}}\widehat\nu^2
			\psi_t
		\right\|.
	\end{align}
	Proceeding as above we can estimate directly $ \left(\widehat\mu_1-\widehat\mu\right)\widehat\nu^2\le\frac1N\widehat\mu $ and $ \left(\widehat\mu-\widehat\mu_{-1}\right)\widehat\nu^4\le\frac1{N^{2-\lambda}}\widehat\mu $.
	We now bound the potential term. We have, using \eqref{itm:onep}, that
	\begin{equation}\label{eq: bound U12p1}
		\left\|U_{1,2}p_1\right\|
		\le N\left\|\pht{t}\right\|_{L^\infty\left(\mathbb R^3\right)}
		\max\left\{
			\left\|v\right\|_{L^1\left(\mathbb R^3\right)},
			\left\|v\right\|_{L^2\left(\mathbb R^3\right)}
		\right\}
		\left(
			N^{\frac{3\beta}2}
			+\left\|\pht{t}\right\|_{L^\infty\left(\mathbb R^3\right)}
		\right).
	\end{equation}
	Inserting \eqref{eq: bound U12p1} in \eqref{eq:third_term_almost_finished}, we get \eqref{eq:threeqs}.
\end{proof}

Combining Lemma \ref{lem:derivative_alpha} and \cref{pro:terms_of_alpha}, we can prove \cref{pro: Gronwall alpha}.

\begin{proof}[Proof of \cref{pro: Gronwall alpha}]
	
	As stated before, the key idea of the proof is to exploit a Grönwall-type argument. In order to do that, we use Lemma \ref{lem:derivative_alpha} and Proposition \ref{pro:terms_of_alpha} to write 
	\begin{align}
		\partial_t\alpha_N^\lambda\left(\psi_t,\pht{t}\right)
		&=\Gamma_N^\lambda\left(\psi_t,\pht{t}\right)\nonumber
	\\
		&\leq C_vg_N\left(\|\varphi_t^{\mathrm{H}}\|_{L^{\infty}(\mathbb{R}^3)}\left[1+ \frac{1 }{ N^{\frac{1-\lambda-3\beta}{2}}}\right] + \|\varphi_t^{\mathrm{H}}\|_{L^{\infty}(\mathbb{R}^3)}^2 \left[ 1 + \frac{1 }{ N^{\frac{1-\lambda }{ 2}}}\right] \right)\alpha^{\lambda}_N(\psi_y, \varphi^{\mathrm{H}}_t)\nonumber
	\\
		&\quad +C_vg_N
		\frac{\left\|
		\pht{t}
	\right\|_{L^\infty\left(\mathbb R^3\right)}^2}{ N^\frac{1+\lambda}2}
		\sqrt{\alpha_N^\lambda\left(\psi_t,\pht{t}\right)} +  \frac{C_vg_N }{ N^{\lambda-3\beta}} \left(		\left\|\pht{t}\right\|_{L^\infty\left(\mathbb R^3\right)} + 
			\left\|\pht{t}\right\|_{L^\infty\left(\mathbb R^3\right)}^2\right),
	\end{align}
	where $ C_v $ is a constant that depends only on the $ L^1 $ and $ L^2 $ norms of $ v $.
We can now do Cauchy-Schwarz in the next to the last term and discard some terms which are subleading for any $\lambda\in (0, 1- 3\beta)$, to get
\begin{eqnarray}
	\partial_t\alpha_N^\lambda\left(\psi_t,\pht{t}\right) &\leq& C_v g_N\left(\|\varphi_t^{\mathrm{H}}\|_{L^{\infty}(\mathbb{R}^3)} + \|\varphi_t^{\mathrm{H}}\|_{L^{\infty}(\mathbb{R}^3)}^2\right)\alpha^{\lambda}_N(\psi_y, \varphi^{\mathrm{H}}_t) \nonumber
	\\
	&&+ \frac{C_v g_N }{ N^{1+\lambda}} \|\varphi_t^{\mathrm{H}}\|_{L^\infty(\mathbb{R}^3)}^2 + \frac{C_v g_N }{ N^{\lambda-3\beta}} \left(		\left\|\pht{t}\right\|_{L^\infty\left(\mathbb R^3\right)} + 
	\left\|\pht{t}\right\|_{L^\infty\left(\mathbb R^3\right)}^2\right).
\end{eqnarray}
Being $\lambda \in (3\beta, 1-3\beta)$, we get
\begin{equation}
	\partial_t\alpha_N^\lambda\left(\psi_t,\pht{t}\right) \leq C_v g_N\left(\|\varphi_t^{\mathrm{H}}\|_{L^{\infty}(\mathbb{R}^3)} + \|\varphi_t^{\mathrm{H}}\|_{L^{\infty}(\mathbb{R}^3)}^2\right)\left( \alpha^{\lambda}_N(\psi_y, \varphi^{\mathrm{H}}_t) + \frac{1 }{ N^{\lambda -3\beta}} \right)
\end{equation}
To bound $\|\varphi^{\mathrm{H}}_t\|_{L^\infty(\mathbb{R}^3)}$, we use \cref{pro: H2 norm}. We find
\begin{equation}\label{eq: def C(phi0)}
	\|\varphi^{\mathrm{H}}_t\|_{L^\infty(\mathbb{R}^3)} \leq \|\varphi^{\mathrm{H}}_t\|_{H^2(\mathbb{R}^3)} 
	\leq\left\|
		\varphi_0
	\right\|_{H^2\left(\mathbb R^3\right)}
	e^{Cg_N^2\left([\mathcal{E}^{\mathrm{free}}_{\mathrm{GP}}(\varphi_0)]^2 + g_N^2N^{-2\beta}\|\varphi_0\|_{L^\infty(\mathbb{R}^3)}^4\right)\left|t\right|} = C_N(\varphi_0,t).
\end{equation}

	By Grönwall, we can then conclude that 
	\begin{equation}
		\alpha_N^\lambda\left(\psi_t,\pht{t}\right)
		\le\left(
			\alpha_N^\lambda\left(\psi_0,\varphi_0\right)
			+\frac1{N^{\lambda-3\beta}}
		\right)
		e^{C_{v}(C_N(\varphi_0,t) + C_N(\varphi_0,t)^2)g_N\left|t\right|}.
	\end{equation}
	
\end{proof}

\subsubsection{Proof of \cref{pro: from MB to H}} The proof of Proposition \ref{pro: from MB to H} is split in two parts:
\begin{itemize}
\item First, we prove that 
\begin{equation}\label{eq: second step}
	\left\||\varphi^{\mathrm{H}}_t\rangle \langle \varphi^{\mathrm{H}}_t| - \gamma_{\psi_t}^{(1)}\right\|^2 \leq 2\alpha^{\lambda}_N(\psi_t, \varphi^{\mathrm{H}}_t).
\end{equation}
\item Secondly, we prove that for some $\lambda\in (0,1)$, for any $\psi\in L^{2}_s(\mathbb{R}^{3N})$ with $\| \psi\|_{ L^{2}_s(\mathbb{R}^{3N})}=1$ and for any normalized $\varphi\in L^2(\mathbb{R}^3)$, one has
\begin{equation}\label{eq: first step}
	\alpha^{\lambda}_N(\psi, \varphi) \leq N^{1-\lambda}\left\| |\varphi\rangle \langle\varphi| -\gamma_{\psi}^{(1)}\right\|.
\end{equation}
\end{itemize}
Combining the estimates in \eqref{eq: second step} and in \eqref{eq: first step} with the result in \cref{pro: Gronwall alpha} we get the result.
\begin{proof}[Proof of \cref{pro: from MB to H}]
We start by proving \eqref{eq: second step}. Being $|\pht{t}\rangle\langle\pht{t}|-\gamma_{\psi_t}^{\left(1\right)}\in \mathfrak S_2\left(L^2\left(\mathbb R^3\right)\right) $, where with $\mathfrak S_2$ we denote the $p$ Schatten space with $p=2$, we have, 
		\begin{align}
			\left\|
				|\pht{t}\rangle\langle\pht{t}|-\gamma_{\psi_t}^{\left(1\right)}
			\right\|
			\le\left\|
				|\pht{t}\rangle\langle\pht{t}|-\gamma_{\psi_t}^{\left(1\right)}
			\right\|_{\mathfrak S_2\left(L^2\left(\mathbb R^3\right)\right)}.
		\end{align}
	Moreover, using \eqref{itm:combinatorics}, we can write 
		\begin{eqnarray}
			\left\|
				|\pht{t}\rangle\langle\pht{t}|-\gamma_{\psi_t}^{\left(1\right)}
			\right\|_{\mathfrak S_2\left(L^2\left(\mathbb R^3\right)\right)}^2
			&\le &2\left(
				1
				-\tr\left[
					|\pht{t}\rangle\langle\pht{t}|
					\gamma_{\psi_t}^{\left(1\right)}
				\right]
			\right)
			=2\left(
				1
				-\left\|p_1^\pht{t}\psi_t\right\|_{L_\mathrm s^2\left(\mathbb R^{3N}\right)}^2
			\right)
			\\
			&=&2\left\|q_1^\pht{t}\psi_t\right\|_{L_\mathrm s^2\left(\mathbb R^{3N}\right)}^2
			=2\left\|\widehat\nu^\pht{t}\psi_t\right\|_{L_\mathrm s^2\left(\mathbb R^{3N}\right)}^2.\nonumber
		\end{eqnarray}
	Given that $ \nu^2\le\mu^\lambda $ we can write
	\begin{equation}\label{eq: est proj MB to H}
		\left\|
			|\pht{t}\rangle\langle\pht{t}|-\gamma_{\psi_t}^{\left(1\right)}
		\right\|
		\le\sqrt2\left\|\widehat\nu^\pht{t}\psi_t\right\|_{L_\mathrm s^2\left(\mathbb R^{3N}\right)}
	\le\sqrt{2\alpha_N^\lambda\left(\psi_t,\pht{t}\right)}.
	\end{equation}
Consider now $ \alpha_N^\lambda\left(\psi,\varphi\right) = \langle \psi \hat{\mu}^{\lambda, \varphi}\psi\rangle  $, we have 
\begin{equation}
	\alpha_N^\lambda\left(\psi,\varphi\right)
	\le N^{1-\lambda}\left\langle
		\psi,
		\left(\widehat\nu^\varphi\right)^2
		\psi
	\right\rangle
	=N^{1-\lambda}\left\|
		q_1^\varphi\psi
	\right\|_{L_\mathrm s^2\left(\mathbb R^{3N}\right)}^2
	\le N^{1-\lambda}\left\|
		|\varphi\rangle\langle\varphi|
		-\gamma_\psi^{\left(1\right)}
	\right\|,
\end{equation}
where we used that $\mu^\lambda (k) \leq N^{1-\lambda}\nu^2$ and where the last inequality follows from the definition of $q_1^\varphi$. If we now apply \eqref{eq: first step} to $\psi_0\in L^{2}(\mathbb{R}^{3N})$ and $\varphi_0\in L^2(\mathbb{R}^3)$, by \eqref{eq: second step} and \cref{pro: Gronwall alpha}, we get
\begin{align}\label{eq: est alpha MB to H}
	\left\||\varphi^{\mathrm{H}}_t\rangle \langle \varphi^{\mathrm{H}}_t| - \gamma_{\psi_t}^{(1)}\right\|^2
	\le 2\left(
		N^{1-\lambda}\left\||\varphi_0\rangle\langle\varphi_0|-\gamma_{\psi_0}^{\left(1\right)}\right\|
		+\frac1{N^{\lambda-3\beta}}
	\right)e^{C_{v}(C_N(\varphi_0,t) + C_N(\varphi_0,t)^2)g_N\left|t\right|}.
\end{align}
\end{proof} 

\subsection{From Hartree to Gross-Pitaevskii}
The main goal of this section is to prove the following proposition.
\begin{pro}[From Hartree to Gross-Pitaevskii]\label{pro: from H to GP} Let $\beta\in (0,1/3)$. Let $ \varphi_t^{\mathrm{GP}} $ and  $ \pht{t} $ be the normalized solution of \eqref{eq:GP} and \eqref{eq:Hartree}, respectively, both with initial data  $ \varphi_0\in L^2(\mathbb{R}^3) $
Suppose that $ v $ satisfies Assumption \ref{asump: 1}. We have
\begin{equation}\label{eq: from H to GP}
	\|\varphi^{\mathrm{GP}}_t - \varphi^{\mathrm{H}}_t\|_{L^2(\mathbb{R}^3)}
	\leq C\sqrt{g_N}\frac{1+ [\mathcal{E}^{\mathrm{free}}_{\mathrm{GP}}(\varphi_0)]^2 + g_NN^{-\beta}\|\varphi_0\|_{L^\infty}^2 }{ N^\frac\beta2}e^{C_v C_N(\varphi_0,t)^2g_N|t|},
\end{equation}
where $C_v$ is a constant which depends only on $v$ and $C_N(\varphi_0,t)$ is given by 
\begin{equation}
	C_N(\varphi_0,t)
	=\left\|
		\varphi_0
	\right\|_{H^2\left(\mathbb R^3\right)}
	e^{Cg_N^2\left([\mathcal{E}^{\mathrm{free}}_{\mathrm{GP}}(\varphi_0)]^2 + g_N^2N^{-2\beta}\|\varphi_0\|_{L^\infty(\mathbb{R}^3)}^4\right)\left|t\right|}
\end{equation}

\end{pro}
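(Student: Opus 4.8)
The plan is to run a Grönwall argument on $\|\pgpt{t}-\pht{t}\|_{L^2(\mathbb R^3)}$, exploiting that the two flows share the initial datum $\varphi_0$, so that the whole estimate comes from the mismatch between the contact nonlinearity in \eqref{eq:GP} and the scaled potential in \eqref{eq:Hartree}. Set $w_t:=\pgpt{t}-\pht{t}$, so $w_0=0$. Subtracting \eqref{eq:GP} from \eqref{eq:Hartree} gives
\[
 i\partial_t w_t=-\Delta w_t+g_N\big[(\smallint v)|\pgpt{t}|^2\pgpt{t}-v_N*|\pht{t}|^2\,\pht{t}\big].
\]
Since $\langle w_t,-\Delta w_t\rangle=\|\nabla w_t\|_2^2$ is real, the kinetic part drops when differentiating the $L^2$-norm, and one gets
\[
 \tfrac{d}{dt}\|w_t\|_2^2=2g_N\,\Im\big\langle w_t,(\smallint v)|\pgpt{t}|^2\pgpt{t}-v_N*|\pht{t}|^2\pht{t}\big\rangle
 \le 2g_N\|w_t\|_2\,\big\|(\smallint v)|\pgpt{t}|^2\pgpt{t}-v_N*|\pht{t}|^2\pht{t}\big\|_2,
\]
hence $\tfrac{d}{dt}\|w_t\|_2\le g_N\big\|(\smallint v)|\pgpt{t}|^2\pgpt{t}-v_N*|\pht{t}|^2\pht{t}\big\|_2$.

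I would then split the right-hand side into a Lipschitz term and a source term, writing the bracket as $(\smallint v)\big(|\pgpt{t}|^2\pgpt{t}-|\pht{t}|^2\pht{t}\big)+\big((\smallint v)|\pht{t}|^2-v_N*|\pht{t}|^2\big)\pht{t}$. For the first term the elementary pointwise bound $\big||a|^2a-|b|^2b\big|\le C(|a|^2+|b|^2)|a-b|$ gives an $L^2$-bound $C(\smallint v)\big(\|\pgpt{t}\|_\infty^2+\|\pht{t}\|_\infty^2\big)\|w_t\|_2$, and by the Sobolev embedding $H^2(\mathbb R^3)\hookrightarrow L^\infty(\mathbb R^3)$ together with the $H^2$-growth bounds of \cref{pro: H2 norm} this is $\le C\,C_N(\varphi_0,t)^2\|w_t\|_2$; this is precisely the term producing the exponential $e^{C_vC_N(\varphi_0,t)^2g_N|t|}$. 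For the source term I would repeat the argument of \cref{lem: H vs GP}, but applied to $\pht{t}$ instead of $\varphi^{\mathrm{GP}}$ and keeping the $L^2$ (rather than the $L^\infty$) norm: expressing $|\pht{t}(x-N^{-\beta}y)|^2-|\pht{t}(x)|^2$ as an integral of its gradient, using $|x|v\in L^1(\mathbb R^3)$ from \cref{asump: 1} and the translation-invariance of the $L^2$-norm, one obtains
\[
 \big\|\big(v_N*|\pht{t}|^2-(\smallint v)|\pht{t}|^2\big)\pht{t}\big\|_2\le C\,N^{-\beta}\,\| |x|v\|_{L^1}\,\|\pht{t}\|_\infty^2\,\|\nabla\pht{t}\|_2 .
\]
Here \cref{pro: H2 norm} again controls the two factors: $\|\nabla\pht{t}\|_2$ by $\mathcal{E}^{\mathrm{free}}_{\mathrm{GP}}(\varphi_0)+g_NN^{-\beta}\|\varphi_0\|_\infty^2$, and $\|\pht{t}\|_\infty$ by $C_N(\varphi_0,t)$.

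Combining the two estimates yields
\[
 \tfrac{d}{dt}\|w_t\|_2\le C_vg_NC_N(\varphi_0,t)^2\|w_t\|_2+C_vg_NN^{-\beta}\big(1+[\mathcal{E}^{\mathrm{free}}_{\mathrm{GP}}(\varphi_0)]^2+g_NN^{-\beta}\|\varphi_0\|_\infty^2\big)C_N(\varphi_0,t)^2 ,
\]
and, since $C_N(\varphi_0,s)$ is nondecreasing in $|s|$, Grönwall's lemma together with $w_0=0$ gives a bound of the form $\|w_t\|_2\le C_vg_NN^{-\beta}(\cdots)\,C_N(\varphi_0,t)^2|t|\,e^{C_vg_NC_N(\varphi_0,t)^2|t|}$; absorbing the algebraic prefactor $g_NC_N(\varphi_0,t)^2|t|$ into the exponential (via $xe^{x}\le e^{2x}$) produces \eqref{eq: from H to GP}. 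The main obstacle is the bookkeeping of the $N$- and $g_N$-dependence in the source term: one must check that the $N^{-\beta}$ gain genuinely survives multiplication by $g_N$ and by the large, $N$-dependent Sobolev norms of $\pht{t}$, and that every large factor is funnelled either into $C_N(\varphi_0,t)$ or into the stated prefactor. This is exactly where \cref{pro: H2 norm} is used essentially, and where $\beta<1/3$ enters, through the quantitative effectiveness of $v_N\to(\smallint v)\delta_0$.
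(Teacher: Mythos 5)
Your proposal follows essentially the same route as the paper: a Grönwall argument on the $L^2$ distance, with the nonlinearity split into a Lipschitz part controlled through the $L^\infty$ (hence $H^2$) bounds of \cref{pro: H2 norm} and a source term $\left((\smallint v)|\pht{t}|^2-v_N*|\pht{t}|^2\right)\pht{t}$ estimated by the Taylor/convolution argument of \cref{lem: H vs GP}, which yields the $N^{-\beta}$ gain. The only difference is that the paper runs Grönwall on the \emph{squared} norm and takes a square root at the end, which is why the stated prefactor is $\sqrt{g_N}N^{-\beta/2}$ rather than the $g_NN^{-\beta}$ your unsquared version produces; your bound is at least as strong whenever $g_NN^{-\beta}\le 1$, which is the only regime in which the estimate is nontrivial.
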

\begin{proof} We prove \eqref{eq: from H to GP} using Grönwall estimate. To do that we want to bound the following quantity
	\begin{eqnarray}
		\partial_t\|\varphi^{\mathrm{GP}}_t - \varphi^{\mathrm{H}}_t\|_{L^2(\mathbb{R}^3)}^2 = 2\mathrm{Re}\,\langle \varphi^{\mathrm{GP}}_t - \varphi^{\mathrm{H}}_t, \partial_t(\varphi^{\mathrm{GP}}_t - \varphi^{\mathrm{H}}_t)\rangle 
		&=&  2g_N(\smallint v)\Im\, \langle \varphi^{\mathrm{H}}_t, (|\varphi^{\mathrm{H}}_t|^2 - |\varphi^{\mathrm{GP}}_t|^2)\varphi_t^{\mathrm{GP}}\rangle \nonumber\\
		&& +\, 2g_N(\smallint v)\Im\, \langle \varphi^{\mathrm{GP}}_t, (|\varphi^{\mathrm{H}}_t|^2 - v_N\ast|\varphi^{\mathrm{H}}_t|^2)\varphi^{\mathrm{H}}_t\rangle\nonumber.
	\end{eqnarray}
We now estimate the two quantities above separately. For the first one, we have
\begin{eqnarray}
	\left |  2g_N\Im\, \langle \varphi^{\mathrm{H}}_t, (|\varphi^{\mathrm{H}}_t|^2-|\varphi^{\mathrm{GP}}_t|^2 )\varphi_t^{\mathrm{GP}}\rangle  \right| &\leq& C g_N \|\varphi^{\mathrm{GP}}_t\|_{L^\infty(\mathbb{R}^3)} (\|\varphi^{\mathrm{GP}}_t\|_{L^\infty(\mathbb{R}^3)} + \|\varphi^{\mathrm{H}}_t\|_{L^\infty(\mathbb{R}^3)}) \|\varphi^{\mathrm{GP}}_t - \varphi^{\mathrm{H}}_t\|_{L^2(\mathbb{R}^3)}^2\nonumber
	\\
	&\leq& C g_N (\|\varphi^{\mathrm{GP}}_t\|_{L^\infty(\mathbb{R}^3)}^2 + \|\varphi^{\mathrm{H}}_t\|_{L^\infty(\mathbb{R}^3)}^2) \|\varphi^{\mathrm{GP}}_t - \varphi^{\mathrm{H}}_t\|_{L^2(\mathbb{R}^3)}^2.
\end{eqnarray}

We now estimate the second one, we have 
\begin{eqnarray}
	| 2g_N \Im\,\hspace{-0.7cm}&&\langle \varphi^{\mathrm{GP}}_t, (|\varphi^{\mathrm{H}}_t|^2 - v_N\ast|\varphi^{\mathrm{H}}_t|^2)\varphi^{\mathrm{H}}_t\rangle|
	\\
	&&\leq 2g_N \int_{\mathbb{R}^3}dx\, |\varphi^{\mathrm{GP}}_t(x)||\varphi^{\mathrm{H}}_t(x)|\int_{\mathbb{R}^3}dy\, v(y) \left(|\varphi^{\mathrm{H}}_t(x)|^2 - \left|\varphi^{\mathrm{H}}_t\left(x-\frac{y }{ N^\beta}\right)\right|^2\right)\nonumber
	\\
	&&= 2g_N \int_{\mathbb{R}^3}dx\int_{\mathbb{R}^3}dy\, v(y)|\varphi^{\mathrm{GP}}_t(x)||\varphi^{\mathrm{H}}_t(x)|\int_0^1 d\lambda\frac{d }{ d\lambda}\left|\varphi^{\mathrm{H}}_t\left(x-\frac{\lambda y }{ N^\beta}\right)\right|^2 \nonumber
	\\
	&&\leq 4\frac{g_N }{ N^\beta} \int_{\mathbb{R}^3}dx\int_{\mathbb{R}^3}dy\,\int_0^1 d\lambda\,|y|v(y)|\varphi^{\mathrm{GP}}_t(x)||\varphi^{\mathrm{H}}_t(x)|\left|\varphi^{\mathrm{H}}_t\left(x -\frac{\lambda y }{ N^\beta}\right)\right|\left|\nabla\varphi^{\mathrm{H}}_t\left(x -\frac{\lambda y }{ N^\beta}\right)\right|\nonumber
	\\
	&&\leq C\frac{g_N }{ N^\beta} \|xv\|_{L^1(\mathbb{R}^3)}\|\varphi^{\mathrm{H}}_t\|_{L^\infty(\mathbb{R}^3)}^2 (1+\|\nabla\varphi^{\mathrm{H}}_t\|_{L^2(\mathbb{R}^3)}^2).
\end{eqnarray}
Then we find
\begin{equation}
	\partial_t\|\varphi^{\mathrm{GP}}_t - \varphi^{\mathrm{H}}_t\|_{L^2(\mathbb{R}^3)}^2 \leq  Cg_N \left(\|\varphi^{\mathrm{GP}}_t\|_{L^\infty(\mathbb{R}^3)}^2 + \|\varphi^{\mathrm{H}}_t\|_{L^\infty(\mathbb{R}^3)}^2\right)\,\left( \|\varphi^{\mathrm{GP}}_t - \varphi^{\mathrm{H}}_t\|_{L^2(\mathbb{R}^3)}^2  + \frac{1+\|\nabla\varphi^{\mathrm{H}}_t\|_{L^2(\mathbb{R}^3)}^2 }{ N^\beta}\right)
\end{equation}

We can bound for any time $t$ the $L^\infty$ norms of $\varphi^{\mathrm{GP}}_t$, $\varphi^{\mathrm{H}}_t$, using the bound for the $H^2$ norms proved in \cref{pro: H2 norm}, i.e., 
\begin{equation}
	\|\varphi^{\mathrm{GP}}_t\|_{L^\infty(\mathbb{R}^3)} \leq \|\varphi_0\|_{H^2(\mathbb{R}^3)} e^{Cg_N^2[\mathcal{E}^{\mathrm{free}}_{\mathrm{GP}}(\varphi_0)]^2 \left|t\right|},
\end{equation}
\begin{equation}
	\|\varphi^{\mathrm{H}}_t\|_{L^\infty(\mathbb{R}^3)} 
	\leq C\left\|
		\varphi_0
	\right\|_{H^2\left(\mathbb R^3\right)}
	e^{Cg_N^2\left([\mathcal{E}^{\mathrm{free}}_{\mathrm{GP}}(\varphi_0)]^2 + g_N^2N^{-2\beta}\|\varphi_0\|_{L^\infty(\mathbb{R}^3)}^4\right)\left|t\right|}.
\end{equation}
Moreover,by \cref{pro: H2 norm}, we also can bound uniformly in time
\begin{equation}
	\|\nabla\varphi^{\mathrm{H}}_t\|_{L^2(\mathbb{R}^3)} \leq  C\mathcal{E}^{\mathrm{free}}_{\mathrm{GP}}(\varphi_0) + Cg_NN^{-\beta}\|\varphi_0\|_{L^\infty}^2.
\end{equation}
Thus, by Grönwall, we find
\begin{equation}
 \|\varphi^{\mathrm{GP}}_t - \varphi^{\mathrm{H}}_t\|_{L^2(\mathbb{R}^3)}
 \leq C\sqrt{g_N}\frac{1+ \mathcal{E}^{\mathrm{free}}_{\mathrm{GP}}(\varphi_0) + g_NN^{-\beta}\|\varphi_0\|_{L^\infty}^2 }{ N^\frac\beta2}e^{C_v C_N(\varphi_0,t)^2g_N|t|},
\end{equation}
where $C_v$ is a constant which depends on $\|xv\|_{L^1(\mathbb{R}^3)}$ and $C(\varphi_0)$ is defined as in \eqref{eq: def C(phi0)}, i.e., 
\begin{equation}
	C_N(\varphi_0,t)
	=\left\|
		\varphi_0
	\right\|_{H^2\left(\mathbb R^3\right)}
	e^{Cg_N^2\left([\mathcal{E}^{\mathrm{free}}_{\mathrm{GP}}(\varphi_0)]^2+ g_N^2N^{-2\beta}\|\varphi_0\|_{L^\infty(\mathbb{R}^3)}^4\right)\left|t\right|}.
\end{equation}
\end{proof}

\subsection{Proof of Theorem \ref{thm:GP_cond_main}}

In this last section we combine the results of Proposition \ref{pro: Gronwall alpha} and Proposition \ref{pro: from H to GP} to prove \eqref{eq: dynamics}. We start by writing
\begin{equation}
	\left\|
		\gamma_{\psi_t}^{(1)}
		-P_\pgpt{t}
	\right\|
	\le\left\|
		\gamma_{\psi_t}^{(1)}
		-P_\pht{t}
	\right\|
	+\left\|
		P_\pht{t}
		-P_\pgpt{t}
	\right\|\le\left\|
		\gamma_{\psi_t}^{(1)}
		-P_\pht{t}
	\right\|
	+2\left\|
		\pht{t}
		-\pgpt{t}
	\right\|_{L^2\left(\mathbb{R}^3\right)}.
\end{equation}

By \cref{pro: from MB to H}, we get
\begin{align}
	\left\|
		\gamma_{\psi_t}^{(1)}
		-P_\pht{t}
	\right\|
	&\le\sqrt2\left(
		N^\frac{1-\lambda}2
		\left\|
			\gamma_{\psi_0}^{(1)}
			-P_{\varphi_0}
		\right\|^\frac12
		+\frac1{N^\frac{\lambda-3\beta}2}
	\right)
	e^{C_{v}(C_N(\varphi_0,t) + C_N(\varphi_0,t)^2)g_N\left|t\right|}.
\end{align}

On the other hand, from \cref{pro: from H to GP} we have
\begin{equation}
	2\left\|
		\pht{t}
		-\pgpt{t}
	\right\|_{L^2\left(\mathbb R^3\right)}
	\le C\sqrt{g_N}\frac{1+\mathcal{E}^{\mathrm{free}}_{\mathrm{GP}}(\varphi_0) + g_NN^{-\beta}\|\varphi_0\|_{L^\infty}^2 }{ N^\frac\beta2}e^{C_v C_N(\varphi_0,t)^2g_N|t|}.
\end{equation}

Combining the two previous results we then get
\begin{align}
	\left\|
		\gamma_{\psi_t}^{(1)}
		-P_\pgpt{t}
	\right\|
	&\le\sqrt2\left(
		N^\frac{1-\lambda}2
		\left\|
			\gamma_{\psi_0}^{(1)}
			-P_{\varphi_0}
		\right\|^\frac12
		+\frac1{N^\frac{\lambda-3\beta}2}
	\right)
	e^{C_{v}(C_N(\varphi_0,t) + C_N(\varphi_0,t)^2)g_N\left|t\right|}
\\
	&\qquad
	+C\sqrt{g_N}\frac{1+ \mathcal{E}^{\mathrm{free}}_{\mathrm{GP}}(\varphi_0) + g_NN^{-\beta}\|\varphi_0\|_{L^\infty}^2 }{ N^\frac\beta2}e^{C_v C_N(\varphi_0,t)^2g_N|t|},\nonumber
\end{align}
where $C_v$ now is a constant which depends on $\|v\|_{L^1(\mathbb{R}^3)}$, $\|v\|_{L^2(\mathbb{R}^3)}$ and $\| xv\|_{L^1(\mathbb{R}^3)}$.

\appendix

\section{Properties of the projectors $p^{\varphi}_j$, $q^{\varphi}_j$}\label{app: projector}
\begin{pro}
	\label{pro:properties_of_p}
	
		Let $ p_j^\varphi $ and $ q_j^\varphi $ be defined as in \eqref{eq: def pj qj}. 
		\begin{enumerate}

		\item[(i)]\label{itm:n_squared}
			Let $ \nu:\left\{0, 1,\ldots, N\right\}\to \mathbb R $ be given by $ \nu\left(k\right) :=\sqrt{\frac kN} $. Then the square of $ \widehat\nu^\varphi $ is the fraction of particles not in the state $ \varphi $, i.e.,
			\begin{equation}
				\left(
					\widehat{\nu}^\varphi
				\right)^2
				=\frac{1 }{ N}
				\sum_{j=1}^N
				q_j^\varphi.
			\end{equation}
	
		\item[(ii)]
			For any $ f:\left\{0, 1,\ldots, N\right\}\to \mathbb R $ and any symmetric state $ \psi \in L_\mathrm s^2\left(\mathbb R^{3N}\right) $ 
			\begin{equation}
				\label{itm:combinatorics}
				\left\|
					\widehat{f}^\varphi
					q^\varphi_1
					\psi
				\right\|_{L_\mathrm s^2\left(\mathbb R^3\right)}
				=
				\left\|
					\widehat{f}^\varphi \widehat{\nu}^\varphi \psi
				\right\|_{L_\mathrm s^2\left(\mathbb R^3\right)}, \quad \left\|
					\widehat{f}^\varphi
					q^\varphi_1
					q^\varphi_2
					\psi
				\right\|_{L_\mathrm s^2\left(\mathbb R^3\right)}
				\le
				\sqrt{\frac{N }{ N-1}}
				\left\|
					\widehat{f}^\varphi\left(\widehat{\nu}^\varphi\right)^2 \psi
				\right\|_{L_\mathrm s^2\left(\mathbb R^3\right)}.
			\end{equation}
	
		\item[(iii)]
			For any $ f:\left\{0, 1,\ldots, N\right\}\to \mathbb R $, $ v:\mathbb R^3\times\mathbb R^3\to \mathbb{R} $ and $ j, k =0, 1, 2 $, we have 
			\begin{equation}\label{itm:commuting_hats}
				\widehat{f}^\varphi
				Q^\varphi_j
				v\left(x_1,x_2\right)
				Q^\varphi_k=
				Q^\varphi_j
				v\left(x_1,x_2\right)
				Q^\varphi_k
				\widehat{f}^\varphi_{j-k},
			\end{equation}
			where $ Q^\varphi_0:=p^\varphi_1 p^\varphi_2 $, $ Q^\varphi_1:=p^\varphi_1 q^\varphi_2 $ and $ Q^\varphi_2:=q^\varphi_1 q^\varphi_2 $.
	
		\end{enumerate}
	
	\end{pro}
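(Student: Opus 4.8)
All three statements rest on one structural fact, which I would record first: from the definition, $\sum_{k=0}^N P_k^\varphi=\prod_{l=1}^N(p_l^\varphi+q_l^\varphi)=\mathbbm 1$ and $P_k^\varphi P_{k'}^\varphi=\delta_{kk'}P_k^\varphi$ (for $k\neq k'$ the product of the two sums contains, in each term, a factor $p_l^\varphi q_l^\varphi=0$), so $\widehat f^\varphi=\sum_k f(k)P_k^\varphi$ is exactly the functional calculus of $f$ applied to the ``number of excited particles'' operator $\mathcal N_+^\varphi:=\sum_{j=1}^N q_j^\varphi$. For (i): each product appearing in the definition of $P_{N,k}^\varphi$ carries exactly $k$ factors $q_l^\varphi$; since $q_l^\varphi q_l^\varphi=q_l^\varphi$ and $q_l^\varphi p_l^\varphi=0$, applying $\sum_{j=1}^N q_j^\varphi$ to such a product returns it multiplied by $k$, hence $\sum_{j=1}^N q_j^\varphi=\sum_k k\,P_k^\varphi$; dividing by $N$ and comparing with $(\widehat\nu^\varphi)^2=\sum_k(k/N)P_k^\varphi$ gives (i).

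For (ii), the preliminary observation is that $q_1^\varphi$ and $q_1^\varphi q_2^\varphi$ each commute with every $P_k^\varphi$ — hence with $\widehat f^\varphi$ — which is read off the decompositions $P_k^\varphi=p_1^\varphi R_{k}+q_1^\varphi R_{k-1}$ and $P_k^\varphi=p_1^\varphi p_2^\varphi S_k+(p_1^\varphi q_2^\varphi+q_1^\varphi p_2^\varphi)S_{k-1}+q_1^\varphi q_2^\varphi S_{k-2}$, where $R_m$, $S_m$ act only on particles $\{2,\dots,N\}$, resp.\ $\{3,\dots,N\}$, together with $p_i^\varphi q_i^\varphi=0$ and $(q_i^\varphi)^2=q_i^\varphi$. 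Using this commutation, self-adjointness of $\widehat f^\varphi$, and the symmetry of $\psi$,
\[
	\|\widehat f^\varphi q_1^\varphi\psi\|^2
	=\langle\psi,q_1^\varphi(\widehat f^\varphi)^2\psi\rangle
	=\tfrac1N\Big\langle\psi,\Big(\sum\nolimits_{j}q_j^\varphi\Big)(\widehat f^\varphi)^2\psi\Big\rangle
	=\|\widehat f^\varphi\widehat\nu^\varphi\psi\|^2
\]
by (i), while
\[
	\|\widehat f^\varphi q_1^\varphi q_2^\varphi\psi\|^2
	=\tfrac1{N(N-1)}\Big\langle\psi,\Big[\Big(\sum\nolimits_j q_j^\varphi\Big)^2-\sum\nolimits_j q_j^\varphi\Big](\widehat f^\varphi)^2\psi\Big\rangle
	\le\tfrac{N}{N-1}\,\|\widehat f^\varphi(\widehat\nu^\varphi)^2\psi\|^2,
\]
where the inequality drops the nonnegative term $\langle\psi,(\sum_j q_j^\varphi)(\widehat f^\varphi)^2\psi\rangle$ and uses $(\sum_j q_j^\varphi)^2=N^2(\widehat\nu^\varphi)^4$ from (i).

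For (iii), I would split $\mathcal N_+^\varphi=(q_1^\varphi+q_2^\varphi)+m^\varphi$ with $m^\varphi:=\sum_{l=3}^N q_l^\varphi$, and note that each $Q_j^\varphi$ ($j=0,1,2$) commutes with $m^\varphi$ and with $q_1^\varphi+q_2^\varphi$, hence with $\mathcal N_+^\varphi$, while $v(x_1,x_2)$ also commutes with $m^\varphi$. On $\mathrm{Ran}\,Q_j^\varphi$ one has $\mathcal N_+^\varphi=j+m^\varphi$, so $\widehat f^\varphi Q_j^\varphi=Q_j^\varphi\,\widehat g^{(j)}$, where $\widehat g^{(j)}$ is the functional calculus of $m\mapsto f(m+j)$ applied to $m^\varphi$; since $\widehat g^{(j)}$ commutes with $v(x_1,x_2)$, and on $\mathrm{Ran}\,Q_k^\varphi$ (where $m^\varphi=\mathcal N_+^\varphi-k$) it coincides with $\widehat f_{j-k}^\varphi$, I obtain
\[
	\widehat f^\varphi Q_j^\varphi v(x_1,x_2)Q_k^\varphi
	=Q_j^\varphi v(x_1,x_2)\,\widehat g^{(j)}Q_k^\varphi
	=Q_j^\varphi v(x_1,x_2)Q_k^\varphi\,\widehat f_{j-k}^\varphi,
\]
the last step using that $\widehat f_{j-k}^\varphi$ commutes with $Q_k^\varphi$. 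Equivalently, one can substitute the explicit two-particle decomposition of $P_k^\varphi$ into $\widehat f^\varphi=\sum_k f(k)P_k^\varphi$ and expand both sides of \eqref{itm:commuting_hats} term by term, using that $v(x_1,x_2)$ and $p_i^\varphi,q_i^\varphi$ ($i=1,2$) commute with $S_m$.

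There is no real obstacle here: the content is entirely $p/q$ bookkeeping built on the resolution of identity $\sum_k P_k^\varphi=\mathbbm 1$. The single point that needs attention is the index arithmetic in (iii) — fixing the shift $d=j-k$ with the correct sign, and correctly handling the boundary terms ($R_{k-1}$, $S_{k\pm1}$, $S_{k\pm2}$, and $\widehat f_{j-k}^\varphi$ near the edges of $\{0,\dots,N\}$), where the convention $f(k)=0$ for $k\notin\{0,\dots,N\}$ must be invoked; this is also why the direct term-by-term expansion is marginally more tedious than the functional-calculus route sketched above.
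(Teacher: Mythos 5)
Your proposal is correct and follows essentially the same route as the paper: (i) and (ii) are the identical computations (counting the $q$-factors in $P_k^\varphi$, then using symmetry of $\psi$ and the commutation of $q_1^\varphi$, $q_1^\varphi q_2^\varphi$ with $\widehat f^\varphi$), and your functional-calculus phrasing of (iii) via $m^\varphi=\sum_{l\ge 3}q_l^\varphi$ is just a repackaging of the paper's identity $P_l^\varphi Q_j^\varphi=Q_j^\varphi P_{N-2,l-j}^\varphi$ followed by re-summation. No gaps.
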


	\begin{proof}
	
	To prove \ref{itm:n_squared} note that $ \bigcup_{k=0}^N \mathcal{A}_k =\left\{0, 1\right\}^N $, so that $ 1 =\sum_{k=0}^N P_k^\varphi $. Using also $ \left(q_j^\varphi\right)^2=q_j^\varphi $ and $ q_j^\varphi p_j^\varphi =0 $, we get
	\begin{equation}
		q_j^\varphi
		\prod_{l=1}^N
		\left[
			\left(
				p_l^\varphi
			\right)^{1-a_l}
			\left(
				q_l^\varphi\
			\right)^{a_l}
		\right]
		=\left\{
			\begin{array}{ll}
				\prod_{l=1}^N
				\left[
					\left(
						p_l^\varphi
					\right)^{1-a_l}
					\left(
						q_l^\varphi\
					\right)^{a_l}
				\right] & \mbox{if }a_j=1, \\
				0. & \mbox{else}
			\end{array}
		\right\}
		=a_j
		\prod_{l=1}^N
		\left[
			\left(
				p_l^\varphi
			\right)^{1-a_l}
			\left(
				q_l^\varphi\
			\right)^{a_l}
		\right].
	\end{equation}
	
	Given that for any $ \mathbf a\in\mathcal A_k $ the sum of the $ a_j $'s is $ k $ we deduce that
	\begin{equation}
		\frac1N\sum_{j=1}^N q_j^\varphi
		=\frac1N\sum_{j=1}^N q_j^\varphi\sum_{k=0}^N P_k^\varphi
		=\frac1N\sum_{k=0}^N \sum_{j=1}^N q_j^\varphi P_k^\varphi
		=\frac1N\sum_{k=0}^N k P_k^\varphi
		=\sum_{k=0}^N\left(\nu\left(k\right)\right)^2P_k^\varphi
	\end{equation}
	and \ref{itm:n_squared} follows.

	We now prove \eqref{itm:combinatorics}. We can use the symmetry of $ \psi $ and the previous point to get
	\begin{equation}
		\left\|
			\widehat{f}^\varphi
			\widehat{\nu}^\varphi
			\psi
		\right\|_{L_\mathrm s^2\left(\mathbb R^3\right)}^2
		=\left\langle
			\psi, 
			\left(\widehat{f}^\varphi\right)^{2}
			\left(\widehat{\nu}^\varphi\right)^2
			\psi
		\right\rangle
		=\left\langle
			\psi, 
			\left(\widehat{f}^\varphi\right)^{2}
			q_1^\varphi
			\psi
		\right\rangle
		=\left\|
			\widehat{f}^\varphi
			q^\varphi_1
			\psi
		\right\|_{L_\mathrm s^2\left(\mathbb R^3\right)}^2.
	\end{equation}
	
	Similarly, for the second inequality in \eqref{itm:combinatorics}, we have 
	\begin{equation}
		\left\|
			\widehat{f}^\varphi
			\left(\widehat{\nu}^\varphi\right)^2
			\psi
		\right\|_{L_\mathrm s^2\left(\mathbb R^3\right)}^2
		=\frac1{N^2}\sum_{j, k=1}^N
		\left\langle
			\psi, 
			\left(\widehat{f}^\varphi\right)^2
			q_j^\varphi
			q_k^\varphi
			\psi
		\right\rangle
	=\frac{N-1 }{ N}
		\left\|
			\widehat{f}^\varphi
			q^\varphi_1
			q^\varphi_2
			\psi
		\right\|_{L_\mathrm s^2\left(\mathbb R^3\right)}^2
		+\frac1N\left\|
			\widehat{f}^\varphi
			\widehat{\nu}^\varphi
			\psi
		\right\|_{L_\mathrm s^2\left(\mathbb R^3\right)}^2.
	\end{equation}
	
	To prove \eqref{itm:commuting_hats} we use the definition of the $ Q_j $'s and the previous points to get
	\begin{eqnarray}
		\widehat{f}^\varphi
		Q^\varphi_j
		v\left(x_1,x_2\right)
		Q^\varphi_k
		&=&\sum_{l\in\mathbb Z}
		f\left(l\right) P_l^\varphi
		Q^\varphi_j
		v\left(x_1,x_2\right)
		Q^\varphi_k
	=\sum_{l\in\mathbb Z}
		f\left(l\right)
		Q^\varphi_j
		v\left(x_1,x_2\right)
		Q^\varphi_k
		P_{N-2, l-j}^\varphi
		\\
	&=&\sum_{l\in\mathbb Z}
		f\left(l\right)
		Q^\varphi_j
		v\left(x_1,x_2\right)
		Q^\varphi_k
		P_{l-j+k}^\varphi=
Q^\varphi_j
		v\left(x_1,x_2\right)
		Q^\varphi_k
		\widehat{f}^\varphi_{j-k}.\nonumber
	\end{eqnarray}
\end{proof}

\begin{pro}\label{pro:op_norm_estim}	Let $ p_j^\varphi $ and $ q_j^\varphi $ be defined as in \eqref{eq: def pj qj}. Let $ a, a^\prime\in\left[1,+\infty\right] $ such that $1/a + 1/a^\prime = 1$ and let $ u \in L^{2a'}\left(\mathbb R^3\right) $, $ \varphi\in L^{2a}\left(\mathbb R^3\right) $. It holds true that 
		\begin{equation}\label{itm:onep}
			\left\|u\left(x_1-x_2\right) p_1^\varphi\right\|
			\le\left\|\varphi\right\|_{L^{2a}\left(\mathbb R^3\right)}
			\left\|u\right\|_{L^{2a'}\left(\mathbb R^3\right)},
		\end{equation}
		\begin{equation}\label{itm:twops}
			\left\|
				p_1^\varphi
				u\left(x_1-x_2\right)
				p_1^\varphi
			\right\|
			\le\left\|
				\varphi
			\right\|_{L^{2a}\left(\mathbb R^3\right)}^2
			\left\|
				u
			\right\|_{L^{a'}\left(\mathbb R^3\right)}.
		\end{equation}
Moreover, if $ a\in\left[1,2\right] $, one has
		\begin{equation}\label{itm:fourps}
			\left\|u\left(x_1-x_2\right)p_1^\varphi p_2^\varphi\right\|
			\le\left\|\varphi\right\|_{L^{2a}\left(\mathbb R^3\right)}^2
			\left\|u\right\|_{L^{a'}\left(\mathbb R^3\right)}.
		\end{equation}
	\end{pro}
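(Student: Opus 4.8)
The plan is to deduce all three estimates from a single elementary identity, namely that for any normalized $\varphi\in L^2(\mathbb R^3)$ and any measurable multiplier $w$,
\[
	p_1^\varphi\, w(x_1-x_2)\, p_1^\varphi = W_w(x_2)\, p_1^\varphi,
	\qquad
	W_w(y) := \int_{\mathbb R^3} |\varphi(z)|^2\, w(z-y)\, dz ,
\]
together with the analogous identity in the second variable and the special case $p_2^\varphi\, g(x_2)\, p_2^\varphi = \langle|\varphi|^2,g\rangle\, p_2^\varphi$. This identity is exactly the computation behind \eqref{eq:p_convolves}, carried out with a general $\varphi$ and a general multiplier, so nothing new is needed to establish it. I will also use repeatedly that, since $\|\varphi\|_{L^2}=1$, the operators $p_j^\varphi$ and $p_1^\varphi p_2^\varphi$ are orthogonal projections, hence of operator norm at most $1$. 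After this, the only remaining ingredients are H\"older's and Young's inequalities, and the whole proof is a matter of tracking exponents.

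For \eqref{itm:onep} I would write $\|u(x_1-x_2)p_1^\varphi\|^2 = \|p_1^\varphi\,|u(x_1-x_2)|^2\,p_1^\varphi\| = \|W_{|u|^2}(x_2)\,p_1^\varphi\| \le \|W_{|u|^2}\|_{L^\infty}$, and then bound $W_{|u|^2}$ uniformly in $y$ by H\"older with the conjugate pair $a,a'$ together with translation invariance of $L^{a'}$:
\[
	W_{|u|^2}(y) = \int_{\mathbb R^3}|\varphi(z)|^2\,|u(z-y)|^2\,dz
	\le \big\||\varphi|^2\big\|_{L^a}\,\big\||u|^2\big\|_{L^{a'}}
	= \|\varphi\|_{L^{2a}}^2\,\|u\|_{L^{2a'}}^2 ,
\]
so that \eqref{itm:onep} follows on taking square roots. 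For \eqref{itm:twops} the argument is identical, with $|u|^2$ replaced by $u$: $\|p_1^\varphi u(x_1-x_2)p_1^\varphi\| = \|W_u(x_2)p_1^\varphi\| \le \|W_u\|_{L^\infty} \le \big\||\varphi|^2\big\|_{L^a}\|u\|_{L^{a'}} = \|\varphi\|_{L^{2a}}^2\|u\|_{L^{a'}}$, using H\"older between $|\varphi|^2\in L^a$ and $u\in L^{a'}$.

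For \eqref{itm:fourps} I would first square and absorb the adjoint, $\|u(x_1-x_2)p_1^\varphi p_2^\varphi\|^2 = \|p_2^\varphi p_1^\varphi\,|u(x_1-x_2)|^2\,p_1^\varphi p_2^\varphi\|$, apply the identity in the first variable to replace $p_1^\varphi|u(x_1-x_2)|^2 p_1^\varphi$ by the multiplication operator $W_{|u|^2}(x_2)$ — which commutes with $p_1^\varphi$ — and then apply $p_2^\varphi g(x_2) p_2^\varphi = \langle|\varphi|^2,g\rangle p_2^\varphi$ to reduce the whole expression to $\langle|\varphi|^2, W_{|u|^2}\rangle\, p_1^\varphi p_2^\varphi$. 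Since $W_{|u|^2}\ge 0$ and $\|p_1^\varphi p_2^\varphi\|\le1$, this yields $\|u(x_1-x_2)p_1^\varphi p_2^\varphi\|^2 \le \langle|\varphi|^2, W_{|u|^2}\rangle = \big\langle |\varphi|^2,\ |\varphi|^2 * \widetilde{|u|^2}\big\rangle$ with $\widetilde g(y):=g(-y)$, and by H\"older followed by Young's inequality,
\begin{align*}
	\big\langle |\varphi|^2,\ |\varphi|^2 * \widetilde{|u|^2}\big\rangle
	&\le \big\||\varphi|^2\big\|_{L^a}\,\big\||\varphi|^2 * \widetilde{|u|^2}\big\|_{L^{a'}} \\
	&\le \big\||\varphi|^2\big\|_{L^a}^2\,\big\|\widetilde{|u|^2}\big\|_{L^{a'/2}}
	= \|\varphi\|_{L^{2a}}^4\,\|u\|_{L^{a'}}^2 ,
\end{align*}
so \eqref{itm:fourps} follows on taking square roots. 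The one step that is not pure bookkeeping is the exponent count in this last display: the use of Young's inequality $\|f * h\|_{L^{a'}}\le\|f\|_{L^a}\|h\|_{L^{a'/2}}$ requires $a'/2\ge 1$, i.e. $a'\ge 2$, i.e. $a\le 2$, which is precisely the standing hypothesis $a\in[1,2]$ for \eqref{itm:fourps} and is genuinely used only there; the reflection $\widetilde{\ \cdot\ }$ is harmless since it preserves every $L^p$ norm, so no further care is needed.
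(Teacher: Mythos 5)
Your proposal is correct and follows essentially the same route as the paper: all three bounds reduce to the convolution identity behind \eqref{eq:p_convolves} (equivalently, the $C^*$-identity $\|T\|^2=\|T^*T\|$ turning each norm into a supremum or an integral of $|\varphi|^2\otimes|\varphi|^2$ against $|u|^2$), followed by H\"older for \eqref{itm:onep}--\eqref{itm:twops} and H\"older plus Young for \eqref{itm:fourps}, with the constraint $a\in[1,2]$ entering exactly where you say it does. The only cosmetic difference is in \eqref{itm:twops}, where the paper reduces to \eqref{itm:onep} applied to $\sqrt{|u|}$ while you apply H\"older directly to $W_u$; the two are interchangeable.
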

		
	\begin{proof}
		The estimate \eqref{itm:onep} directly follows from H\"older inequality,  
		\begin{align}
			\left\|
				u\left(x_1-x_2\right)
				p_1^\varphi
			\right\|^2
			&=\sup_{\left\|\psi\right\|=1}
			\left\|
				u\left(x_1-x_2\right)
				p_1^\varphi
				\psi
			\right\|^2
	\leq\sup_{\left\|\psi\right\|=1}
			\left\|\psi\right\|
			\sup_{x_2 \in\mathbb R^3}
			\langle
				\varphi,
				u^2\left(\cdot-x_2\right)
				\varphi
			\rangle
			\le
			\left\|\varphi\right\|_{L^{2a}\left(\mathbb R^3\right)}^2
			\left\|u\right\|_{L^{2a'}\left(\mathbb R^3\right)}^2.
		\end{align}
To prove \eqref{itm:twops}, we use \eqref{itm:onep}. Indeed,
		\begin{equation}
			\left\|
				p_1^\varphi
				u\left(x_1-x_2\right)
				p_1^\varphi
			\right\|
			\le\left\|
			\sqrt{\left|u\left(x_1-x_2\right)\right|}
				p_1^\varphi
			\right\|^2
			\le\left\|\varphi\right\|_{L^{2a}\left(\mathbb R^3\right)}^2
			\left\|\sqrt u\right\|_{L^{2a'}\left(\mathbb R^3\right)}^2
\le\left\|\varphi\right\|_{L^{2a}\left(\mathbb R^3\right)}^2
			\left\|u\right\|_{L^{a'}\left(\mathbb R^3\right)}.
		\end{equation}
		Finally, the estimate \eqref{itm:fourps} is a consequence of the Young inequality for the convolution (see \cite[Theorem 4.2]{LL}). More precisely, we have 
		\begin{align}
			\left\|
				u\left(x_1-x_2\right)
				p_1p_2
			\right\|^2
			&=\sup_{\left\|\psi\right\|=1}
			\left\langle
				\psi,
				\left(|\varphi\rangle\langle\varphi|\right)_1
				\left(|\varphi\rangle\langle\varphi|\right)_2
				u^2\left(x_1-x_2\right)
				\left(|\varphi\rangle\langle\varphi|\right)_1
				\left(|\varphi\rangle\langle\varphi|\right)_2
				\psi
			\right\rangle\nonumber
		\\
			&\le
			\int_{\mathbb R^6}dx_1dx_2\
			u^2\left(x_1-x_2\right)
			\left|\varphi\left(x_1\right)\right|^2
			\left|\varphi\left(x_2\right)\right|^2
			=\left\langle
				\left|\varphi\right|^2,
				u^2*\left|\varphi\right|^2
			\right\rangle\nonumber
		\\
			&\le\left\|
				\varphi
			\right\|_{L^{2p}\left(\mathbb R^3\right)}^2
			\left\|
				\varphi
			\right\|_{L^{2q}\left(\mathbb R^3\right)}^2
			\left\|
				u
			\right\|_{L^{2r}\left(\mathbb R^3\right)}^2.
		\end{align}
	
	\end{proof}

\end{document}